\documentclass[prodmode,acmtoms]{acmsmall} 

\usepackage{amsmath, amssymb}

\usepackage[ruled]{algorithm2e}

\SetAlFnt{\small}
\SetAlCapFnt{\small}
\SetAlCapNameFnt{\small}
\SetAlCapHSkip{0pt}
\IncMargin{-\parindent}



\doi{0000001.0000001}

\issn{1234-56789}

\usepackage{algorithmicx,algpseudocode}

\begin{document}

\markboth{B. Reps and T. Weinzierl}
{Complex additive geometric multilevel solvers for Helmholtz equations on spacetrees}

\title{Complex additive geometric multilevel solvers for Helmholtz equations on spacetrees}

\providecommand{\i}{\imath}
\providecommand{\todo}[2]{\marginpar{\footnotesize #1: #2 }}
\newcommand{\degrees}{\ensuremath{^{\circ}}} 
\newcommand{\br}[1]{\begin{color}{red}[#1]\end{color}}


%
%
\newcommand{\EventBeginIteration}{{\tt \footnotesize beginIteration}}
\newcommand{\EventEndIteration}{{\tt \footnotesize  endIteration}}

\newcommand{\EventEnterCell}{{\tt \footnotesize  enterCell}}
\newcommand{\EventLeaveCell}{{\tt \footnotesize leaveCell}}

\newcommand{\EventTouchVertexFirstTime}{{\tt \footnotesize touchVertexFirstTime}}
\newcommand{\EventTouchVertexLastTime}{{\tt \footnotesize touchVertexLastTime}}

\newcommand{\EventMergeWithMaster}{{\tt mergeWithMaster}}
\newcommand{\EventMergeWithNeighbour}{{\tt mergeWithNeighbour}}
\newcommand{\EventMergeWithWorker}{{\tt mergeWithWorker}}
\newcommand{\EventPrepareSendToMaster}{{\tt prepareSendToMaster}}
\newcommand{\EventPrepareSendToWorker}{{\tt prepareSendToWorker}}
\newcommand{\EventCreateCell}{{\tt createCell}}
\newcommand{\EventDestroyCell}{{\tt destroyCell}}
\newcommand{\EventLoadVertex}{{\tt loadVertex}}
\newcommand{\EventStoreVertex}{{\tt storeVertex}}
\newcommand{\EventCreateHangingVertex}{{\tt createHangingVertex}}
\newcommand{\EventDestroyHangingVertex}{{\tt destroyHangingVertex}}

\renewcommand{\vec}{\mathbf}

 \author{
  BRAM REPS
  \affil{University of Antwerp}
  TOBIAS WEINZIERL
  \affil{Durham University}
 }

 \begin{abstract}
We introduce a family of implementations of low order, additive, geometric
multilevel solvers for systems of Helmholtz equations arising from
Schr\"odinger equations.
Both grid spacing and arithmetics may comprise complex numbers and we thus can
apply complex scaling to the indefinite Helmholtz operator. 
Our implementations are based upon the notion of a spacetree and work
exclusively with a finite number of precomputed local element matrices.
They are globally matrix-free.

Combining various relaxation factors with two grid transfer
operators allows us to switch from additive multigrid over a hierarchical
basis method into a Bramble-Pasciak-Xu (BPX)-type solver, with several multiscale smoothing variants within one code
base.
Pipelining allows us to realise full approximation storage (FAS) within
the additive environment where, amortised, each grid vertex carrying degrees of
freedom is read/written only once per iteration.
The codes realise a single-touch policy.
Among the features facilitated by matrix-free FAS is arbitrary dynamic mesh
refinement (AMR) for all solver variants.
AMR as enabler for full multigrid (FMG) cycling---the grid unfolds throughout
the computation---allows us to reduce the cost per unknown per order of accuracy.

The present paper primary contributes towards software realisation and design questions.
Our experiments show that the consolidation of single-touch FAS, dynamic AMR and vectorisation-friendly, complex scaled, matrix-free FMG cycles delivers
a mature implementation blueprint for solvers of Helmholtz equations in general.
For this blueprint, we put particular emphasis on
a strict implementation formalism as well as some implementation correctness
proofs. 
\end{abstract}

 %
%

\ccsdesc{Mathematics of computing~Mathematical software}
\ccsdesc{Computing methodologies~Parallel computing methodologies}


%
%

\keywords{Helmholtz, additive multigrid, BPX, AMR, vectorisation}

 \acmformat{
  Bram Reps, Tobias Weinzierl, 2015. Complex additive geometric
  multilevel solvers for Helmholtz equations on spacetrees.
 }

 \begin{bottomstuff}
%
  Author's addresses: 
  B. Reps, Department of Mathematics and Computer Science, University of
  Antwerp, 2020 Antwerp, Belgium;
  T. Weinzierl, School of Engineering and Computing Sciences, Durham University,
  DH1 3LE Durham, United Kingdom
 \end{bottomstuff}

 \maketitle

  \section{Introduction}
\label{section:introduction}

%
%
The present paper's efforts are driven by an interest in 
the dynamics of $p$ quantum particles, originally described by the time-dependent Schr\"odinger wave equation
\cite{Schroedinger:26:QMTheory}.
The wave equation's operator is the sum of the kinetic energies of the
individual particles and their potential energy
determined by their nature and interaction.
The \emph{Hamiltonian} operator drives the complex-valued time derivative.
Numerical methods solving Schr\"odinger equations are of great
interest for the simulation and prediction of the reaction rates of fundamental
processes in few-body physics and chemistry
\cite{Faddeev:93:QMScattering,NRC:07:PlasmaScience,Allen:91:GasDischarge,Smirnov:15:GasDischarge,Vanroose:05:Science}.
The breakup of the $H_2$-molecule with $p=2$ particles for example
is solved in \cite{Vanroose:05:Science}.
A key ingredient there is the expansion of the time-dependent Schr\"odinger
equation into a time-independent Schr\"odinger equation being a $d=3p$-dimensional Helmholtz
problem.
This ansatz allows a post-processing step to determine the \emph{far
field map} (FFM) which yields the probability distribution of particles
escaping.
It is written down as function of the angle from the point of interaction and is
calculated by integrating over stationary Helmholtz solutions (Figure \ref{fig::twofixedparticles}).
Since the inversion of the arising matrices becomes unfeasible for growing $p$,
we require robust and fast iterative solvers.

%
%
To achieve this, the present paper follows the aforementioned work
\cite{Vanroose:05:Science}.
It uses a partial wave expansion to 
decompose the time-independent $d=3p$-dimensional system further into a cascade
of $p$-dimensional Helmholtz problems. In essence, this tackles the problem with a
transformed basis, measuring distances between free and stationary particles.
They are referred to as \emph{channels}. 
The base expansion is truncated, i.e.~we focus only on the dominant channels, and we end up with an iterative scheme where a set of uncoupled Helmholtz problems has to be solved
within the iterative loop as preconditioner.
This yields perfectly parallel (embarrassingly parallel \cite{Foster:95:ParallelPrograms}) channel solves. 
Further, we rely on the fact that FFM integrals can be calculated on a complex contour
instead of on the original real domain \cite{Cools:14:FFM}. 
This implies that we may solve Helmholtz equations which are rotated in the
complex domain. 
These equations are better posed while still yielding high quality FFM
results.
A robust, fast solver thus has to perform particularly good on the lowest level of parallelisation, i.e.~exploit
vector units, where the perfect parallelism does not pay off
directly.

%
%
We propose to realise a complex-valued, matrix-free, additive multilevel
solver for the present challenge where combinations of well-suited relaxation
parameters and grid transfer operators allow us to apply an additive multigrid
solver, a hierarchical basis approach or a Bramble-Pasciak-Xu (BPX)-type
solver \cite{Bramble:90:BPX}. 
Within one code base, we may choose a stable solver depending on the equations' characteristics.
It combines the idea of additive multigrid
\cite{Bastian:98:AdditiveVsMultiplicativeMG} with full
approximation storage (FAS) based upon a hierarchical generating system \cite{Griebel:90:HTMM} which
resembles the MLAT idea \cite{Brandt:73:MLAT,Brandt:77:MLAT}, and it realises
all data structures within a $p$-dimensional spacetree traversed by a
depth-first search automaton \cite{Weinzierl:2009:Diss,Weinzierl:11:Peano}. 
Our Helmholtz solver supports dynamically adaptive meshes resolving localised
wave characteristics.
Furthermore, its in-situ meshing allows us to unfold 
the grid on-the-fly similar to FMG cycles in the multiplicative context.
In our preconditioning environment, multiple problem parameter 
choices finally are fused into one grid sweep as long as it increases the
arithmetic intensity.

%
%
%
%
%
%
%

%
%
The novel algorithmic contributions of the present paper span algorithms,
application-specific experience and a challenging application:
First, we integrate various sophisticated multigrid techniques
concisely into one code base such that we can offer 
the additive multilevel solvers with a one-touch policy.
Each vertex is, on average, loaded into the caches only once per iteration and
resides inside the caches only briefly.
All ingredients are vertically, i.e.~between the grid levels, and
horizontally, i.e.~in the grid, integrated.
Similar techniques have been proposed for multilevel solvers 
\cite{Adams:15:Segmental,Mehl:06:MG,Ghysels:12:MGArithmIntensity,Ghysels:15:GeometricMultigridPerformance}
or Krylov solvers \cite{Chronopoulos:89:sStep,Hoemmen:10:CAKrylov,Ghysels:13:HideLatency,Ghysels:14:HideLatency},
but, to the best of our knowledge, no other approach offers a solution representation on all levels plus single touch.
Multilevel solution representations simplify the handling of hanging nodes,
non-linear problems and scale-dependent discretisations \cite{Cools:14:LevelDependent}.
Second, we show that this realisation embeds into a
depth-first traversal of a tree spanning the cascade of grids embedded into
each other.
Such trees---octrees, quadtrees and variations of those---are popular in many
application areas.
They allow us to realise arbitrary dynamic refinement and coarsening in
combination with in-situ mesh generation that seamlessly integrate into the
stream-like data processing.
There is no significant setup cost for the meshing, while the memory footprint
is minimalist as we work matrix-freely
and encode the underlying grid structure with only few bits per cell
\cite{Weinzierl:11:Peano}.
Simple data flow analysis reveals that such a merger of FAS with the ideas
of additive multigrid and spacetrees is not straightforward if data is not processed multiple times.
Third, we show that the cheap dynamic adaptivity allows us to tailor the grid to the
solution characteristics. 
Furthermore, if the solver runs a cascade of additive
cycles over finer and finer grids, the resulting scheme mirrors FMG where coarse
solves act as initial guess for finer grids.
It is able to reduce, for benchmark problems, the residual by one order of magnitude every 1.5 traversals.
This is remarkable given that we use merely a Jacobi smoother and geometric
inter-grid transfer operators.
Fourth, we report on reasonable MFlop rates and vectorisation efficiency as we
fuse the solution of multiple preconditioning problems into one adaptive grid;
despite the fact that we employ a low order scheme which is notoriously
bandwidth bound and do not rely on sophisticated smoother optimisation techniques
\cite{Kowarschik:00:CacheAwareMG}.
This renders the algorithmic mindset well-suited for upcoming machine
generations that are expected to obtain a significant part of their capability
from vectorisation's extreme concurrency in combination with constrained memory
\cite{Dongarra:14:ApplMathExascaleComputing}.
Fifth, we demonstrate how complex scaling and various choices of
relaxation and very few operators allow a user to obtain a set of solvers that can be tailored
to many problems.
Notably, we can robustly solve four-dimensional Helmholtz problems which is
a significant improvement over previous work.
Finally, all algorithmic steps are presented in a compact form and all
ingredients come along with correctness proofs.
While linear algebra packages supporting complex arithmetics per se are rare,
a rigorous formal description enables reprogramming and reuse for different
applications.

%
%
We identify four major limitations of the present work.
First, we do not offer a strategy to tackle the curse of dimensionality
\cite{Bellman:61:CurseOfDimensionality} spelled through $p$ rising.
Though we show that the FMG-type cycles reduce the cost per unknown per accuracy
by magnitudes, such a reduction of cost, even in combination with the
vectorisation and memory access results, is far below what is required to
tackle large $p$.
In practice, we are still bound to $p \leq 4$ though the implementation would
support arbitrary big $p$.
Second, we do not pick up the discussion on well-suited smoothers for the
present problems.
We show that our solvers achieve robustness due to the complex rotations.
However, their efficiency deteriorates.
We emphasise that the efficient nature of the present implementation
patterns makes us hope that they
can be used as starting point to realise more competitive smoothers as proposed
in 
\cite{Chen:12:DispersionMinimizing,Ernst:11:HelmholtzIterativeMethods,Ghysels:12:MGArithmIntensity,Ghysels:15:GeometricMultigridPerformance,Stolk:15:DispersionMinimizing},
e.g.
Yet, this is future work.
Third, we do not realise problem-dependent grid transfer or coarse grid
operators.
Such operators are mandatory to tackle problems with spatially
varying PDE properties as they occur for matching boundary conditions, e.g.,
within the multigrid setting.
For the present case studies, our solvers' efficiency here suffers.
We refer to promising tests with BoxMG within the spacetree paradigm
\cite{Weinzierl:13:Hybrid,Yavneh:12:Nonsymmetric} for pathways towards future
work.
Finally, any multiplicative considerations are out of scope here.
All shortcomings highlight that the present paper primary contributes  
towards software realisation and design questions.
We also focus on single node performance as our algorithmic framework is
perfectly parallel.
This neither implies that the presented approach can not be tuned further with
respect to parallelism nor do we address
related challenges such as proper load balancing.
Notably, techniques such as segmental refinement \cite{Adams:15:Segmental}
that spatially decompose fine grid solves could help to increase the concurrency
while preserving the present work's vertical and horizontal integration as well
as the fusion of parameter spaces.

%
%

%
%
The remainder is organised as follows: 
We detail the physical context and solver framework in
Section \ref{section:model-problem} before we introduce our notion of spacetrees
yielding the multiscale grid
(Section \ref{section:spacetrees}).
Starting from a recapitulation of standard additive multigrid, we introduce our
particular single-touch multilevel ingredients in Section
\ref{section:multigrid}.
They are fused into one additive scheme with different flavours in Section
\ref{section:faddmg}.
We next give some correctness proofs for the realisation (Section
\ref{section:properties}).
Some numerical results in Section \ref{section:results} precede a brief
conclusion and an outlook.

  \section{Application context}
\label{section:model-problem}

\begin{figure}
  \begin{center} 
    \includegraphics[width=0.48\textwidth]{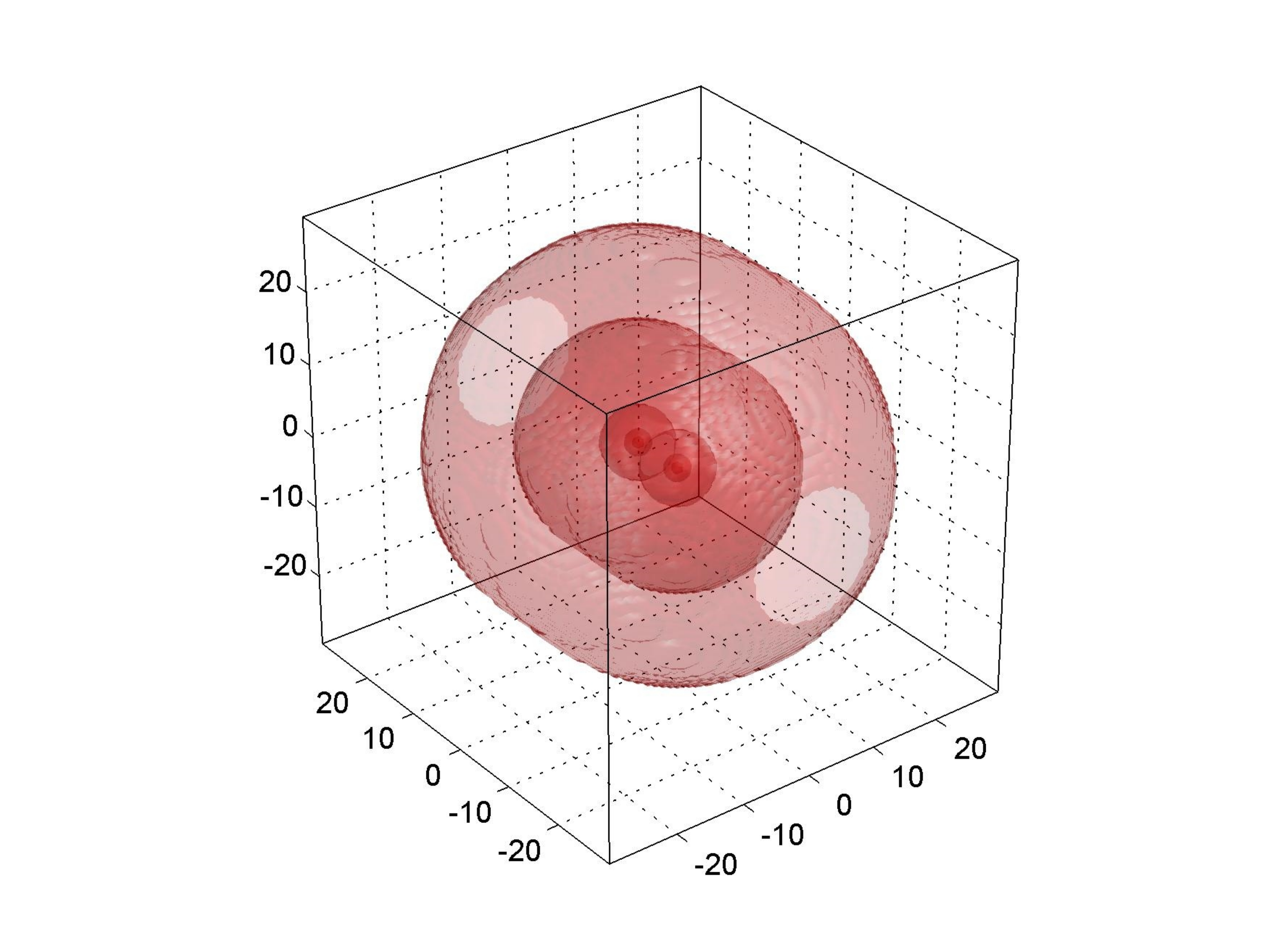}
    \includegraphics[width=0.45\textwidth]{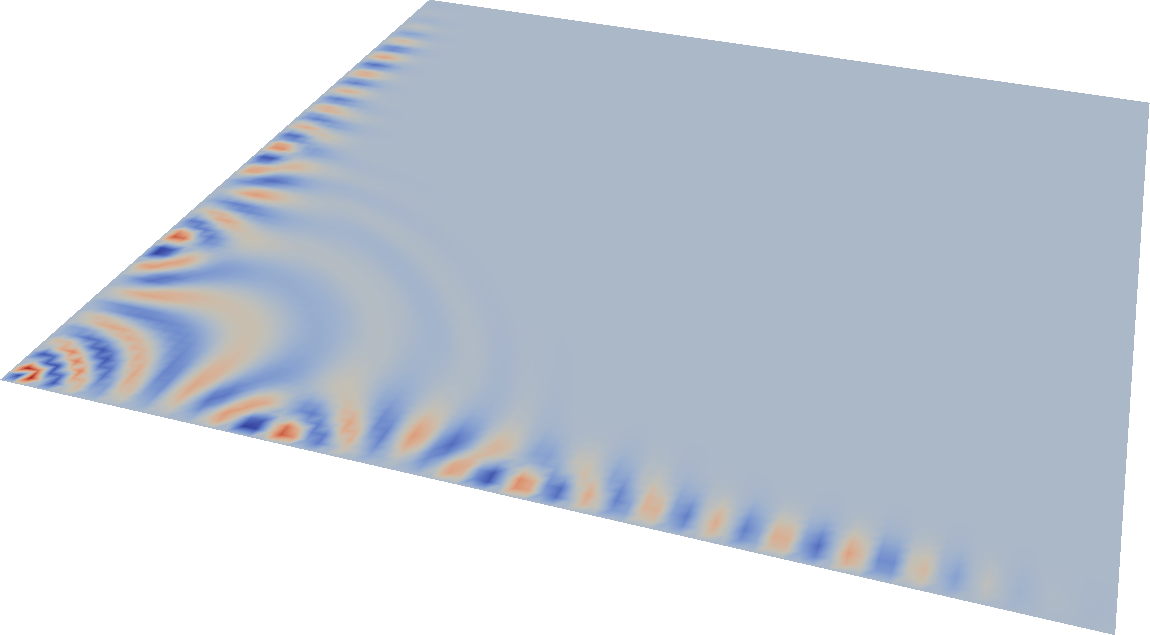}
  \end{center}
  \vspace{-0.4cm}
  \caption{
    Left: Potential field defined by two fixed (heavy)
    particles. A three-dimensional Helmholtz equation \eqref{eq::3pHelmholtz}
    describes the behaviour, i.e.~probability distribution, of a single free particle
    within this field.
    Applications are interested in its far field map (FFM) which is a volume
    integral over Helmholtz solutions within a sphere enclosing the
    setup---typically for $p>1$.
    Right: Solution of one channel's Helmholtz problem of type \eqref{eq::pHelmholtz} for a $p=2$ setup.
    The bottom and left axis encode the distance from the $p=2$ centers of mass.
    The domain expands infinitely into the right and top
    direction.
  }
  \label{fig::twofixedparticles}
\end{figure}

A time-dependent Schr\"odinger equation for $p$ free particles can be solved
by projecting the initial state ($t=0$) onto the Hamiltonian's eigenstates. 
Each quantum eigenstate $\Psi$ parameterised in spherical coordinates $r_j$
around the particles is factorisable as
\begin{equation*}
\Psi(\vec{r}_1,\ldots,\vec{r}_p,t) \equiv e^{-\i Et}\psi^{3p}(\vec{r}_1,\ldots,\vec{r}_p),
\end{equation*}
where $E$ is the eigenvalue, that is the eigenstate's total energy.
As the probability distribution
$|\Psi(\vec{r}_1,\ldots,\vec{r}_p,t)|^2=|\psi^{3p}(\vec{r}_1,\ldots,\vec{r}_p)|^2$
is constant in time \cite{Cohen:77:QM}, these modes are stationary states.
Substituting a stationary state into the time-dependent Schr\"odinger
equations yields a $3p$-dimensional time-independent Schr\"odinger equation,
that is a Helmholtz equation, of the form
\begin{eqnarray}
H_{3p} \psi^{3p}(\vec{r}_1,\ldots,\vec{r}_p) &\equiv& \left[ -\Delta_{3p} - \phi^{3p}(\vec{r}_1,\ldots,\vec{r}_p) \right]  \psi^{3p}(\vec{r}_1,\ldots,\vec{r}_p) \nonumber \\
 &=& \chi^{3p}(\vec{r}_1,\ldots,\vec{r}_p), 
 \label{eq::3pHelmholtz}
\end{eqnarray}

\noindent
where $\phi^{3p}$ and $\chi^{3p}$ depend on the setup's
configuration comprising also the impact of additional fixed (heavy) particles
as well as the free particles' properties.
Usually solely long-term steady state solutions of the particle quantum
system can be measured, so \eqref{eq::3pHelmholtz} needs only be solved once
for the known total energy $E$ of the system.

\begin{figure}\label{fig::context_workflow}
  \begin{center} 
  \includegraphics[width=0.8\textwidth]{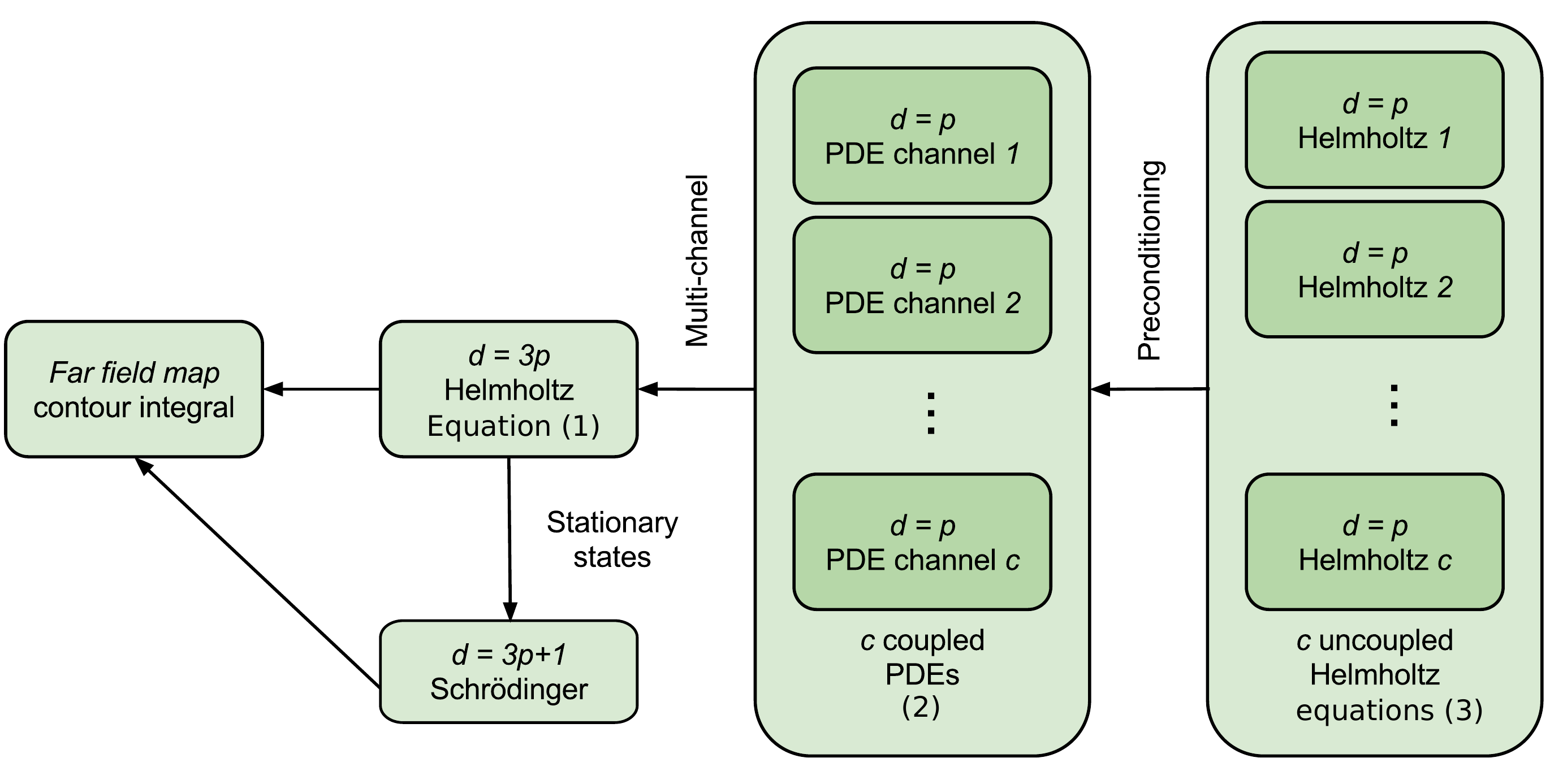}
  \end{center}
  \vspace{-0.4cm}
  \caption{
    Visualisation of the overall workflow from \cite{Vanroose:05:Science} to
    determine the far field map of a quantum mechanical scattering problem.
    The workflow circumnavigates the Schr\"odinger equation where the $+1$
    dimension represents the time. We aim at tackling the set of $c$ Helmholtz problems of dimension $d=p$ for $p>2$.
  }
  \label{fig::workflow}
\end{figure}

%
Although the time-dependence is removed from the governing equation, the dimensionality still grows large with the number of particles. Following \cite{Baertschy:01:ThreeBodySparseLinAlgebra},
the multi-channel approach \cite{Vanroose:05:Science} expresses 
the $3p$-dimensional solution of \eqref{eq::3pHelmholtz} in terms of 
partial waves.
It rewrites $\psi^{3p}$ as a sum of projections onto partial waves ({\em
channels}) in decreasing order of magnitude and truncates this sum after
$c\in\mathbb{N}_0$ terms. 
This involves a transformation of $\vec{r}_j$ into a radial distance and solid
angle.
It yields $c$ coupled PDEs
\begin{equation}\label{eq::coupledHelmholtz}
\begin{pmatrix}
  H_{11}   	& A_{12}	& \hdots	& A_{1c} \\
  A_{21}   	& H_{22}	&    	& A_{2c} \\
  \vdots   	&      	& \ddots	& \vdots \\
  A_{c1}		& A_{c2}	& \hdots	& H_{cc}
 \end{pmatrix}
 \begin{pmatrix}
 \psi^{mc}_1 \\
 \psi^{mc}_2 \\
 \vdots \\
 \psi^{mc}_c
 \end{pmatrix}
 =
 \begin{pmatrix}
 \chi^{mc}_1 \\
 \chi^{mc}_2 \\
 \vdots \\
 \chi^{mc}_c
 \end{pmatrix}
\end{equation}

\noindent
with $p$-dimensional Helmholtz operators $H_{ii}:\mathcal{C}^2(\mathbb{R})\rightarrow \mathcal{C}^2(\mathbb{R})$ on the diagonal. 
Off-diagonal operators $A_{ij}:\mathcal{C}^2(\mathbb{R})\rightarrow
\mathcal{C}^2(\mathbb{R})$ contain potential terms and couple the channels.
The closer to the diagonal, the stronger the coupling.
We split 
\begin{equation*}
A\equiv
 \begin{pmatrix}
  H_{11}   	& 		&  		&  \\
     		& H_{22}	&    	&  \\
   		   	&  		& \ddots	&  \\
  			& 		& 		& H_{cc}
 \end{pmatrix}
 +
 \begin{pmatrix}
  	\mathbf{0}	   	& A_{12}	& \hdots	& A_{1c} \\
  A_{21}   	& \mathbf{0}		&    	& A_{2c} \\
  \vdots   	&      	& \ddots	& \vdots \\
  A_{c1}		& A_{c2}	& \hdots	& 	\mathbf{0}
 \end{pmatrix},
\end{equation*}

\noindent
bring the non-diagonal blocks to the right-hand side and end up with an
iterative scheme on the block level.
Each block row yields a problem of the form
\begin{equation}\label{eq::pHelmholtz}
\left[ -\Delta_{p} - \phi^{p}(\rho_1,\ldots,\rho_p) \right] 
\psi^{p}(\rho_1,\ldots,\rho_p) = \chi^{p}(\rho_1,\ldots,\rho_p).
\end{equation}

\noindent
It is solved on the unit hypercube $\rho \in (0,1)^p$ as finite subregion of
$(0,\infty)^p$, while the modified right-hand side $\chi^{p}$ anticipates the
coupling operators. 
The technical details of this transformation are given in
\cite{Zubair:12:channels}.
Whenever we drop the $p$ superscripts from here on, the symbols are 
generic for any of the channel PDEs.

Each of these equations has to be solved efficiently.
Each solve acts as a preconditioning step within the overall algorithm.
While the spectrum of $\Delta_{p}$ per Helmholtz operator on the diagonal
retains large condition numbers, all system matrices are sparse.
A concurrent solve of such channels in a Jacobi-type fashion is perfectly
parallel and thus not studied further here.
We note that a $p=2$-dimensional Helmholtz problem is solved successfully with direct methods
for the blocks in \cite{Vanroose:05:Science} using a
parallel computer.
For $p>2$, direct solves however are not feasible anymore.

For $\phi^{p}(\rho_1,\ldots,\rho_p)>0$, the Helmholtz operator can be indefinite, which disturbs the convergence of standard iterative methods \cite{Ernst:11:HelmholtzIterativeMethods}. 
Among many other publications on the subject, the field of \emph{shifted Laplacian preconditioning} has greatly inspired the solvers in the current paper. 
The first preconditioners of this kind were the Laplacian and the positively
shifted Laplacian introduced in \cite{Bayliss:83:LaplacePrecon}, later
generalised to complex-valued shifts \cite{Erlangga:04:CSL,Erlangga:06:CSL}. 
Alternative preconditioners and solution methods are derived from 
frequency shift time integration \cite{Meerbergen:09:FreqShift}, 
moving perfectly matched layers \cite{Engquist:11:SweepingPrecon}, 
a transformation of the Helmholtz equation to a reaction-advection-diffusion problem \cite{Haber:11:RADPrecon}, 
separation of variables \cite{Plessix:03:SOVPrecon}, 
the wave-ray approach \cite{Brandt:97:WaveRay}, 
Krylov subspace methods as smoother substitute \cite{Elman:01:KrylovSmoothing}, 
or algebraic multilevel methods
\cite{Bollhoefer:09:AMGPrecon,Tsuji:15:AMGPrecon}.
This list is not comprehensive.

We use the \emph{Complex Scaled Grid (CSG)}
operator \cite{Reps:10:CSG}.
It maps \eqref{eq::pHelmholtz} onto a complex scaled or rotated domain,
i.e.~makes $\rho_j \in (0,e^{\i \theta})$.
$\theta$ denotes this rotation in the complex plane. 
We thus solve closely related Helmholtz equations that are better conditioned and still merge into a good block diagonal preconditioner for \eqref{eq::coupledHelmholtz}.
Near to open domain boundaries, complex-valued scaling introduces absorbing
boundary layers that we use in combination with zero-Dirichlet conditions.
This complex scaling is chosen independent of parameter $\omega$ in the smoother.
In fact, technically, CSG broken down to a grid is equivalent to treating the
entire domain as an absorbing boundary layer.

Because of the complex domain rotation, the eigenvalues are rotated in the
complex plane; away from the origin.
Standard multigrid methods thus can be applied, whereas multigrid on an
unmodified equation ($\theta \mapsto 0$) fails. 
The approach is inspired by the \emph{complex shifted Laplacian (CSL)}
where complex damping is introduced in the Helmholtz shift, i.e.~$\phi
\to (1+\alpha\i)\phi$ with $0<\alpha<1$ \cite{Erlangga:04:CSL,Erlangga:06:CSL}. 
We refer to \cite{Magolu:00:CSL,vanGijzen:07:CSL,Osei:10:CSL,Cools:13:LFAofCSL} and the extensive literature on the CSL operators for a study of the appropriate choice for parameter $\theta$, a discussion that is beyond the scope of the present work.
Both CSG acting as preconditioner and the channel decomposition
preserve the solution characteristics of the FFM (Figure \ref{fig::workflow}) which is the final
quantity of interest, and we conclude that the channel decomposition's reduction of dimensionality of
the subproblems starts to pay off for $p\geq2$.

  \section{Spacetrees}
\label{section:spacetrees}

%
%

\begin{algorithm}[htb]
  \caption{Textbook additive multigrid. 
    An iteration is triggered by 
    \textsc{add}($\ell _{max}$), i.e.~runs from finest to coarsest grid. 
    See Remark \ref{remark:exact-coarse-grid-solve} on exact $\ell _{min}$
    solves.
   \label{algo::add}
  }
 \SetAlgoNoLine
    \begin{algorithmic}[1]
      \Function{add}{$\ell $} 
         \State $r_\ell \leftarrow \chi_\ell - H_\ell u_\ell$ 
           \Comment $p$-linear shape functions determine $H$.
         \State $u_\ell \leftarrow u_\ell + \omega_{S}
           S(u_\ell,\chi_\ell)$
           \Comment Smoother $S$ with damping $\omega_{S} \in (0,1)$.
         \State
         \If{$\ell > \ell _{min}$}{
           \State \phantom{xx} $u_{\ell -1} \leftarrow 0$
           \State \phantom{xx} $\chi_{\ell -1} \leftarrow Rr_\ell$
             \Comment $p$-linear geometric restriction.
           \State \phantom{xx} \Call{add}{$\ell -1$}
           \State \phantom{xx} $u_\ell \leftarrow u_\ell + \omega_{cg}
           Pu_{\ell-1}$ \Comment Coarse grid damping $\omega_{cg} \in (0,1)$.
           \State
             \Comment $p$-linear prolongation $P$.
         }
      \EndFunction
  \end{algorithmic}
\end{algorithm} 


Our implementation of multilevel solvers such as additive
multigrid (Algorithm \ref{algo::add}) relies on a finite element formulation where the geometric elements are hypercubes over the complex domain. 
Their dimension is $3p$ for Helmholtz problems of type \eqref{eq::3pHelmholtz} and $p$ for Helmholtz problems of type \eqref{eq::pHelmholtz} in the channel approach. 
Except for some illustrations from the application domain, we focus on numerical results for the latter and therefore use $p$ as dimension from hereon.
For $p=2$, we start with the unit square suitably scaled by $e^{\i\theta} \in
\mathbb{C}$.
This square is our coarsest grid $\Omega _{h,0}$ holding one cell and
$2^p$ vertices.
It coincides with the computational domain.
Let its grid entities have level $\ell = 0$.
We next split the cube equidistantly into three parts along each coordinate axis and
end up with $3^p$ new squares. 
They describe a grid $\Omega _{h,1}$ and belong to level $\ell = 1$.
Our choice of three-partitioning stems from the fact that we use the software
Peano \cite{Software:Peano} for our realisation.
All algorithmic ideas work for bi-partitioning as well.

\begin{figure}
  \includegraphics[width=0.45\textwidth]{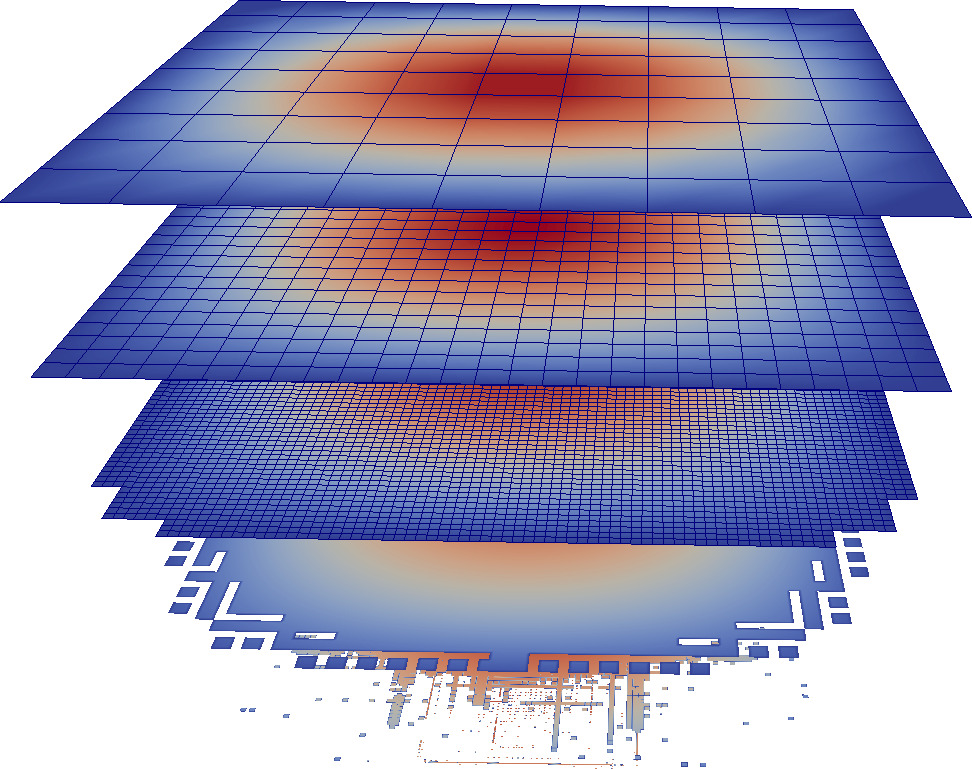}
  \includegraphics[width=0.45\textwidth]{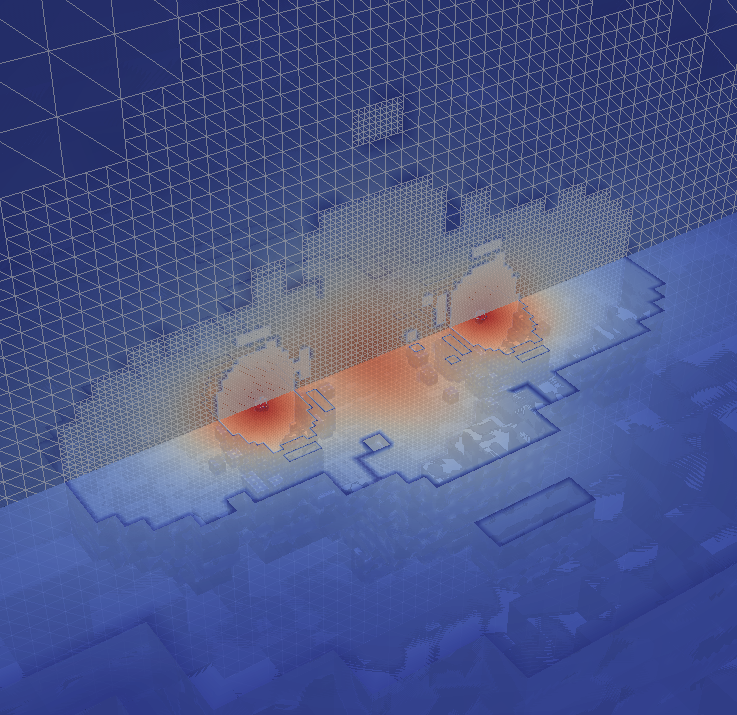}
  \caption{
    Left: The spacetree yields a cascade of regular grids. 
    Two dimensional setting with grid levels from top to down. While the
    union forms an adaptive Cartesian grid and while the grids are
    geometrically embedded into each other, a grid on a single level might
    be ragged (finest level).
    Multiple vertices belonging to different levels coincide spatially, all the
    vertices on the two coarsest levels left are {\em refined}.
    Right: Illustration of heavy hydrogen $^2$\hspace{-2pt}$H$ with one free electron and a fixed proton and neutron.
    Direct solution of \eqref{eq::3pHelmholtz}.
    The adaptivity yields grids that range over ten levels in the simulation
    domain. 
  }
  \label{spacetrees:cascade-of-grids}
\end{figure}

For each of the $3^p$ squares of level $\ell=1$ we decide independently whether
we refine them further.
As we continue recursively, we end up with a cascade of regular grids that
might be disconnected (Figure \ref{spacetrees:cascade-of-grids} left).
The extension of this construction to any $p>2$ is
straightforward (e.g.\ Figure \ref{spacetrees:cascade-of-grids} right).
Our overall scheme describes a spacetree
\cite{Weinzierl:2009:Diss,Weinzierl:11:Peano} and is an extension of the
classic octree and quadtree idea to three-partitioning in combination with
arbitrary spatial dimension $d$.
The spacetree exhibits the following properties that are important to our
solvers:

\begin{enumerate}
  \item It yields a set of grids where each grid of level
  $\ell$ is a refinement of the grid of level $\ell-1$, i.e.~the grids are
  strictly embedded into each other.
  \begin{equation}
    \Omega _{h,0} \subset \Omega _{h,1} \subset \ldots \subset \Omega _{h,\ell
    -1} \subset \Omega _{h,\ell }. 
    \label{eq:composition-of-grids}
  \end{equation}
  \item The union of all grids yields an adaptive Cartesian grid
  \begin{equation}
    \Omega _h = \bigcup _\ell \Omega _{h,\ell } .
    \label{eq:union-of-grids}
  \end{equation}
  \item A vertex is unique due to its spatial
  position plus its level, i.e.~multiple vertices may coincide on the same position
  $x\in\Omega$.
  \item The individual grids are regular Cartesian grids aligned to each other.
  But they can be ragged as not all cells of a level $\ell$ have to be refined
  to obtain the grid at level $\ell+1$. $\Omega _h$ comprises hanging nodes.
\end{enumerate}

\noindent
The equivalence of the cascade of adaptive Cartesian grids with a
spacetree is well-known (\cite{Weinzierl:2009:Diss}
or \cite{Bader:13:SFCs} and references therein).
It motivates our inverse use of the term level with respect to standard 
multigrid literature.
In this context, we also emphasise that we prefer the term erase as counterpart
to refinement, as coarsening has already a semantics in the multigrid context.

Let a cell $a \in \Omega _{h,\ell}$ be a parent of $b \in \Omega _{h,\ell+1}$,
if $b$ is constructed from $a$ due to one refinement step.
This parent-child relation introduces a partial order on the set of all grid
cells of all grid levels. 
It defines the spacetree.
Given a spacetree, any tree traversal is equivalent to a multiscale element-wise
grid traversal.
Particular advantageous is the combination of space-filling curves (we use the Peano
curve here) \cite{Bader:13:SFCs}, adaptive Cartesian multiscale grids and a
depth-first traversal, as it yields memory-efficient codes: Basically, the tree traversal can be mapped onto a depth-first pushback
automaton where a child is never visited prior to its parent.
As soon as this automaton encounters an unrefined spacetree cell, a leaf, it
backtracks the tree and continues to descend within another subtree.
Two bits per vertex then are sufficient to encode both adjacency and dynamic
adaptivity information \cite{Weinzierl:11:Peano}---though we typically use a
whole byte to make programming easier---while the whole data structure 
is linearised on few stacks or streams.
We work with a linearised octree \cite{Sundar:08:BalancedOctrees}.
All presented solver ingredients fit also to other tree traversals.
Notably, breadth-first of parallel traversals \cite{Weinzierl:15:Peano} do work.
We solely require two properties to hold: Parents have to be traversed prior to
their children, and any solver has to have the opportunity to plug into both
the steps from level $\ell $ to level $\ell +1$ and the other way round.
The former requirement allows us to realise arbitrary dynamic adaptivity; the
tree may unfold throughout the steps down.
The latter requirement allows us to distribute algorithmic ingredients both
among the unfolding of the traversal and its backtracking.
The depth of the backtracking is limited by the maximum tree depth.
It is small.


We close our spacetree discussion with a few technical terms. 
A {\em hanging vertex} is a vertex with less than $2^p$ adjacent spacetree cells
on the same level.
A {\em refined vertex} is a vertex where all $2^p$ adjacent spacetree cells
on the same level have children (Figure \ref{spacetrees:cascade-of-grids}).
A  non-hanging vertex is a {\em fine grid vertex} if no other non-hanging vertex
at the same spatial position with higher level does exist.
Let $\mathbb{V}$ be the set of non-hanging vertices in the grid. 
A fine grid cell is an unrefined spacetree cell.
The finest level of the spacetree from here on shall be $\ell _{max}$.
As adaptive multigrid
solves often do not coarsen the problem completely, we rely formally on a
coarsest compute level $\ell _{min}\geq 1$, as all vertices on level 0 coincide
with the domain boundary and, for Dirichlet boundary conditions, do not carry
unknowns.

  \section{Multigrid realisation}
\label{section:multigrid}

%
%
Our work
is based on a Ritz-Galerkin finite element formulation of \eqref{eq::pHelmholtz}
with $p$-linear shape functions and 
a nodal unknown association.
Better-suited, problem-tailored discretisations such as dispersion minimising
schemes \cite{Chen:12:DispersionMinimizing,Stolk:15:DispersionMinimizing} are
beyond scope here but can be realised within our computational framework.
All shape functions are centred around vertices, and we make each shape
functions cover exactly the $2^p$ cells of the vertex's level.
As the spacetree yields multiple vertices at one space coordinate and as each
level $\ell$ of the grid spans one function space $U_\ell$, the whole
spacetree induces a hierarchical generating system \cite{Griebel:94:Multilevel}.
Let each $v \in \mathbb{V}$ hold a three-tuple $(u,\phi,\chi)
(v) \in \mathbb{C}^3$.
$u$ is the weight of the shape function associated to the vertex,
i.e.~for shape functions $\psi (v)$ the solution of the discretised problem
is given by $\psi = \sum _{v \in \mathbb{V}} u(v) \psi (v)$.
According to \eqref{eq::pHelmholtz}, $\phi$ holds a weight of the identity
discretised by shape and test function space.
For a fine grid vertex, $\chi$ accordingly holds the weight of the discretised
right-hand side.
For a refined vertex, $\chi $ holds the right-hand side of the multigrid scheme. 
To reduce notation, we reuse the function symbols $u$, $\phi$ and $\chi$ from
the continuous formulation for the vectors of nodal weights, as the semantics of the
symbols is disambiguous, and we omit the $v$-parameterisation.
For all quantities, let the 
subscript $_\ell$ select levels.

$H_\ell$ in Algorithm \ref{algo::add} has a two-fold meaning. 
In unrefined vertices, it is a generic symbol for one Helmholtz operator from \eqref{eq::coupledHelmholtz}. 
It comprises a complex rotation factor and is subject to a smoother $S$ that, in our case, denotes one Jacobi smoothing step 
\[
  S_\ell(u_\ell, \chi _\ell) = diag(H_\ell )^{-1} (\chi _\ell -
  H_\ell u_\ell).
\]
In refined vertices, it encodes a correction equation. 
On one level $\ell$, regions may exist where $H_\ell $ has either of
these two functions, i.e.\ where fine cells align the edge of a coarser cell.

A generating system in combination with the fact that each
multigrid sweep starts from a coarse grid correction being zero renders the
realisation of an additive multigrid for a regular grid straightforward:
We set $u(v)=0$ for all refined vertices, and initialise two temporary variables
$r(v)=0$ and $diag(H_\ell)(v)=0$ everywhere.
Once we run into a cell, we evaluate the local system matrix in that cell and
accumulate the result within the residual $r$ of the $2^p$ adjacent vertices.
Further, it is convenient to assemble the diagonal element.
For plain Dirichlet boundary conditions and rediscretisation, an 
accumulation of $diag(H_\ell)(v)=0$ is unnecessary.
The value is known explicitly from $\phi$.
For complex scaling of cells near to the boundary as required for absorbing
boundary layers or spatially varying $\phi $, such an explicit accumulation of
$diag$ is mandatory.
Once all adjacent cells of a vertex on one level have been visited, we may
add the right-hand side to the residual and update the nodal value accordingly.
The right-hand side is not required earlier throughout the solve process.
The overall process is detailed in \cite{Mehl:06:MG} and reiterated in Appendix
\ref{section:appendix:element-wise}.
It is completely matrix-free.
It never sets up any global matrix.
This property also holds for restriction and prolongation which are done
throughout the grid/tree traversal process as well.

We use geometric grid transfer operators.
$P$ is a $p$-linear interpolation and $R=P^T$.
Since we rely on finite element rediscretisation and a uniform complex rotation
on all grid levels, $H_\ell$ on each level is a linear combination of a Laplace
operator and a mass operator.
The weighting of the two operators is, besides suitable $h$-scaling, invariant.
It is straightforward to validate that both operator component
rediscretisations equal a Galerkin coarse grid operator \cite{Trottenberg:01:Multigrid}
if the complex scaling and $\phi$ are invariant. 
We obtain $H_\ell = R H_{\ell+1} P $.
As we rely on rediscretisation, $diag(H_\ell )^{-1}$ further can be determined
on-the-fly on all levels as we accumulate the residual.
For varying $\phi$ or along boundaries with complex scaling, our rediscretised
operator slightly deviates from the Galerkin coarse grid operator.
However, it
still yields reasonable coarse grid corrections. 
Hanging vertices are interpolated from the next coarser grid through $p$-linear
interpolation, i.e.~their value results from $P$.
We note that we do not impose any balancing condition \cite{Sundar:08:BalancedOctrees} on the
spacetree.
With these ingredients, we can
interpret the fine grid solve as a domain decomposition approach: each fine grid
region of a given resolution is assigned Dirichlet values through its hanging
nodes  as well as real boundary vertices and computes a local solve. 
The next multigrid iteration, these Dirichlet values are changed to the most
recent, i.e.~updated solution on the coarser grids.

Our algorithmic sketch so far tackles the treatment of hanging nodes but omits
two major challenges: How do fine grid solutions from a level $\ell$ interact
with fine grid solutions of coarser resolutions, and how do we handle, on any
fixed level, the transition of the operator semantics from PDE discretisation
into multigrid operator along resolution boundaries?
We propose to rely on the concept of MLAT \cite{Brandt:73:MLAT,Brandt:77:MLAT}
for hierarchical generating systems which leads into the idea of Griebel's HTMG
\cite{Griebel:90:HTMM}.
To make a function representation unique within the generating system, we
enforce
\begin{equation}
   u_{\ell -1}  = Iu_{\ell}
   \label{equation:multigrid:injection}
\end{equation}
where $I$ is the injection, i.e.~plain copying of $u$ weights within our
spacetree.
This way, vertices that coincide with non-hanging vertices on finer levels act
as Dirichlet points for coarser fine grid problems.
Constraint \eqref{equation:multigrid:injection} formalises a
full approximation storage scheme (FAS) on our hierarchical generating system.
Homogeneous Dirichlet boundary conditions imply $u_0\equiv 0$.
Though no equation systems are to be solved on $\ell < \ell_{min}$, 
we nevertheless define $u_\ell$ on the coarser levels. 
 This does simplify our formulae and arithmetics, as we rely on the notion of
hierarchical surpluses \cite{Griebel:94:Multilevel} defined by the image of
the operator $id-PI$ with $id$ being the identity.
We stick to $ H u = \chi $
on the fine grid, but formally split the unknown scaling on refined vertices
into current solution plus correction. Following \cite{Griebel:90:HTMM}, this yields
\begin{eqnarray}
  H_\ell (u_\ell + e_\ell) & = & \chi _\ell + H_\ell u_\ell 
         \nonumber \\
         & = & R r_{\ell +1} + R H_{\ell +1}P Iu_{\ell+1} = R (\chi _{\ell +1} -
         H_{\ell +1} u_{\ell +1} + H_{\ell +1}P Iu_{\ell+1} ) 
         \nonumber \\
         & = & R (\chi _{\ell +1} - H_{\ell +1} \hat u_{\ell +1}) 
         =: R \hat r _{\ell +1}
         \qquad
         \mbox{with } \hat u_\ell := (id-PI) u_\ell.
         \label{equation:multigrid:HTMG}
\end{eqnarray}
The hierarchical surplus $\hat u$ is easy to compute if the coarse
grid already holds the injected solution.
Computing $\hat r$ parallel to the nodal residual for the smoother is
easy as well as it requires the same operations as the original fine grid
equation system.
The prolongation of the coarse grid correction finally simplifies if we add
an updated value on level $\ell $ to the hierarchical surplus $\hat u_{\ell +1}$ to obtain a new
nodal solution $u_{\ell +1}$ on level $\ell+1$:
\[
  u_{\ell} \gets u_{\ell} + P e_{\ell-1} = \hat u_{\ell} + P u_{\ell-1}.
\]
Such a scheme is agnostic whether original equation (in unrefined vertices) or multigrid correction (in refined vertices) is
solved on a level $\ell$.
All unknowns are of the same scaling and the $u$ weights have the same semantics
everywhere.
However, it misfits top-down tree traversals.

\begin{remark}
  \label{remark:exact-coarse-grid-solve}
  Textbook multigrid classically relies on an exact solve on the coarsest grid.
  We replace this solve by a sole smoothing step. 
  On the one hand, our experiments demonstrate empirically that this yields
  sufficiently accurate coarse grid solutions here.
  On the other hand, our solver acts as preconditioner. 
  Convergence to machine precision is not required.
\end{remark}

\noindent
This remark requires three addenda:
First, our additive multigrid solvers damp coarse grid contributions for
stability reasons.
The impact of the coarsest correction then is small anyway and an exact solve
would not pay off.
Second, in Poisson-like experiments we coarsen into very small grids (with
only $2^p$ unknowns, e.g.).
The small grid problems then are that small that one smoothing step already
reduces the residual significantly.
Finally, our complex rotations rotate the solution into a Poisson-like regime.
Otherwise, the problem could not be represented reasonably on very coarse
grids and would require an exact solve.

  \section{Single-sweep additive FAS within tree traversals}
\label{section:faddmg}

\begin{figure}
  \begin{center}
    \includegraphics[width=0.7\textwidth]{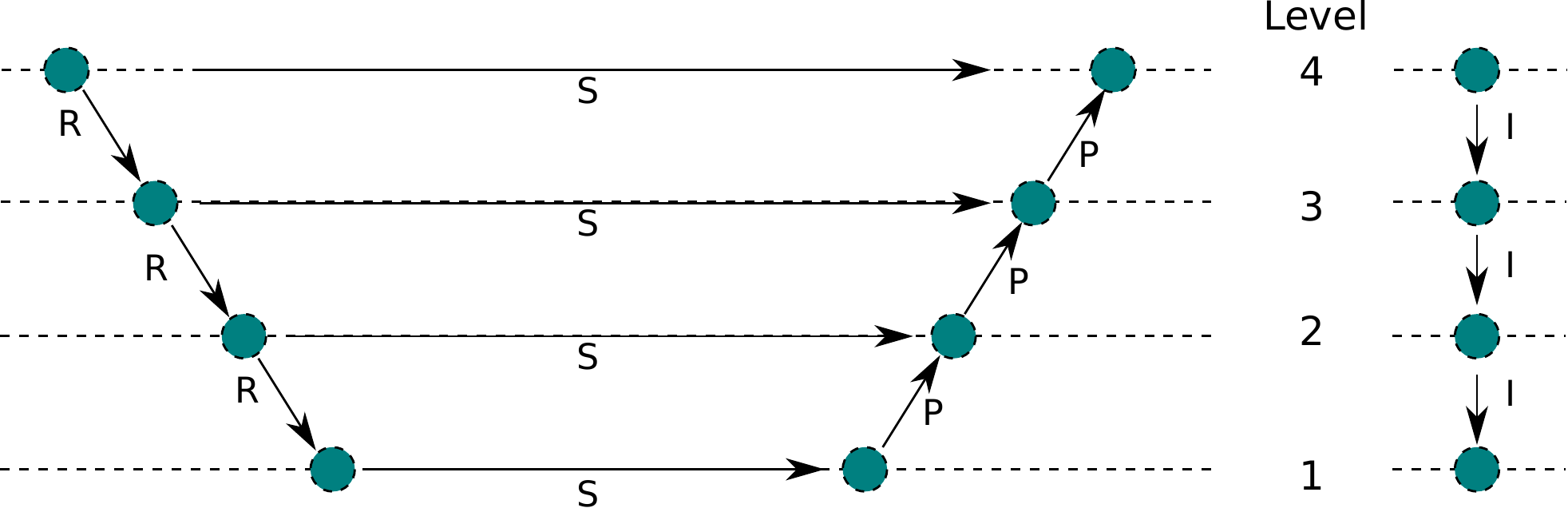}
  \end{center}
  \caption{
    Data flow diagram for the additive multigrid from Algorithm \ref{algo::add}
    (left) and data flow diagram for the injection of FAS (right).
    \label{fig:multigrid:data-flow}
  }
\end{figure}



\begin{theorem}
   Classic additive multigrid in combination with full approximation storage
   (FAS) can not be implemented with one tree traversal per cycle if we
   do not use additional variables as well as algorithmic reformulations.
\end{theorem}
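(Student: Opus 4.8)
The plan is to recast the statement as an impossibility result about data-flow graphs: I would show that the operations of Algorithm~\ref{algo::add} together with the injection constraint~\eqref{equation:multigrid:injection} induce a directed cycle between the unknowns of two neighbouring compute levels, and that no single depth-first sweep over the spacetree can schedule a cyclic dependency.

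First I would pin down what ``one tree traversal per cycle'' permits: a single depth-first pass in which every cell is entered once (parent strictly before child) and left once, every vertex is accessed only while one of its same-level adjacent cells is active, and -- the property I exploit -- no event consumes a datum produced by a later event of the same pass. The pass is therefore a linear extension of a partial order on events and can realise only an acyclic dependency graph; associative, accumulated contributions such as $\chi_{\ell-1}\gets R r_\ell$ built up over the $\ell$-cells may be assembled incrementally, but the value a vertex carries at the end of the sweep is committed by a single terminal write.

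Next I would read off the dependency graph with the per-level weights $u_\ell,\chi_\ell,r_\ell$ as nodes. Restricted to levels $\ell-1<\ell$, the additive part of Algorithm~\ref{algo::add} contributes the chain $u_\ell\to r_\ell\to\chi_{\ell-1}\to u_{\ell-1}\to u_\ell$: the fine residual drives the restricted right-hand side, level $\ell-1$ is relaxed, and the relaxed coarse value is prolongated back through the $\omega_{cg}P$ line. This graph is acyclic because the coarse value depends only on the restricted residual of the \emph{old} fine iterate and nothing computed on level $\ell-1$ re-enters level $\ell-1$ within the cycle -- which is exactly why plain additive multigrid streams through one pass, at the price of the familiar one-cycle lag of the prolongation, a lag that needs no extra storage. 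Appending the FAS edge $u_\ell\to u_{\ell-1}$ of~\eqref{equation:multigrid:injection} changes this: since $IP=\mathrm{id}$ and $I(\mathrm{id}-PI)=0$ on the coarse space, the representation stays generating-system consistent only if $u_{\ell-1}=I u_\ell$ holds for the \emph{current} $u_\ell$, so a scheme that injects an earlier iterate is not the FAS of~\eqref{equation:multigrid:injection} and voids the surplus identities $\hat u=(\mathrm{id}-PI)u$, $\hat r=\chi-H\hat u$ of~\eqref{equation:multigrid:HTMG}. Hence $u_{\ell-1}$ must be finalised after $u_\ell$ (injection) and $u_\ell$ after $u_{\ell-1}$ (prolongation): a $2$-cycle, which no linear extension of the traversal's event order can contain. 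A faithful one-pass realisation must therefore cut it, and cutting it needs an auxiliary variable so that an ``old'' and a ``new'' copy of $u_\ell$ -- equivalently, the coarse correction otherwise destroyed by re-injection -- coexist, together with the algebraic rewrite that deletes the back-edge, namely the $\hat u/\hat r$ reformulation of~\eqref{equation:multigrid:HTMG} and the pipelining of Section~\ref{section:faddmg}; that is precisely the asserted ``additional variables as well as algorithmic reformulations''.

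The main obstacle I anticipate is this middle step: making airtight that the injection edge genuinely targets the post-prolongation $u_\ell$. I would fold the fact that ``classic FAS'' means~\eqref{equation:multigrid:injection} is imposed as an invariant at cycle boundaries into the hypotheses, and use $IP=\mathrm{id}$ to show no earlier iterate can satisfy it while still deserving the name FAS. A secondary subtlety is ruling out schedules that exploit the nesting of a level-$(\ell-1)$ vertex's processing window around those of its $\ell$-children; here I would note that such nesting reorders only partial contributions, not the terminal writes of $u_\ell$ and $u_{\ell-1}$, so the $2$-cycle survives every admissible interleaving -- which is the superposition of the two conflicting flows already drawn in Figure~\ref{fig:multigrid:data-flow}.
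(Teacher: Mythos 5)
Your argument is correct and follows essentially the same route as the paper's proof: a data-flow analysis in which prolongation creates a coarse-to-fine edge, injection~\eqref{equation:multigrid:injection} creates a fine-to-coarse edge, and the resulting cycle cannot be scheduled by a single traversal. You do add useful rigour that the paper delegates to Figure~\ref{fig:multigrid:data-flow} — in particular, the careful unfolding of old versus new iterates to show that plain additive multigrid is acyclic, and the observation that $IP=\mathrm{id}$ forces the injection edge to target the \emph{post-prolongation} $u_\ell$ (otherwise the surplus identities of~\eqref{equation:multigrid:HTMG} fail and the scheme is no longer FAS) — but the core impossibility argument is the same.
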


\begin{proof}
  A proof relies on a data flow analysis (Figure
  \ref{fig:multigrid:data-flow}). The additive multigrid's prolongation prolongs
  corrections from coarser to finer grids where these are merged with the
  smoother's impact into an unknown update. Each update 
  requires a data flow to coarser grids due to the injection in
  \eqref{equation:multigrid:injection}. The resulting graph has cycles.
\end{proof}

\noindent
One might argue that a temporary violation of
\eqref{equation:multigrid:injection} on coarse grids is acceptable.
Yet, 
\begin{enumerate}
  \item our notions of hierarchical surplus and hierarchical residual
  require \eqref{equation:multigrid:injection} to hold. 
  \item our element-wise operator evaluation relies on the fact that all 
  vertices adjacent to this cell hold valid values. If some of them hold values
  that are not injected yet from a finer grid, the residual evaluation yields wrong results.
  \item our algorithm shall be allowed to refine and erase without restrictions.
  If \eqref{equation:multigrid:injection} holds, the deletion of whole subtrees
  is allowed. Otherwise, valid values first have to be injected prior to a grid
  update.
\end{enumerate}

\noindent
Additional tree traversals reconstructing \eqref{equation:multigrid:injection}
on-demand require multiple unknown reads and writes.
Notably, reconstruction schemes might run into a rippling effect
\cite{Sundar:08:BalancedOctrees} where an update implies follow-up
updates.
All techniques introduced from hereon avoid additional data access and 
advocate for the minimisation of multiscale grid sweeps.
This makes them future-proof regarding a widening memory gap, i.e.~growing
latency and shrinking bandwidth per core \cite{Kowarschik:00:CacheAwareMG}.

They rely on two ingredients. 
On the one hand, we shift the additive algorithm's unknown updates by
half an iteration, i.e.~grid sweep.
We run through the spacetree and determine unknown corrections to the current approximate solution. 
However, we do not feed them back into the solution immediately. 
Instead, we postpone the update and apply them in the beginning of the next
solver cycle.
On the other hand, we introduce two helper variables keeping track of updates
w.r.t.~the solution and, hence, also the injection. 
We preserve \eqref{equation:multigrid:injection} due to an exchange of updates.
We never compute the injection directly.

\begin{algorithm}[htb]
 \SetAlgoNoLine
    \label{algo::tdadd}
    \begin{algorithmic}[1]
       \Function{tdAdd}{$\ell $}
         \State $sc_\ell \leftarrow sc_\ell + P sc_{\ell -1}$ 
           \Comment{Add coarse grid correction to $sc_\ell$ which}
         \State
           \Comment{so far,
           holds update resulting from a Jacobi smoothing step.} 
         \State $u_\ell \leftarrow u_\ell + sc_{\ell} + sf_{\ell}$
           \Comment{Update $u$ with update from}
         \State \Comment{previous line plus all updates
           done on finer grids.} 
         \State $\hat u \leftarrow u_\ell - Pu_{\ell -1}$
           \Comment{Determine new hierarchical surplus.}
         \State
         \If{$\ell < \ell _{max}$}{
           \State \phantom{xx} \Call{tdAdd}{$\ell +1$}
         }
           \Comment{Go to next finer level.}
         \State $r_\ell \leftarrow b_\ell - H_\ell u_\ell$
           \Comment{Determine residual and}
         \State $\hat r_\ell \leftarrow b_\ell - H_\ell \hat u_\ell$
           \Comment{hierarchical residual.}
         \State $sc_\ell \leftarrow \omega _{\ell} S(u_\ell,b_\ell)$
           \Comment{Bookmark update due to a Jacobi}
         \State \Comment{smoothing step for next traversal. Usually}
         \State \Comment{uses $r$, otherwise there is no need to compute $r$.}
         \State
         \If{$\ell > \ell _{min}$}{
           \State \phantom{xx} $b_{\ell -1} \leftarrow R \hat r_\ell$
             \Comment{Determine right-hand side}
           \State \phantom{xx} 
             \Comment{for multigrid correction.}
           \State \phantom{xx} $sf_{\ell -1} \leftarrow I\left( sf_\ell +
           sc_\ell \right)$
             \Comment{Inform coarser grid about update}
           \State
             \Comment{due to smoothing that will happen in next traversal.}
         }
      \EndFunction
  \end{algorithmic}
  \caption{Additive multigrid with FAS that integrates into a tree traversal,
  i.e.~a coarse-to-fine sweep through the grid hierarchies.
  A call to \textsc{tdAdd}($\ell _{min}$) starts one multigrid cycle.
  $sc$ and $sf$ are helper variables introduced by pipelining that facilitate a
  single-touched policy.
  The transition from a correction scheme into FAS through the
  temporary helper variables $\hat r$ (hierarchical residual) and $\hat u$
  (hierarchical surplus) is illustrated in Algorithm
  \ref{algo::buadd} in the appendix.
  }
\end{algorithm}

All ideas materialise in Algorithm \ref{algo::tdadd} and realise the following
invariant:
\begin{eqnarray*}
  sc_\ell & \equiv & 0 \qquad \forall \ell <\ell_{min}, \qquad \mbox{and}
   \\
  sf_\ell & = & sc_\ell = 0 \qquad \mbox{at startup.}
\end{eqnarray*}

\noindent
Different to Algorithm \ref{algo::add} our realisation relies on a top-down
tree traversal: We start at the coarsest level rather than on the bottom of the tree.
Dynamic refinement thus remains simple.
Whenever a spacetree cell is traversed that has to be refined, one adds
an arbitrary number of levels, initialises all hierarchical surpluses with zero
and immediately descends into the new grid entities.
Higher order prolongation can be constructed starting from this $p$-linear
scheme.

\begin{remark}
  \label{remark:fmg:cycle}
  The single sweep FAS facilitates arbitrary on-the-fly refinement. Such a
  dynamic refinement plays two roles. On the one hand, it allows the algorithm
  to resolve local solution characteristics.
  On the other hand, it yields a full multigrid (FMG)-like algorithm. 
  We start from a coarse grid tackled by a series of additive V-cycles.
  Throughout these solves, we dynamically and locally add additional grid levels
  and thus unfold a coarse start grid into an adaptively refined accurate grid.
  Coarse solution approximations act as initial guesses for refined grids.
\end{remark}

We note that our pipelined implementation requires us to store an additional
two values $sc$ and $sf$ per vertex. They hold smoothing contributions
(therefore the symbol $s$) from the coarser grids (symbol $c$) plus 
the current grid or smoothing contributions from the finer grids (symbol $f$).
Besides these two values, also $\hat r$ and $\hat u$ have to be held.
However, we may discard those in-between two iterations, and so they do not permanently increase the memory footprint.

While the present approach picks up ideas of pipelining---rather than multiple
data consistency sweeps, we need only one amortised traversal per update; an
achievement made possible due to additional helper variables
(cmp.~\cite{Ghysels:13:HideLatency,Ghysels:14:HideLatency,Ghysels:15:GeometricMultigridPerformance})---the
exchange of updates might induce stability problems, i.e.~values on different levels that should hold the same values might
diverge.
We analysed this effect and studied the impact of a resync after a few 
iterations, but for none of the present experiments this resync has proven to be necessary.

  \subsection{$\omega$ choices}

Algorithms \ref{algo::add} and \ref{algo::tdadd} rely on relaxation
parameters which have severe impact on the efficiency and stability of the
resulting multigrid solver.
Proper choices deliver several well-established multigrid flavours. 
While the semantics of the Jacobi relaxation $\omega_S$ in Algorithm 
\ref{algo::add} is well-understood and can be studied in terms of local Fourier
analysis, Algorithm \ref{algo::add} also introduces $\omega_{cg}$-scaling of the
coarse grid contribution.
Multiple valid choices for this parameter do exist with
different properties \cite{Bastian:98:AdditiveVsMultiplicativeMG}.
In Algorithm \ref{algo::tdadd}, we refrain from distinguishing $\omega_{S}$ and
$\omega_{cg}$ in the formula but instead introduce a vertex-dependent
relaxation $\omega_\ell$, i.e.~each vertex may have its individual relaxation
factor.
As a vertex is unique due to its spatial position plus its level, this 
facilitates level-dependent $\omega$ choices.

Let $succ(v) \in \{0,\ldots,\ell _{max}-\ell _{min}\}$ be an integer
variable per vertex $v$ with
\[
  succ(v) = \left\{ 
    \begin{array}{ll}
      0 & \mbox{if $v$ is an unrefined vertex, or} \\
      \min _{i} (succ(v_i))+1 & \mbox{for all children $v_i$ of $v$
      otherwise.}
    \end{array}
  \right.
\]
A child vertex of a parent vertex is a vertex with at least one adjacent cell
whose parent in turn is adjacent to the parent vertex.
This property deduces from the parent-child relation on the tree.
Furthermore, we define the predicate $cPoint$ that holds for any
vertex whose spatial position coincides with a vertex position on the next coarser levels.
The predicate distinguishes c-points from f-points in the multigrid terminology.
We obtain various smoother variants:

\begin{center}
 \begin{tabular}{l|p{7cm}}
   Relaxation parameter & Description \\
   \hline
   \\[-0.2cm]
   $\omega_\ell(v)=\left\{  
   \begin{array}{lcl}
     \omega_{S}<1 & \mbox{if} & succ(v)=0 \ \mbox{and} \\
     0           && \mbox{otherwise.}
   \end{array}
   \right.$
   & Relaxed Jacobi on the dynamically adaptive
   grid as the {\it coarse relaxation parameter equals zero}. 
   \\ 
   \hline
   \\[-0.2cm]
   $\omega_\ell(v)= \omega_{S} < 1$
   & 
   {\it Undamped coarse grid correction}.
   \\
   \hline
   \\[-0.2cm]
  $\omega_\ell (v)=\left\{  
    \begin{array}{lcl}
      \omega_{S}<1 & \mbox{if} & succ(v)\leq L  \  \mbox{and} \\
      0           && \mbox{otherwise}
    \end{array}
  \right.$ 
  &
  {\it Undamped $L$-grid} scheme on adaptive grids
  \\
   \hline
   \\[-0.2cm]
  $\omega_\ell (v)= \left( \omega_{S} \right)^{succ(v)+1}\ \ \omega_S<1$
  &
  Classic additive multigrid from \cite{Bastian:98:AdditiveVsMultiplicativeMG} where
  coarse grid updates contribute to the fine grid solution with an {\em
  exponential damping}.
  \\
   \hline
   \\[-0.2cm]
  $\omega_\ell (v)= \left( \omega_{S} \right) ^{ (1-1/n) \cdot
  (succ(v)+1)}$
  &
  {\em Transition} relaxation. $n$ is the iteration counter.
  \\
  \hline
 \end{tabular}
\end{center}

\noindent
Schemes that use the same $\omega$ on each and every level (undamped coarse grid
correction) become unstable \cite{Bastian:98:AdditiveVsMultiplicativeMG} for
setups with many levels.
They tend to overshoot.
If only one or two grids are used for the multigrid scheme ($L\in\{2,3\}$), this
overshooting is not that significant and solvers are more robust.
However, we loose multigrid efficiency.
Exponential damping is thus used by most codes.
The coarser the grid the smaller its influence on the actual solution.
This renders an exact coarse grid solve unnecessary.
With the relaxation factors from above all the schemes apply straightforwardly
to dynamically adaptive grids.
Empirically, we observe that undamped schemes outperform their stable
counterparts in the first few iterations. 
The overshooting is not dominant yet. 
We therefore propose a hybrid smoother choice that transitions from an undamped
coarse grid correction into exponential damping.

\subsection{Hierarchical basis and BPX-type solvers}

\begin{figure}[htb]
  \begin{center} 
    \includegraphics[width=0.35\textwidth]{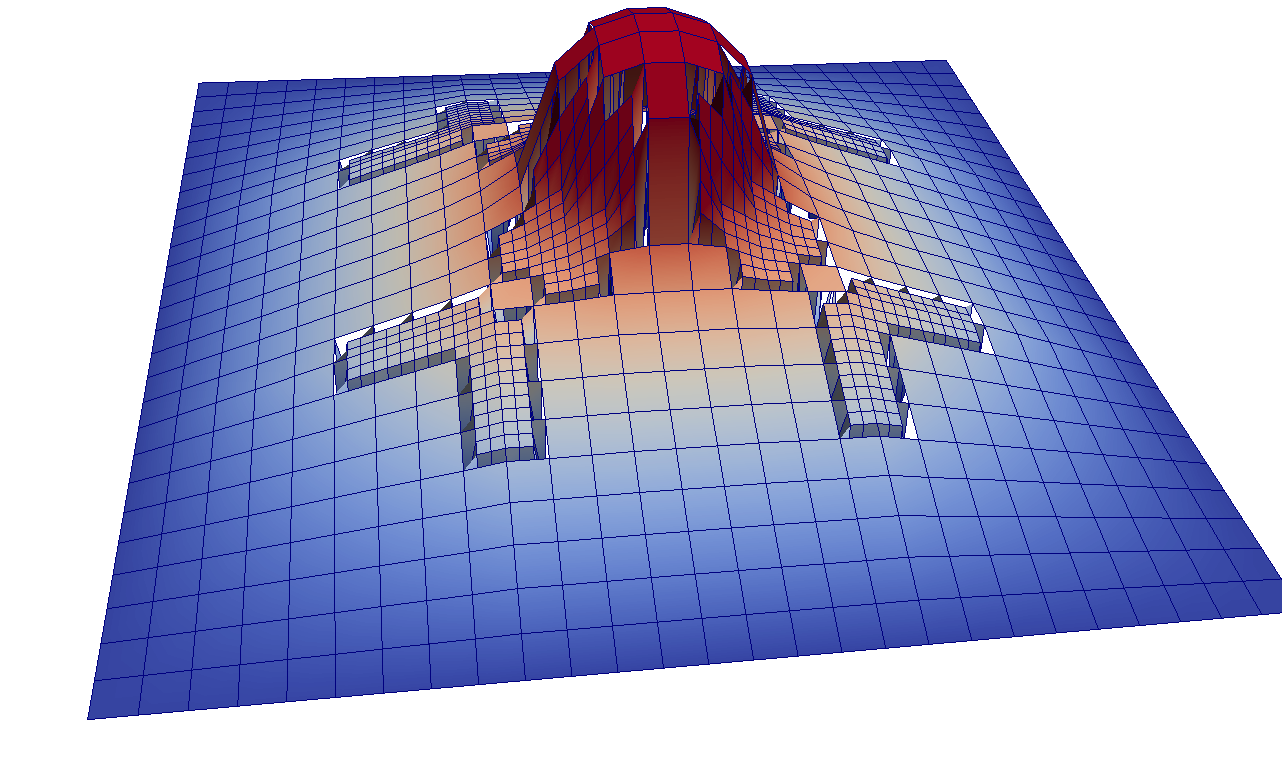}
    \includegraphics[width=0.35\textwidth]{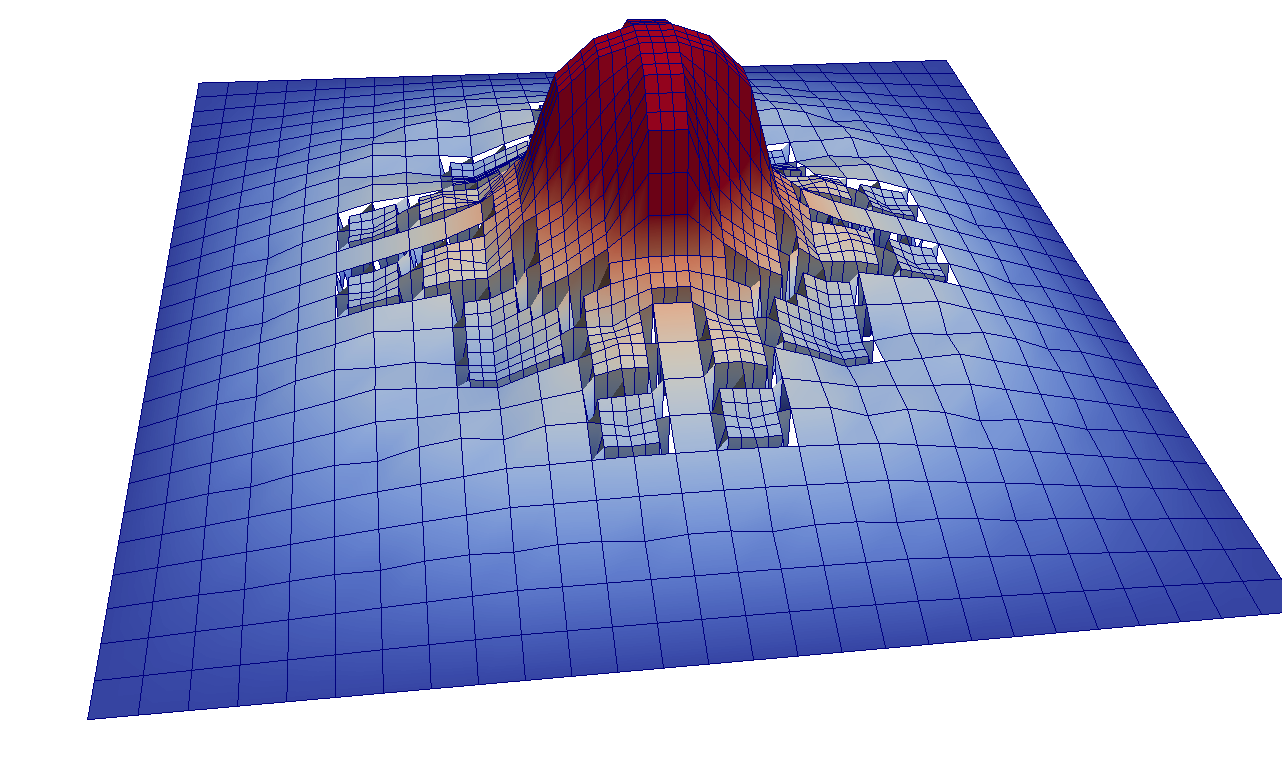}
    \\
    \includegraphics[width=0.35\textwidth]{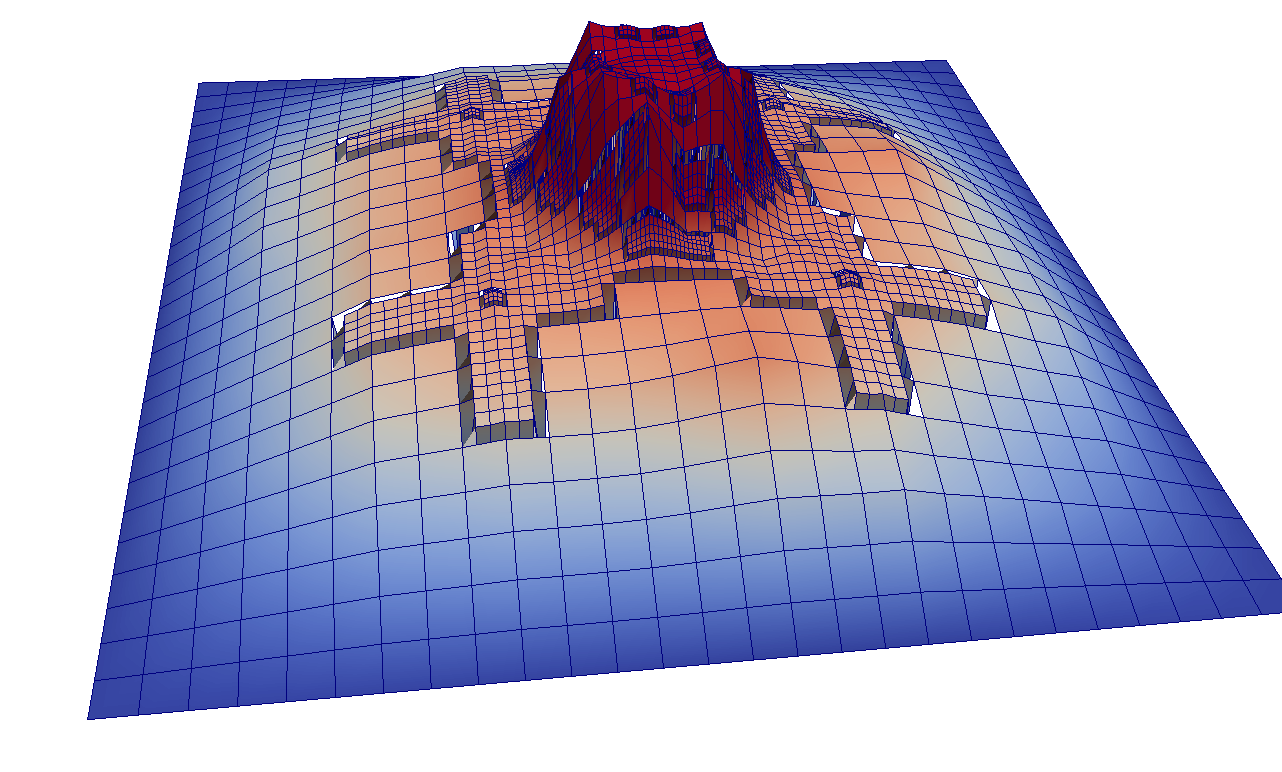}
    \includegraphics[width=0.35\textwidth]{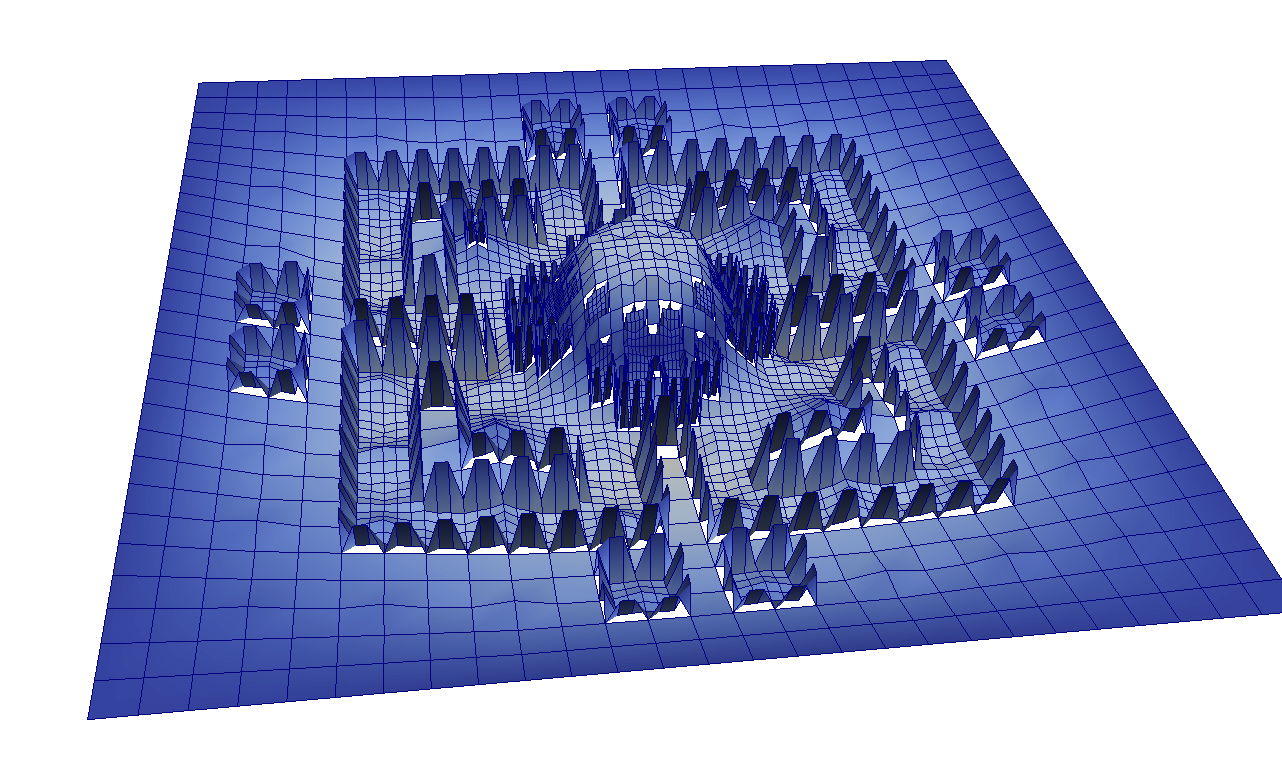}
    \\
    \includegraphics[width=0.35\textwidth]{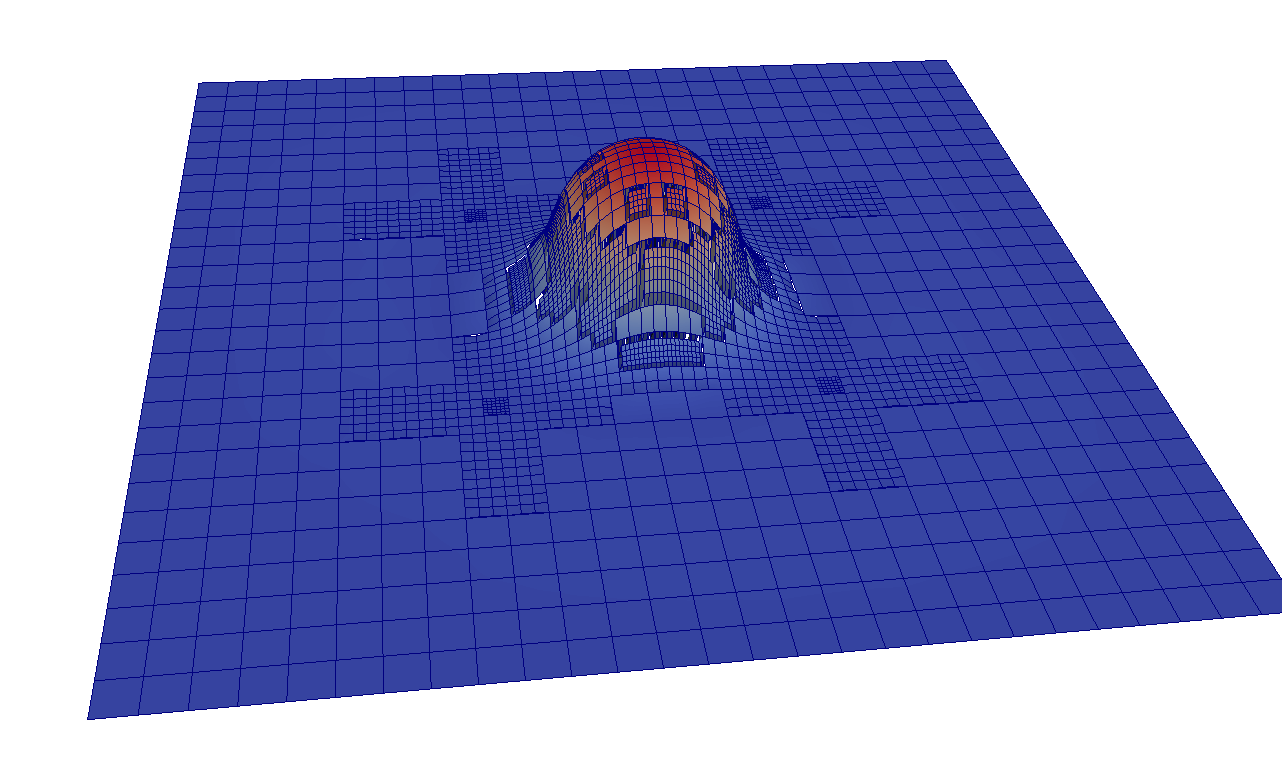}
    \includegraphics[width=0.35\textwidth]{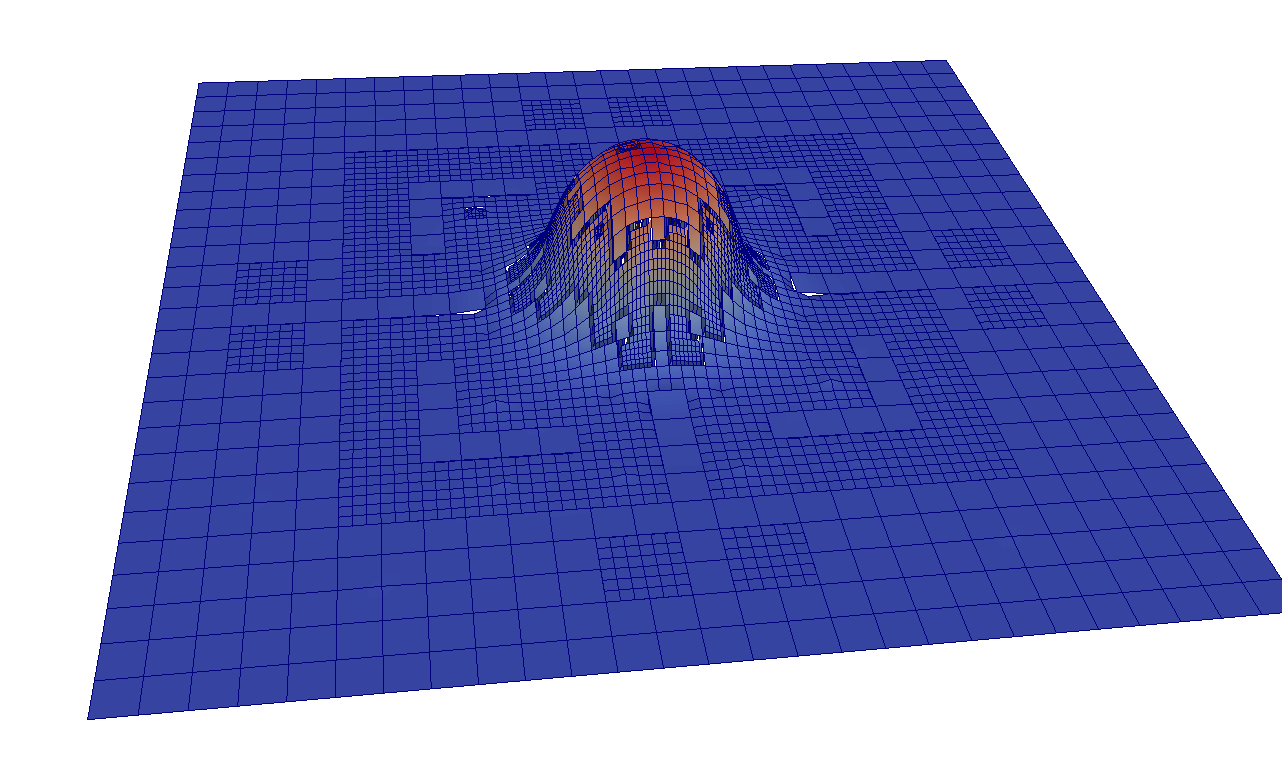}
  \end{center}
  \vspace{-0.4cm}
  \caption{
    Snapshot of the solution of $-\Delta u + 1000\ u = \chi$ with the transition
    scheme of the plain additive multigrid (left) and the hierarchical basis
    approach (right) after 5, 8 and 18 iterations. $\chi$ is a load of one within a circle around the
    domain's centre, i.e.~a characteristic function which can be modelled via a
    Heavyside operator.
    For $-\phi \approx 4000$, the plain additive scheme
    left becomes unstable due to overshooting already visible here.
    The \texttt{hb}-scheme updates per level only additional fine grid points
    compared to the coarser grids and thus is less sensitive to overshooting. 
    Both visualisations set hanging nodes to value zero. 
    The rough components in the pictures thus are visualisation artefacts; no
    real high frequency contributions.
   \label{fig::add-vs-hb}
  }
\end{figure}

While additive multigrid with exponential damping or transition is robust for
the Poisson equation, it runs into instabilities if we encounter a non-zero
$\phi<0$ term in \eqref{eq::pHelmholtz}; despite the fact that the problem
remains well-posed positive-definite on all levels due to the additional minus sign in front of $\phi$.
The solver is sensitive to the reaction term.
Robustness with respect to a reaction term however is mandatory prior to
tackling any ill-definiteness. 

We find the shift $\phi<0$ make the additive multigrid
overshoot on coarser levels, pollute the approximation and introduce a non-local
oscillation in the follow-up iteration.
The overshooting/oscillation typically grows per iteration if the diffusive
operator is not dominating (Figure \ref{fig::add-vs-hb}).
A straightforward fix to this instability is the switch from a
hierarchical generating system into a hierarchical basis
\cite{Griebel:90:HTMM,Griebel:94:Multilevel}.
It is identified by an \texttt{hb-}
prefix from here on.
Following \cite{Bastian:98:AdditiveVsMultiplicativeMG}, such a switch results
from a modification of the generic relaxation parameter into
\begin{equation}
  \omega_\ell (v) \gets \left\{ 
    \begin{array}{rcl}
      0 & \mbox{if} & cPoint(v) \qquad \mbox{and} \\
      \omega_\ell (v) & & \mbox{otherwise.} 
    \end{array}
  \right..
  \label{equation:hb-omega}
\end{equation}

\noindent
We mask out c-points.
Such a modification unfolds a variety of reasonable
and unreasonable smoothing schemes due to the various choices of
$\omega_\ell(v)$ on the right-hand side.
Our numerical results detail this.

While \eqref{equation:hb-omega} with its localised updates---vertices coinciding spatially are updated solely on the
coarsest level---prevents the additive scheme from overshooting too
significantly for $\phi<0$, it comes at the price of a deteriorating convergence
speed. 
It continues to assume a uniform smooth geometric multiscale behaviour of the solution, as 
any unknown update is determined by the update in the point plus the c-point
updates of surrounding vertices.
Increasing absolute values of $\phi$ in combination with non-smooth right-hand
sides however decreases the smoothness of the solution along jumps of
the latter.
This becomes apparent immediately at hands of a gedankenexperiment with a Heavyside $\chi$.
A \texttt{hb-}solver locally overshoots where $\chi$ changes and the overall
approximation starts to creep towards the correct solution due
to local oscillations while non-local oscillations are eliminated.

One fix to this challenge adds an additional $-PI\omega _\ell
S(u_\ell, b_\ell)$ term to all smoothing updates.
This is known as BPX
\cite{Bastian:98:AdditiveVsMultiplicativeMG,Bramble:90:BPX}.
Reiterating through our data dependency analysis, we find that the BPX operator
can not be implemented straightforwardly within Algorithm
\ref{algo::tdadd}---even with the pipelining variables in place---as 
c-point impacts spread to their surrounding through the coarser grids while they
are not altered themselves.

These considerations lead to a BPX FAS in Algorithm
\ref{algo::tdbpx}.
The key idea is to keep $sf$ and $sc$ and to introduce another helper variable
$si$ holding the injected value of a smoother update without any c-point
distinction. 
It is set as soon as we determine the smoother impact.
This impact is discarded for c-points due to
\eqref{equation:hb-omega}.
Finally, the one-sweep realisation modifies the prolongation by adding an
additional $-Psi$ term. 
\eqref{equation:hb-omega} in combination with this term ensures the BPX
inter-level correlation as we have $(id - PI)=0$ on vertices for
which $cPoint$ holds.

\begin{algorithm}[htb]
 \SetAlgoNoLine
    \begin{algorithmic}[1]
       \Function{tdBPX}{$\ell $}
         \State $sc_\ell \leftarrow sc_\ell + P sc_{\ell -1}$ 
         \State \If{not $cPoint(v)$}
         {
           \State \phantom{xx} $sc_{\ell} \leftarrow sc_{\ell} -
           Psi_{\ell-1}$
           \Comment BPX-type modification of fine grid correction
           \label{line:tdbpx:pmodification}
         }         
         \State $u_\ell \leftarrow u_\ell + sc_{\ell} + sf_{\ell}$
         \State $\hat u \leftarrow u_\ell - Pu_{\ell -1}$
         \State
         \If{$\ell < \ell _{max}$} {
           \State \phantom{xx} \Call{tdBPX}{$\ell +1$}
         }
         \State $r_\ell \leftarrow b_\ell - H_\ell u_\ell$
         \State $\hat r_\ell \leftarrow b_\ell - H_\ell \hat u_\ell$
        \State
		\eIf{$cPoint(v)$} {
			\State \phantom{xx}  $sc_\ell (v) \leftarrow 0$
			\Comment Realise \eqref{equation:hb-omega}, i.e.~cancel out
			update}{
			\State \phantom{xx}  $sc_\ell \leftarrow \omega _{\ell} S(u_\ell,b_\ell)$
			\Comment{Anticipate coarse correction}
		}
		\State
         \If{$\ell > \ell _{min}$}{
           \State \phantom{xx} $si_{\ell-1} \leftarrow I\omega _{\ell}
           S(u_\ell,b_\ell)$
             \Comment{Memorise dropped fine grid update}
           \State \phantom{xx} $b_{\ell -1} \leftarrow R \hat r_\ell$
           \State \phantom{xx} $sf_{\ell -1} \leftarrow I\left( sf_\ell +
           sc_\ell \right)$
         }
      \EndFunction
  \end{algorithmic}
  \caption{
    Single-sweep BPX variant realisation incorporating FAS. Invoked by
     \textsc{tdBPX}($\ell _{min}$). We do not rely on
     \eqref{equation:hb-omega} here, i.e.~$\omega $ is $cPoint$-agnostic, as we
     realise the case distinction within the multilevel code.
    \label{algo::tdbpx}
  }
\end{algorithm}

We emphasise that \eqref{equation:hb-omega} follows the same pipelining idea we
introduced for the additive scheme and at the same time renders the storage of a
fine grid correction $sf$ unnecessary.
We could add any fine smoothing impact directly onto $sc$ and at the same time
skip the injection of $sf_\ell + sc_\ell$.
Such a BPX realisation uses the same data layout as the additive multigrid.
No $sf$ is to be held, but we need an additional $si$.
This preserves the number of variables.
The reason for this possibility results from the fact that the coarsest
vertex in $\Omega_h$ holds the valid nodal representation of the
solution in Algorithm
\ref{algo::tdbpx}.
All finer vertices at the same location are copies.
We preserve $sf$ in the presented code to emphasise the closeness to the
additive scheme.
While this wastes one entry per vertex, it might
make sense to preserve the fine grid injection and thus to allow
BPX's fine grid update to change $cPoints$ as well:
in applications with non-trivial boundary conditions, those sometimes are
simpler to evaluate in a nodal setting rather than a hierarchical basis.
The injection then automatically reconstructs the data consistency on all
levels.

\subsection{Feature-based dynamic adaptivity}

All algorithmic ingredients introduced are well-suited for any arbitrary 
adaptivity.
Throughout the top down steps, we may add any number of vertices as
long as we initialise their hierarchical surplus with 0 and prolong the solution
$p$-linearly.
They then seamlessly integrate into the solver's workflow. 
For faster convergence, higher order interpolation might be advantageous.
Discarding vertices is permitted throughout the backtracking, i.e.~the steps up
in the grid hierarchy.
The FAS ensures that all solution information is already available on the
coarsened mesh.
Multilevel meta information such as $cPoint$ or $succ(v)$ can be computed
on-the-fly throughout the tree traversal's backtracking.
It then automatically adopts to updated refinement patterns.

In the present paper, we stick to simple feature-based refinement and specify
both regular grids and adaptive grids through a maximal and
minimal mesh size $h_{max}$ and $h_{min}$.
We start from a grid satisfying $h_{max}$ and, in parallel to
the smoothing steps, measure the value $s = \max _{d \in \{1,\ldots,p\}} 
|\Delta _d u|$ per vertex on each grid level.
Per step, we refine the 10\% of the vertices with the highest $s$ value, while
we erase the 2\% vertices with the smallest $s$ value.
These values are shots from the hip but empirically show reasonable
grid refinement structures.
They yield a grid that adopts itself to solution characteristics.
We realise feature-based adaptivity.
More sophisticated schemes with proper error estimators are out of scope.

To avoid global sorting, we split up the whole span of $s$
values into 20 subranges and bin vertices into these ranges.
All vertices fitting into a fixed number of bins holding the largest $s$
values are refined.
This fixed number is selected such that the 10\% goal is met as close as
possible.
Erasing works analogously with the bins with the smallest $s$ values.
Refining and erasing are vetoed in two cases:
if maximal or minimal mesh constraints would be violated;
or if residual divided by diagonal element exceeds $10^{-2}$.
In the latter case, the vertex is still subject to major updates, i.e.~has not
`converged', and we postpone a refinement or coarsening.

The interplay of the feature-based refinement with the creation of a FMG cycle
is detailed in Remark \ref{remark:fmg:cycle}.
We note that our criterion yields different grid refinement patterns
for different solvers as we integrate refinement into the solve (Figure
\ref{fig::add-vs-hb}).
This advocates for better criteria and renders the present experiments 
feasibility studies.

  \section{Solver properties}
\label{section:properties}

In the following, we validate fundamental properties of the algorithm.
We prove its correctness. 
Let the iterates of an unknown $x$ be $x^{(n)},x^{(n+1)},\ldots$
As $sc$ is updated twice per iteration we distinguish $sc^{(n)},
sc^{(n+0.5)}$ and $sc^{(n+1)}$.

\begin{lemma}
  Whenever we evaluate $r_\ell^{(n+1)} = b_\ell - H_\ell u_\ell^{(n)}$, we
  have
  \[
    \forall n: \qquad  
    u_{\ell-1}^{(n)} = I u_{\ell}^{(n)} \qquad \forall \ell _{min} < \ell
    \leq \ell _{max}.
  \]
\end{lemma}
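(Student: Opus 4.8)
The plan is to prove the invariant by induction on the iteration counter $n$, exhibiting at each step that the update rules of Algorithm~\ref{algo::tdadd} transform the injection relation on iterate $n$ into the same relation on iterate $n+1$. The base case is the startup configuration: all hierarchical surpluses are initialised to zero when a cell is created, and homogeneous Dirichlet conditions give $u_0 \equiv 0$, so $u_\ell^{(0)} = P u_{\ell-1}^{(0)}$ level by level, which is equivalent to $u_{\ell-1}^{(0)} = I u_\ell^{(0)}$ because $IP = id$ on the coinciding (c-point) vertices and $I$ simply reads off those coordinates. The subtle point already in the base case is that $\hat u_\ell = (id - PI)u_\ell$ vanishing is exactly the statement that the nodal value equals the prolonged coarse value on c-points, so the surplus-zero initialisation is precisely the injection invariant in disguise.

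For the inductive step I would track a single vertex $v$ that is a c-point on level $\ell$, i.e.\ it coincides spatially with a vertex on level $\ell-1$, and compare the two quantities the algorithm writes into $u_\ell(v)$ and $u_{\ell-1}(v)$ during the traversal that produces iterate $n+1$. On level $\ell-1$ the algorithm executes $u_{\ell-1} \gets u_{\ell-1} + sc_{\ell-1} + sf_{\ell-1}$; on level $\ell$ it executes $u_\ell \gets u_\ell + sc_\ell + sf_\ell$ after first performing $sc_\ell \gets sc_\ell + P sc_{\ell-1}$. So I must show that the increment applied to $u_\ell$ at the c-point equals the increment applied to $u_{\ell-1}$. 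The increment to $u_{\ell-1}(v)$ is $sc_{\ell-1}(v) + sf_{\ell-1}(v)$, where $sf_{\ell-1}$ was set on the previous traversal by $sf_{\ell-1} \gets I(sf_\ell + sc_\ell)$ — i.e.\ it is exactly the injected sum of the finer-grid smoothing and correction contributions — while $sc_{\ell-1}(v)$ is the Jacobi update computed on level $\ell-1$. The increment to $u_\ell(v)$ is $(sc_\ell(v) + P sc_{\ell-1}|_v) + sf_\ell(v)$; here $P sc_{\ell-1}$ evaluated at a c-point equals $sc_{\ell-1}(v)$ itself (the $p$-linear prolongation reproduces the coarse value at coinciding nodes), $sc_\ell(v)$ is the level-$\ell$ Jacobi update, and $sf_\ell(v)$ holds the contributions injected from still-finer levels. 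The key algebraic identity to verify is therefore that $sf_{\ell-1}(v) = sf_\ell(v) + sc_\ell(v)$ at c-points, which is literally the line $sf_{\ell-1} \gets I(sf_\ell + sc_\ell)$ read at iterate $n$; matching the two increments then reduces to $sc_{\ell-1}(v) + \big(sf_\ell(v)+sc_\ell(v)\big) = sc_{\ell-1}(v) + sf_\ell(v) + sc_\ell(v)$, which holds trivially, establishing $u_\ell^{(n+1)}(v) - u_{\ell-1}^{(n+1)}(v) = u_\ell^{(n)}(v) - u_{\ell-1}^{(n)}(v) = 0$.

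I would organise the write-up by first stating precisely the order of reads and writes along one traversal (this is where the half-iteration bookkeeping of $sc^{(n)}$, $sc^{(n+0.5)}$, $sc^{(n+1)}$ matters: the $sc$ used in the $u$-update is the one bookmarked on the \emph{previous} sweep, while the one recomputed near the end of the sweep is $sc^{(n+1)}$ and is not consumed until the next sweep), then propagating the invariant up from $\ell_{min}$ to $\ell_{max}$ within a fixed sweep. The evaluation point named in the lemma, $r_\ell^{(n+1)} = b_\ell - H_\ell u_\ell^{(n)}$, occurs in the code \emph{after} the $u$-update line and the recursive descent but still reads the post-update $u_\ell$; since the $u$-update has just restored the injection relation for that sweep, the residual and hierarchical residual are evaluated on consistent data, which is the whole point. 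The main obstacle I anticipate is purely notational discipline: keeping straight which copy of $sc$ is live at each line, confirming that the descent into finer levels in the recursive call does not retroactively disturb $u_\ell(v)$ (it does not — finer sweeps only write $sf$ and $si$ on level $\ell$, never $u_\ell$ directly), and handling c-points that are hanging on the finer level, for which $sf_\ell$ simply was never written a nonzero finer contribution. Once those cases are pinned down, the identity itself is immediate from reading off the two assignment lines $sf_{\ell-1} \gets I(sf_\ell + sc_\ell)$ and $sc_\ell \gets sc_\ell + P sc_{\ell-1}$.
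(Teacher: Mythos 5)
Your proposal is correct and follows essentially the same argument as the paper: both proceed by induction on the iteration counter $n$, both rely on the assignment $sf_{\ell-1} \gets I(sf_\ell + sc_\ell)$ as the key algebraic ingredient, and both cancel the prolongated coarse correction against itself at c-points via $IP=id$. The paper carries the computation out once in operator form while you verify it pointwise at a c-point $v$ and phrase the conclusion as ``both levels receive identical increments,'' but that is only a presentational difference (and your passing reference to $si$ is a slip --- that variable belongs to the BPX variant in Algorithm~\ref{algo::tdbpx}, not to Algorithm~\ref{algo::tdadd} which the lemma concerns).
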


\begin{proof}

At construction, $u_\ell^{(0)} = I u_{\ell+1}^{(0)}$. 
The proof then relies on a simple induction over $n$:
\begin{eqnarray*}
  u_{\ell-1}^{(n+1)} 
    & = & u_{\ell-1}^{(n)} + sf_{\ell-1}^{(n+1)} + sc_{\ell-1}^{(n+1)} \\
    & = & Iu_\ell^{(n)} + I \left( sf_{\ell}^{(n+1)} + sc_{\ell}^{(n+1/2)}
    \right) + sc_{\ell-1}^{(n+1)} 
    \\
    & = & Iu_\ell^{(n)} + I sf_{\ell}^{(n+1)} + I \left( sc_{\ell}^{(n+1)}
    - P sc_{\ell -1}^{(n+1)} \right) + sc_{\ell-1}^{(n+1)} 
    \\
    & = & Iu_\ell^{(n)} + I sf_{\ell}^{(n+1)} + I sc_{\ell}^{(n+1)}
    + (id-IP) sc_{\ell-1}^{(n+1)} 
    \\
    & = & I u_{\ell}^{(n)}
\end{eqnarray*}

\noindent
where we apply the induction hypothesis on $u_\ell^{(n-1)}$, the algorithm's
update operations on the remaining operators, and exploit $IP=id$ for c-points.
Dynamic adaptivity has to preserve the construction constraint.
\end{proof}

\noindent 
The smoother here is a black-box and we do not make
any assumption about the correctness of the solved equation systems. Our
notation of $I, P$ and $R$ further is generic, i.e.~$I^k$ indicates that $I$ is
applied multiplicatively $k$ times in a row with an $I$ fitting to the preimage.

The lemma implies that transitions between grid resolutions require no special treatment: 
As each vertex on each grid level holds a valid
representation of the solution, we can apply the same stencils irrespective
whether they overlap a refined region or are fine grid stencils.
Only hanging nodes have to be interpolated $p$-linearly from the coarser grid.

\begin{lemma}
  The operators from Algorithm \ref{algo::tdadd} realise a FAS.
\end{lemma}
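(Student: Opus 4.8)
The plan is to show that Algorithm~\ref{algo::tdadd} reproduces, up to the bookkeeping in the helper variables, exactly the correction-equation identity \eqref{equation:multigrid:HTMG} that defines the HTMG/FAS scheme of \cite{Griebel:90:HTMM}. Concretely, I would prove that at the moment the residuals are computed in \textsc{tdAdd}, the quantities stored satisfy $b_{\ell-1} = R\hat r_\ell = R(\chi_{\ell} - H_{\ell}\hat u_{\ell})$ with $\hat u_\ell = (id-PI)u_\ell$ and $u_{\ell-1} = Iu_\ell$, which is precisely the FAS right-hand side. The previous lemma already gives me the injection invariant $u_{\ell-1}^{(n)} = Iu_\ell^{(n)}$, so the bulk of the work is to track the definition of $\hat u$, $\hat r$ and $b$ through one sweep.

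First I would fix notation: recall from the algorithm that $\hat u_\ell \gets u_\ell - Pu_{\ell-1}$ is set on the way down, that $\hat r_\ell \gets b_\ell - H_\ell\hat u_\ell$ and $r_\ell \gets b_\ell - H_\ell u_\ell$ are set on the way up, and that $b_{\ell-1}\gets R\hat r_\ell$ closes the level transition. Using $u_{\ell-1}=Iu_\ell$ from the previous lemma together with $IP=id$ on c-points (so that $PIu_\ell$ is the natural coarse representation lifted back), I would identify $\hat u_\ell = (id-PI)u_\ell$, matching the hierarchical surplus of \eqref{equation:multigrid:HTMG}. Then I would substitute into $\hat r_\ell = b_\ell - H_\ell\hat u_\ell$ and apply $R$, obtaining $b_{\ell-1} = R(b_\ell - H_\ell\hat u_\ell)$, which I compare termwise with the chain of equalities in \eqref{equation:multigrid:HTMG}: the defining FAS equation $H_{\ell-1}(u_{\ell-1}+e_{\ell-1}) = R\hat r_\ell$, with $\chi_{\ell-1} := b_{\ell-1}$ playing the role of the modified coarse right-hand side and $H_{\ell-1}=RH_\ell P$ by the rediscretisation/Galerkin identity established earlier ($H_\ell = RH_{\ell+1}P$). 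The final piece is the prolongation step: I would check that the update $u_\ell \gets u_\ell + sc_\ell + sf_\ell$ together with $sc_\ell \gets sc_\ell + Psc_{\ell-1}$ implements $u_\ell \gets \hat u_\ell + Pu_{\ell-1}$ of the HTMG prolongation formula, so that after the sweep the nodal solution is reconstructed from surplus plus injected coarse solution exactly as FAS prescribes.

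The main obstacle I anticipate is the half-step bookkeeping: $sc$ carries a Jacobi update computed in the \emph{previous} traversal ($sc^{(n+1/2)}$ in the notation set up before the lemmas), and the additive scheme's defining feature is that this update is applied a half iteration late. So to say "the operators realise a FAS" I must be careful about \emph{when} the invariant $u_{\ell-1}=Iu_\ell$ and the surplus/residual relations hold relative to the shifted updates, and argue that the postponement does not break the FAS structure — only the order in which additive corrections are accumulated, not their content. This is exactly the content the paper flagged as the potential "stability problem / divergence between levels"; for the correctness claim I would argue that the invariant is re-established every time residuals are evaluated (the previous lemma), hence the FAS identities hold at every residual evaluation, which is all that is needed. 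A secondary, more routine obstacle is handling dynamic refinement: newly created vertices have $\hat u \gets 0$, which forces $u = Pu_{\ell-1}$ there, i.e.\ the surplus is trivially consistent, so the FAS relations extend to the refined grid without special treatment — I would note this in one sentence rather than belabour it. The diffusive-operator-plus-mass rediscretisation being Galerkin only up to the boundary/varying-$\phi$ corrections means $H_{\ell-1}=RH_\ell P$ holds exactly in the interior and approximately near the boundary; I would phrase the FAS claim for the rediscretised operators themselves (so the identity is exact by construction of Algorithm~\ref{algo::tdadd}), and merely remark that this coincides with the Galerkin FAS where $\phi$ and the complex scaling are invariant.
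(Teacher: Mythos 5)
Your proposal is correct and follows essentially the same route as the paper's proof: both expand $\hat u_\ell = u_\ell - Pu_{\ell-1}$, invoke the injection invariant of the preceding lemma and the Galerkin identity $RH_{\ell+1}P = H_\ell$ to show that $R\hat r_{\ell+1} = Rr_{\ell+1} + H_\ell Iu_{\ell+1}$, which is exactly the FAS right-hand side. The extra discussion you add about the half-step bookkeeping, the prolongation-side reconstruction, dynamic refinement, and the approximate Galerkin property near boundaries is sound and matches remarks the paper places after the proof rather than inside it; one small inaccuracy is that identifying $\hat u_\ell = (id-PI)u_\ell$ needs only the injection $u_{\ell-1}=Iu_\ell$, not $IP=id$.
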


\begin{proof}
Each level's smoother tackles a correction equation
of the form
\[
  H_\ell \left( u_\ell + Iu_{\ell+1} \right) = R r_{\ell+1} +
  H_\ell Iu_{\ell+1}.
\]
The left-hand side has been studied before.
So we follow \cite{Griebel:94:Multilevel} and write 
\begin{eqnarray*}
  R \hat r_{\ell+1} 
    & = & R \left( b_{\ell+1} - H_{\ell +1} \hat u_{\ell+1} \right)
    \\
    & = & R \left( b_{\ell+1} - H_{\ell +1} (u_{\ell +1}-Pu_{\ell}) \right)
    \\
    & = & R r_{\ell+1} + RH_{\ell+1}Pu_{\ell} = R r_{\ell+1} +
    RH_{\ell+1}PIu_{\ell +1} \quad \mbox{(cmp.~Lemma 1)} \\
    & = &     
    R r_{\ell+1} + H_\ell Iu_{\ell+1}.
\end{eqnarray*} 
\end{proof}

\noindent
The proof holds for space-independent Helmholtz shifts $\phi$ and Dirichlet and Neumann boundary
conditions.
As mentioned before, absorbing boundary layers with varying complex scaling or
jumping material coefficients violate the equivalence of a Galerkin multigrid
operator and plain rediscretisation.
We have to assume that these are the regions many adaptivity criteria refine
towards to.
For these, the equations above comprise an additional error term if we do not
apply operator-dependent grid transfer operators.
The multigrid equations are perturbed.
Based on empirical evidence, we assume this perturbation to be small.
However, we have to expect a small deterioration of the multigrid
performance.
Without pollution, the prolongation of 
\[
  u_\ell^{(n+1)} - u_\ell^{(n)} = \omega _{\ell -1} S(u_\ell
  ^{(n)},b_\ell) = cs _\ell^{(n+1)}
\]
to level $\ell+1$ equals the correction term in
multigrid.

\begin{remark}
The lemma extends naturally to Algorithm \ref{algo::tdbpx}, i.e.~the BPX
variant.
\end{remark}

\begin{theorem}
  The additive top-down FAS from Algorithm \ref{algo::tdadd} realises an
  additive multigrid algorithm.
\end{theorem}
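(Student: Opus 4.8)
The plan is to show that the iterates $u_\ell^{(n)}$ produced by Algorithm~\ref{algo::tdadd} coincide, after matching up the half-step bookkeeping, with the iterates produced by the textbook additive multigrid of Algorithm~\ref{algo::add} under the relaxation choices of the $\omega$-table. The two previous lemmas already do most of the work: Lemma~1 guarantees that the injection constraint \eqref{equation:multigrid:injection} holds whenever a residual is formed, so every $H_\ell u_\ell$ appearing in \textsc{tdAdd} is evaluated on a genuinely FAS-consistent state; Lemma~2 identifies the right-hand side $b_{\ell-1}=R\hat r_\ell$ and the correction equation $H_\ell(u_\ell+Iu_{\ell+1}) = Rr_{\ell+1}+H_\ell Iu_{\ell+1}$ with the HTMG reformulation \eqref{equation:multigrid:HTMG} of the classical coarse-grid correction. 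What remains is to peel away the pipelining: to show that $sc$ and $sf$ are nothing but delayed carriers of the smoother update and its prolongation, so that the update actually applied to $u_\ell$ in line~4 of \textsc{tdAdd} over one cycle equals $\omega_S S(u_\ell,\chi_\ell) + \omega_{cg} P u_{\ell-1}^{\text{corr}}$, i.e.\ exactly the sum of the two contributions in Algorithm~\ref{algo::add}.

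First I would fix the dictionary between the two algorithms: $\chi_\ell \leftrightarrow b_\ell$, the textbook residual $r_\ell\leftrightarrow r_\ell$, and the textbook coarse-grid correction $e_{\ell-1}$ with the quantity carried by $sc_{\ell-1}$ from one traversal to the next. Then I would unroll one full sweep of \textsc{tdAdd}, tracking for a single vertex $v$ on level $\ell$ the value of $sc_\ell^{(n)}$ (set at the end of traversal $n-1$ as $\omega_\ell S(u_\ell^{(n-1)},b_\ell)$), observe that at the top of traversal $n$ it absorbs the prolonged coarse correction $P sc_{\ell-1}$, and that line~4 then injects $sc_\ell+sf_\ell$ into $u_\ell$. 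Using the invariant $sf_\ell = I(sf_{\ell+1}+sc_{\ell+1})$ from the innermost branch, a downward induction on $\ell$ shows $sf_\ell$ accumulates precisely the smoothing updates of all strictly finer levels that have not yet been fed back — the "amortised one traversal per update" statement — and hence that the composite update to $u$, summed consistently across levels, reproduces the additive superposition $\sum_{k\le \ell}(\text{prolongated level-}k\text{ smoother update})$ that Algorithm~\ref{algo::add} performs. Combining this with Lemma~2's identification of the correction equation closes the correspondence for the regular-grid case; the $cPoint$/$succ(v)$ machinery and the $\omega_\ell$-table then only re-weight the level-$k$ contributions, which is exactly the freedom Algorithm~\ref{algo::add} exposes through $\omega_S$ and $\omega_{cg}$ (cf.\ \cite{Bastian:98:AdditiveVsMultiplicativeMG}).

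The main obstacle, I expect, is the half-iteration shift: in \textsc{tdAdd} the smoother update computed during traversal $n$ is not applied until traversal $n+1$, and $sc$ genuinely carries three distinct values $sc^{(n)}, sc^{(n+0.5)}, sc^{(n+1)}$ within a single logical iteration (as flagged in the preamble to the lemmas). Making the bookkeeping honest therefore requires choosing the right "phase" at which to compare the two algorithms — most cleanly, comparing $u_\ell^{(n)}$ of \textsc{tdAdd} immediately before the residual evaluation with the state of Algorithm~\ref{algo::add} after its smoothing-plus-correction on iteration $n$ — and then verifying that the deferred application does not alter the sequence of states, only their labelling. A secondary technical point is adaptivity: one must note, as in the Lemma~1 proof, that newly created vertices enter with hierarchical surplus $0$ and $sc=sf=0$, so the equivalence is preserved across refinement/erasing steps; and one must restrict, as Lemma~2 already does, to space-independent $\phi$ and Dirichlet/Neumann conditions, since otherwise rediscretisation only approximates the Galerkin coarse operator and the correspondence becomes an approximate one. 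With those caveats stated, the theorem follows.
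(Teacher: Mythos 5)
Your overall strategy---unroll the pipelining and show the effective update to $u_\ell$ over one sweep matches the smoother-plus-prolonged-correction of Algorithm~\ref{algo::add}---is the same as the paper's, and your observation that the half-iteration shift means one must compare the algorithms at the right ``phase'' is exactly the bookkeeping issue the paper resolves by working backwards from $u_\ell^{(n+1)}$. However, you put the weight on the wrong helper variable. You argue that a downward induction on the invariant $sf_\ell = I(sf_{\ell+1}+sc_{\ell+1})$ shows $sf_\ell$ accumulates the fine-level smoothing updates, ``and hence'' the composite update reproduces the additive superposition $\sum_{k\le\ell} P^{\ell-k}\,\omega_k S_k$. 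That ``hence'' does not follow: $sf$ carries injected information \emph{upward} for FAS consistency (this is Lemma~1's territory, already established), whereas the additive-multigrid content---the chain of prolonged coarse corrections---is carried entirely by $sc$ through the accumulating assignment $sc_\ell \leftarrow sc_\ell + Psc_{\ell-1}$ in the top-down sweep. On the finest level, $sf_{\ell_{max}}$ is never written to and so is identically zero; the paper's proof exploits precisely this to drop $sf$ immediately, study the one-grid case ($sc_{\ell-1}\equiv 0$, update $= \omega_\ell S$), then the two-grid case (update $= \omega_\ell S + P\omega_{\ell-1}S$), and close by induction on levels.

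A second, smaller gap: you gesture at the $\omega$ dictionary but do not pin it down. The proof needs the explicit correspondence $\omega_\ell = \omega_S\cdot\omega_{cg}^{\,l}$ with $l = \ell_{max}-\ell$, because Algorithm~\ref{algo::add} applies $\omega_{cg}$ once per prolongation step whereas Algorithm~\ref{algo::tdadd} bakes the full geometric factor into the level-local $\omega_\ell$; without stating this, the equality of iterates is not actually established. To repair your argument, replace the $sf$-induction with a forward trace of the $sc$ chain down the tree (showing $sc_{\ell_{max}}$ after line~2 equals $\sum_{k=\ell_{min}}^{\ell_{max}} P^{\ell_{max}-k}\,\omega_k S_k$), note $sf_{\ell_{max}}\equiv 0$, and then fix the $\omega_\ell$ so that this sum matches the nested damping of Algorithm~\ref{algo::add}; Lemmas~1 and~2 take care of the coarse-level right-hand sides as you already observe.
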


\begin{proof}
We compare the algorithm to additive blueprint in Algorithm \ref{algo::add} and
study one iteration of the scheme.
\begin{itemize}
  \item We first study the one-grid problem with $\ell _{min}=\ell _{max}$. We
  further focus on $u_\ell ^{(n+1)}$, i.e.~work backwards from the update of
  this unknown.
  \begin{eqnarray*}
    u_\ell ^{(n+1)} & = & u_\ell ^{(n)} +  Psc_{\ell-1} ^{(n+1)} + sc_\ell
    ^{(n+1)} +  sf_\ell ^{(n+1)}
    \\
      & = & u_\ell ^{(n)} +  sc_\ell ^{(n+1)} +  sf_\ell ^{(n+1)} \qquad
      \mbox{as } sc_{\ell -1} \equiv 0
    \\
      & = & u_\ell ^{(n)} +  \omega _\ell S(u_\ell ^{(n)},b_\ell) +  sf_\ell
      ^{(n+1)}
      \qquad
  \end{eqnarray*}
  $sf_\ell$ is never modified which closes this step.
  \item We next switch to a two-grid problem with $\ell \equiv \ell_{max}$ and
  $\ell-1 \equiv \ell_{min}$.
  \begin{eqnarray*}
    u_\ell ^{(n+1)} & = & u_\ell ^{(n)} +  Psc_{\ell-1} ^{(n+1)} + sc_\ell
    ^{(n+1)} +  sf_\ell ^{(n+1)}
    \\
      & = & u_\ell ^{(n)} +  Psc_{\ell-1} ^{(n+1)} + sc_\ell ^{(n+1)}
      \qquad \mbox{induction} 
    \\
      & = & u_\ell ^{(n)} +  Psc_{\ell-1} ^{(n+1)} + \omega _\ell S(u_\ell
      ^{(n)},b_\ell)
    \\
      & = & u_\ell ^{(n)} +  P\left( \omega _{\ell -1}
      S(u_\ell ^{(n)},b_\ell) \right) + \omega _\ell
      S(u_\ell ^{(n)},b_\ell)
  \end{eqnarray*}
\end{itemize}
The proof follows from induction on the grid levels if we choose 
$\omega _{\ell} = \omega _S$ on the finest level and $\omega _{\ell} = \omega _S
\cdot \omega _{cg}^l$ with $l$ being the difference of the current level to the
finest level. The latter is an attribute that can be computed on-the-fly
throughout the bottom-up steps of the algorithm within the spacetree.
\end{proof}

\begin{remark}
We assume that a theorem for the BPX solver from Algorithm \ref{algo::tdbpx} is
proven analogously.
\end{remark}

  \section{Results}
\label{section:results}

Our results split into five parts. 
First, we study the convergence behaviour of the additive multigrid
variants for a simple Poisson equation.
This validates the algorithmic building blocks at hands of a well-posed setup
and yields insight into the convergence efficiency.
Second, we switch to Helmholtz problems with $\phi<0$ in
\eqref{eq::pHelmholtz}.
This reveals the shortcomings of the additive scheme compared to a 
hierarchical basis approach that arises naturally from our chosen data
structures.
Third, we study Helmholtz problems with $\phi>0$, i.e.~the difficult case of 
ill-conditioned problems.
This quantifies the efficiency and robustness of the proposed solution with
complex grid rotation.
Fourth, we apply our toolset on the motivating scattering example.
This validates that the approach is well-suited to tackle varying material
parameters $\phi$.
Finally, we study the efficiency of the proposed solver regarding hardware
characteristics.
For the present perfectly parallel setup, this notably has to
focus on vectorisation and memory access efficiency.
The latter case studies are conducted either on a local workstation with
Sandy Bridge-EP Xeon E5-2650 processors running at 2.0 GHz or on a Xeon Phi
5110P accelerator running at 1,053 GHz.
The latter is a pioneer of future manycore architectures delivering performance
to a significant extent through vectorisation.
At the same time, it is well-known to be sensitive to proper memory layout and
access \cite{Reinders:15:ProgrammingPearls}. 
Statements on this machine thus facilitate an extrapolation to upcoming hardware
generations.
Our codes were translated with the Intel compiler 15.0.1, while we used the
Likwid tools \cite{Treibig:10:Likwid} to obtain statements on hardware counters.

Global quantities are specified over all fine grid vertices, i.e.~a subset of
$\mathbb{V}$; either in the maximum norm, the standard Euclidean norm or
\[
 | x |^2_h = \sum _i |h_i|^d |x_i|^2,
\]
where each entry of the input vector is scaled by the volume of one spacetree
cell of the corresponding mesh level.
While the $_h$-norm
is an Euclidean inner product norm (cmp.~(1.3.6)
from \cite{Trottenberg:01:Multigrid}) anticipating nonuniform mesh elements and
thus allows us to compare adaptive grids with each other, we emphasise that it is not an exact equivalent of the continuous L2 norm as small slices of the domain along adaptivity
boundaries are integrated several times due to our hierarchical ansatz space.
For regular grids, it is exact.
The norm allows us to compare
grids of different resolution \cite{Trottenberg:01:Multigrid} or adaptivity
patterns with each other.
We thus have to expect minor peaks in the residual whenever
the grid is dynamically refined.

\begin{eqnarray}
  \chi (x)& = & d \pi ^2 \cdot \prod _{i=0,\ldots, d} \sin (\pi x_i)
    \qquad \mbox{and}  
    \label{eq:results:sin} \\
  \chi(x) & = & 
    \begin{cases}
      1, \quad \text{if }\sum_{j=1}^{p}(x_j-\frac{1}{2})^2 < 0.1^2, \\
      0, \quad \text{elsewhere.}
    \end{cases} 
    \label{eq:results:helmhomo} 
\end{eqnarray}

\noindent
act as artificial benchmark problems before we switch to realistic setups in
Section \ref{section:results:gaussian}.
Homogeneous Dirichlet boundary conditions are supplemented for the benchmarks.

\subsection{Poisson problems}

\begin{figure}
  \begin{center} 
  \includegraphics[width=0.45\textwidth]{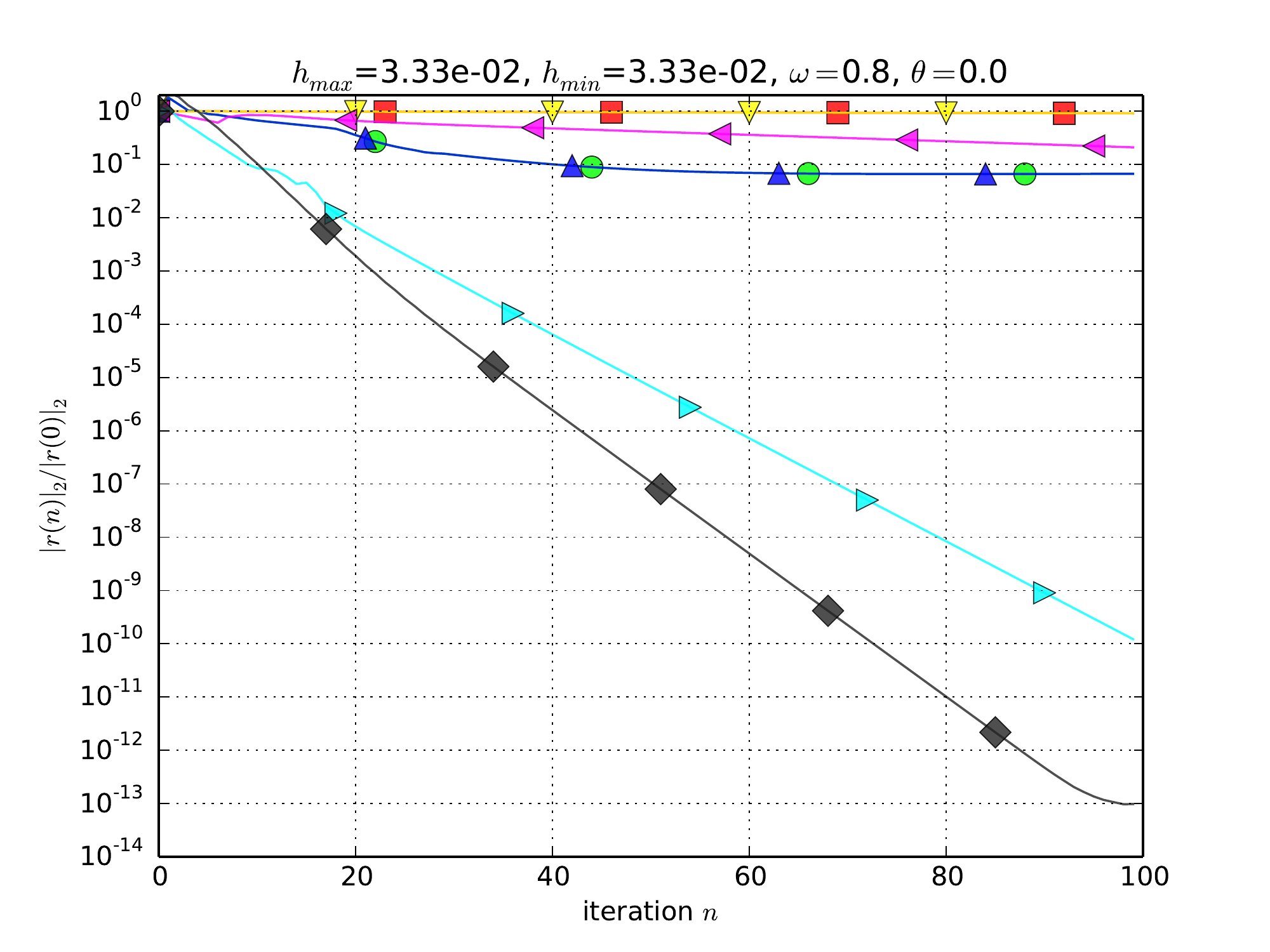}
  \includegraphics[width=0.45\textwidth]{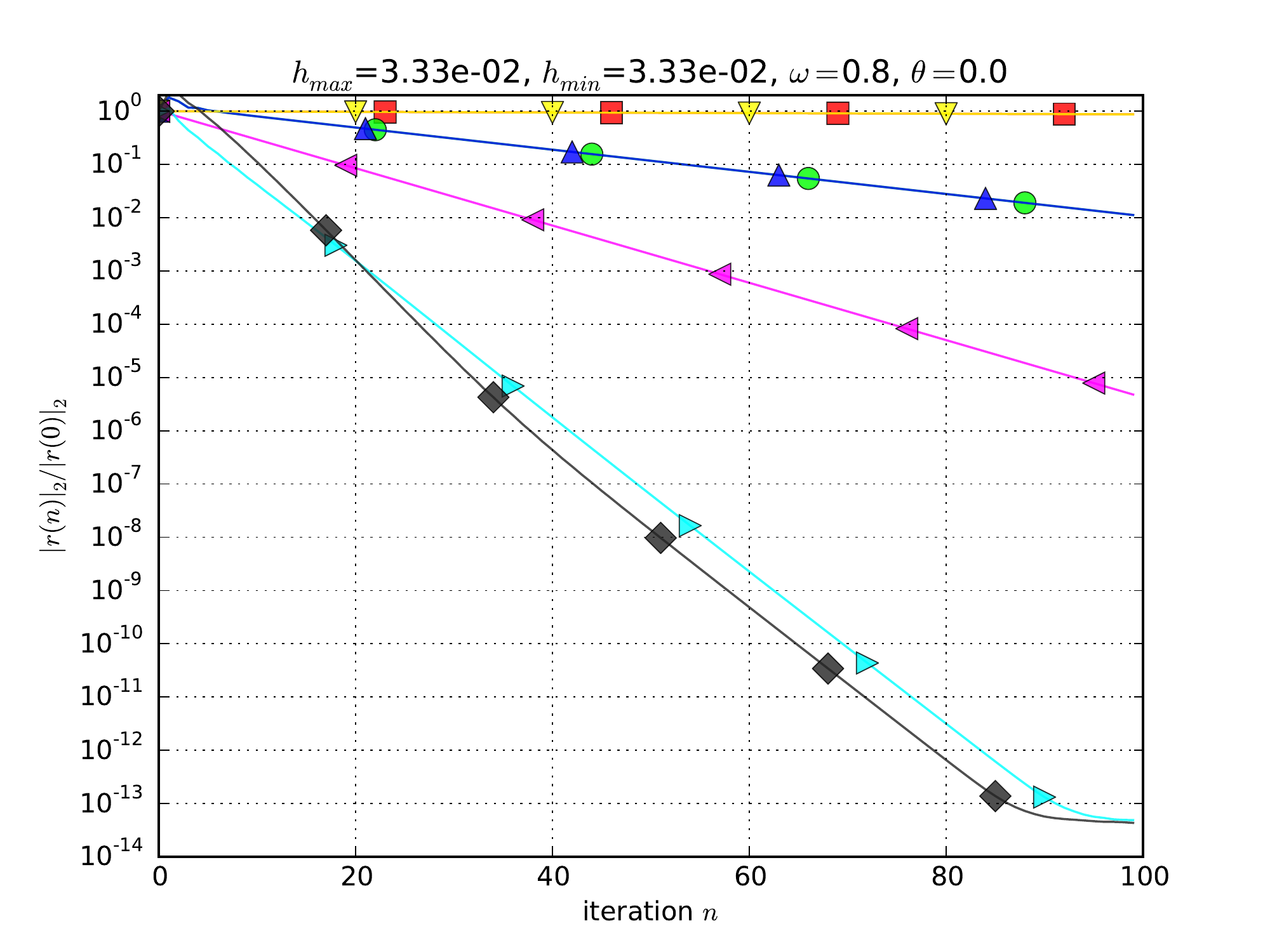}
  \\
  \includegraphics[width=0.45\textwidth]{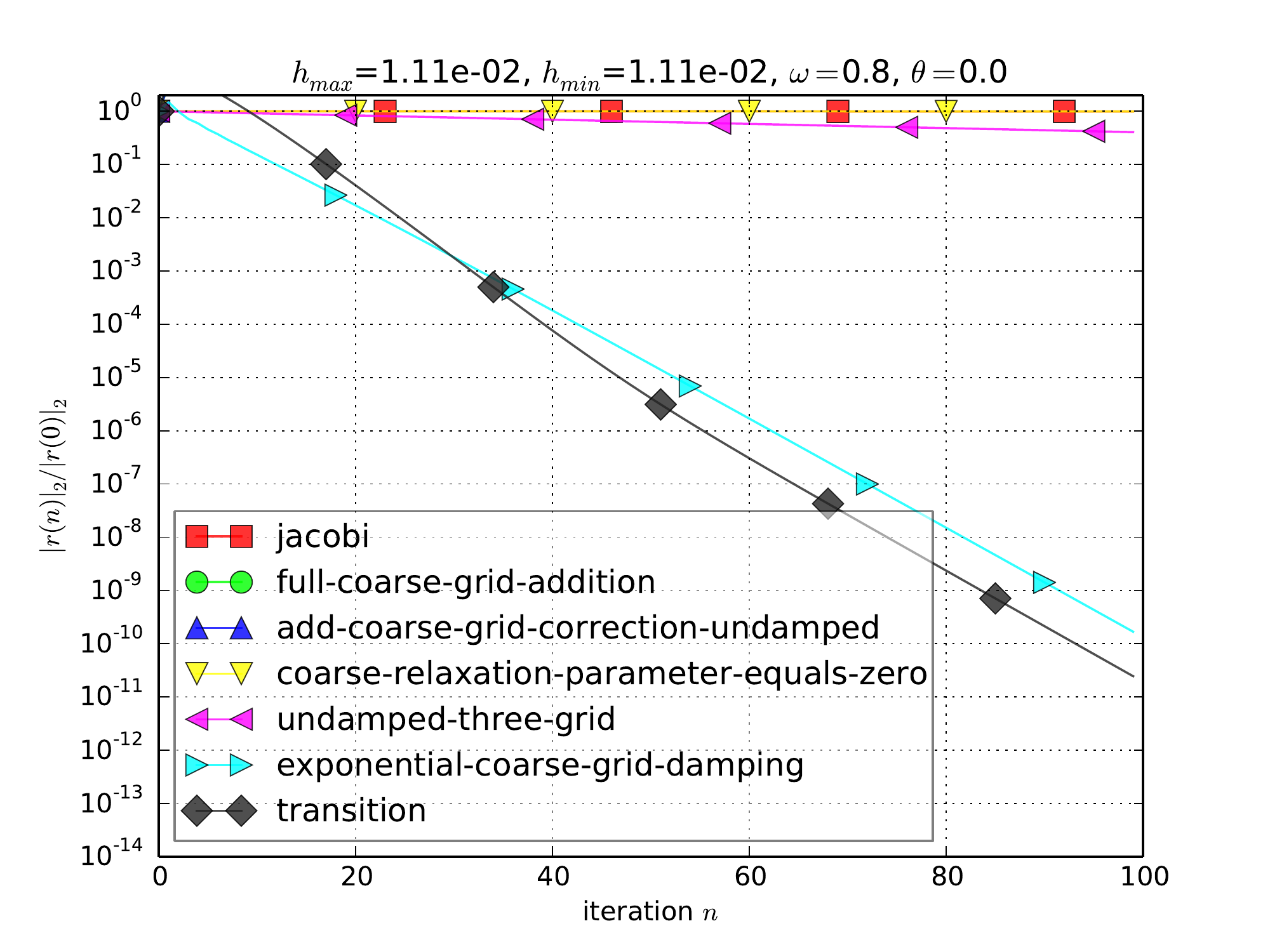}
  \includegraphics[width=0.45\textwidth]{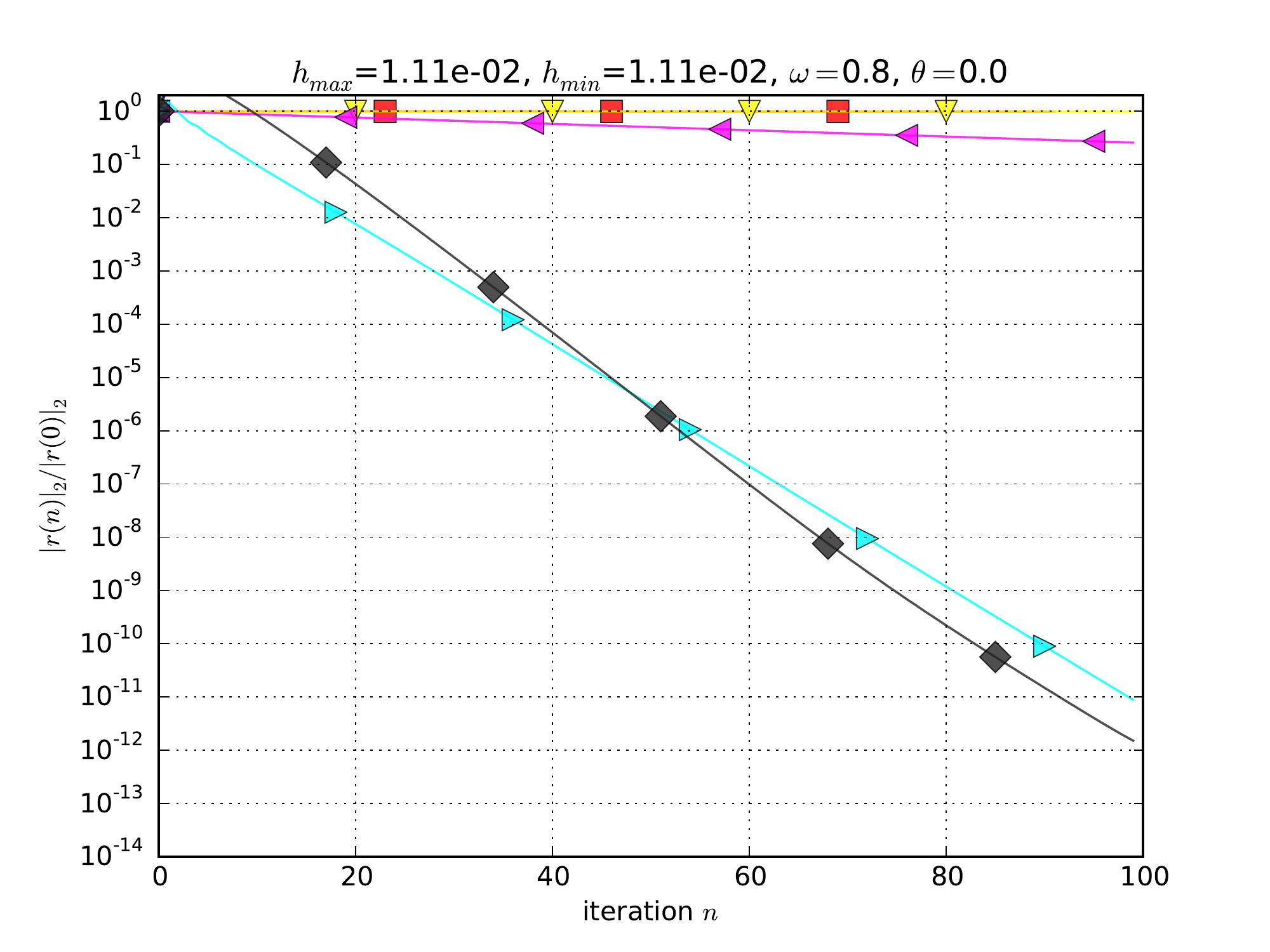}
  \end{center}
  \vspace{-0.2cm}
  \caption{
    Residual development on regular grids (left: $p=2$; right: $p=3$) for the
    additive multigrid and the right-hand side \eqref{eq:results:sin}.
  }
  \label{figure:08a:convergence-regular}
\end{figure}

We start from \eqref{eq:results:sin} and $\phi \equiv 0$.
The complex rotation is $\theta=0$. 
An analytical solution to this problem has no imaginary parts, and we have 
validated for all choices of $\omega_\ell(v)$ the convergence
towards the analytic solution.
A convergence study on uniform grids exhibits a convergence speed
that is almost mesh-independent for the different multigrid variants (Figure
\ref{figure:08a:convergence-regular}).
We observe that the speed slightly decreases for decreasing mesh width if we use
exponential or transition coarse grid damping.
While these two schemes are the robust multiscale variants, they downscale the
impact of coarse grid corrections per additional coarse grid level.
We can not expect perfect mesh-independent convergence.
In general, the transition scheme slightly outperforms the exponential coarse
grid damping after a few additive cycles. 
Undamped additive multigrid converges only for the coarser mesh
widths---it would still work if we reduced $\omega $ with each additional grid
level; which decreases the convergence speed---while the three grid and Jacobi
solver exhibit the well-known mesh-dependent convergence behaviour.

\begin{figure}
  \begin{center} 
  \includegraphics[width=0.45\textwidth]{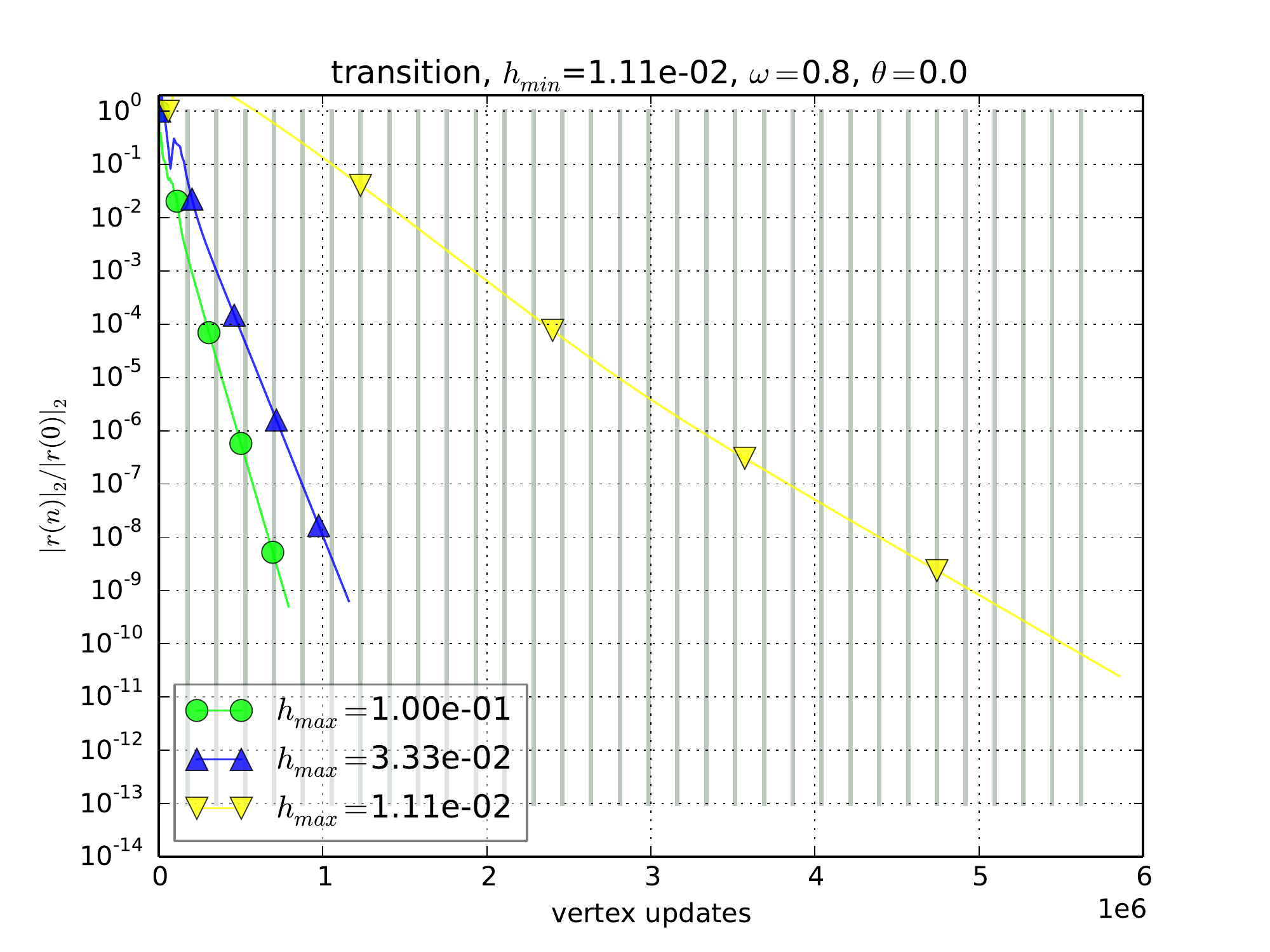}
  \includegraphics[width=0.45\textwidth]{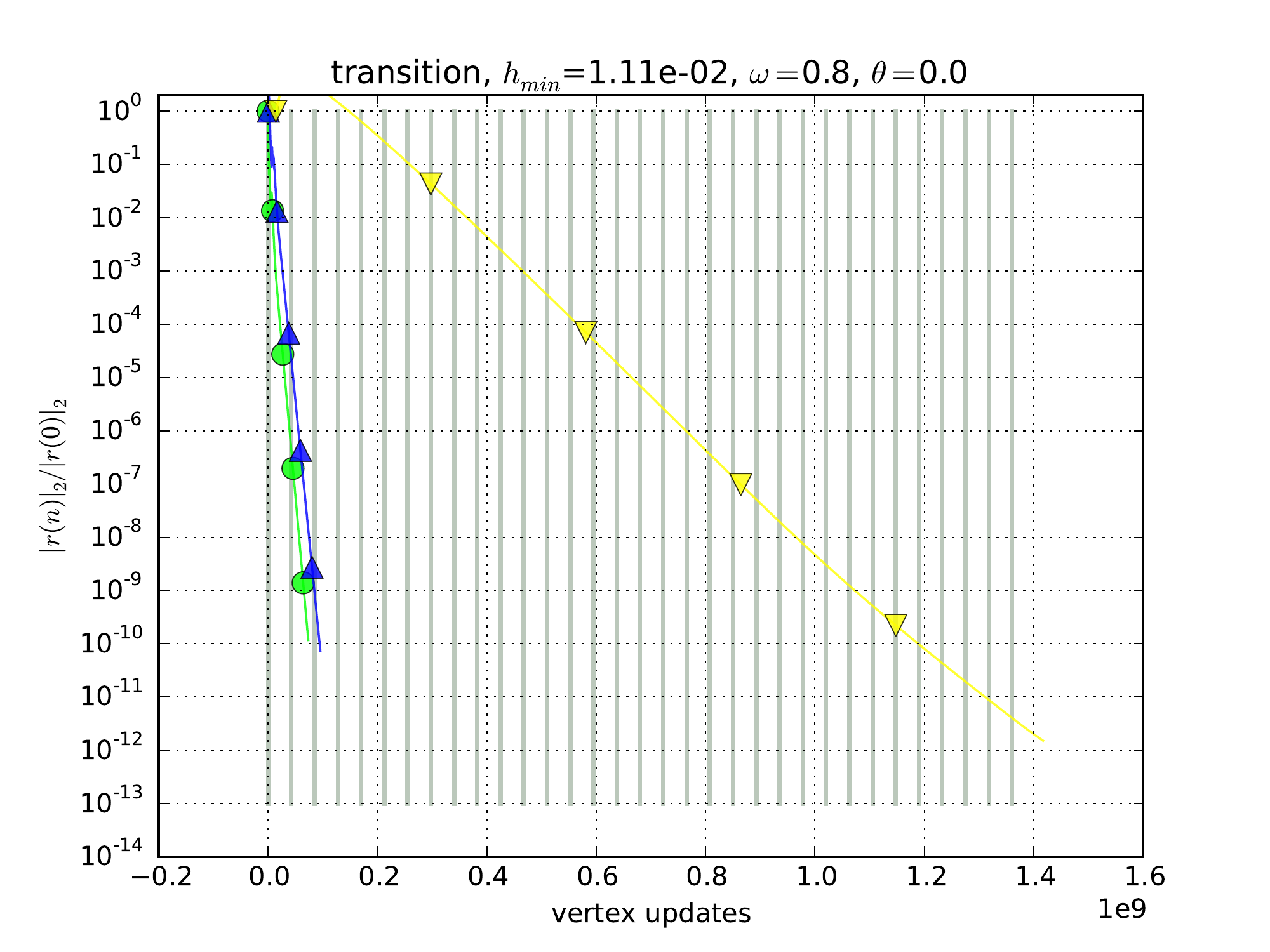}
  \end{center}
  \vspace{-0.2cm}
  \caption{
    Adaptive grid with additive multigrid that is successively refined from a
    prescribed maximum mesh size $h_{max}$ to the minimum mesh size $h_{min}$
    for $p=2$ (left) and $p=3$ (right) where it pays off. Each thick vertical
    line denotes the cost of three additive cycles on a regular grid with
    $h_{min}$.
    Transition is used for $\omega_\ell(v)$. The right-hand side is \eqref{eq:results:sin}. 
  }    
  \label{figure:08a:convergence-adaptive}
\end{figure}

The aforementioned convergence statements reveal to be too pessimistic if we
switch to an adaptive, FMG-type setting.
We then observe that we obtain two orders of
accuracy at the cost of around three fine grid cycles (Figure
\ref{figure:08a:convergence-adaptive}).
This holds for both the exponential coarse grid damping (not shown) and the transition
scheme while the latter performs slightly better again.
We are close to multiplicative multigrid efficiency.

\begin{figure}
  \begin{center} 
  \includegraphics[width=0.45\textwidth]{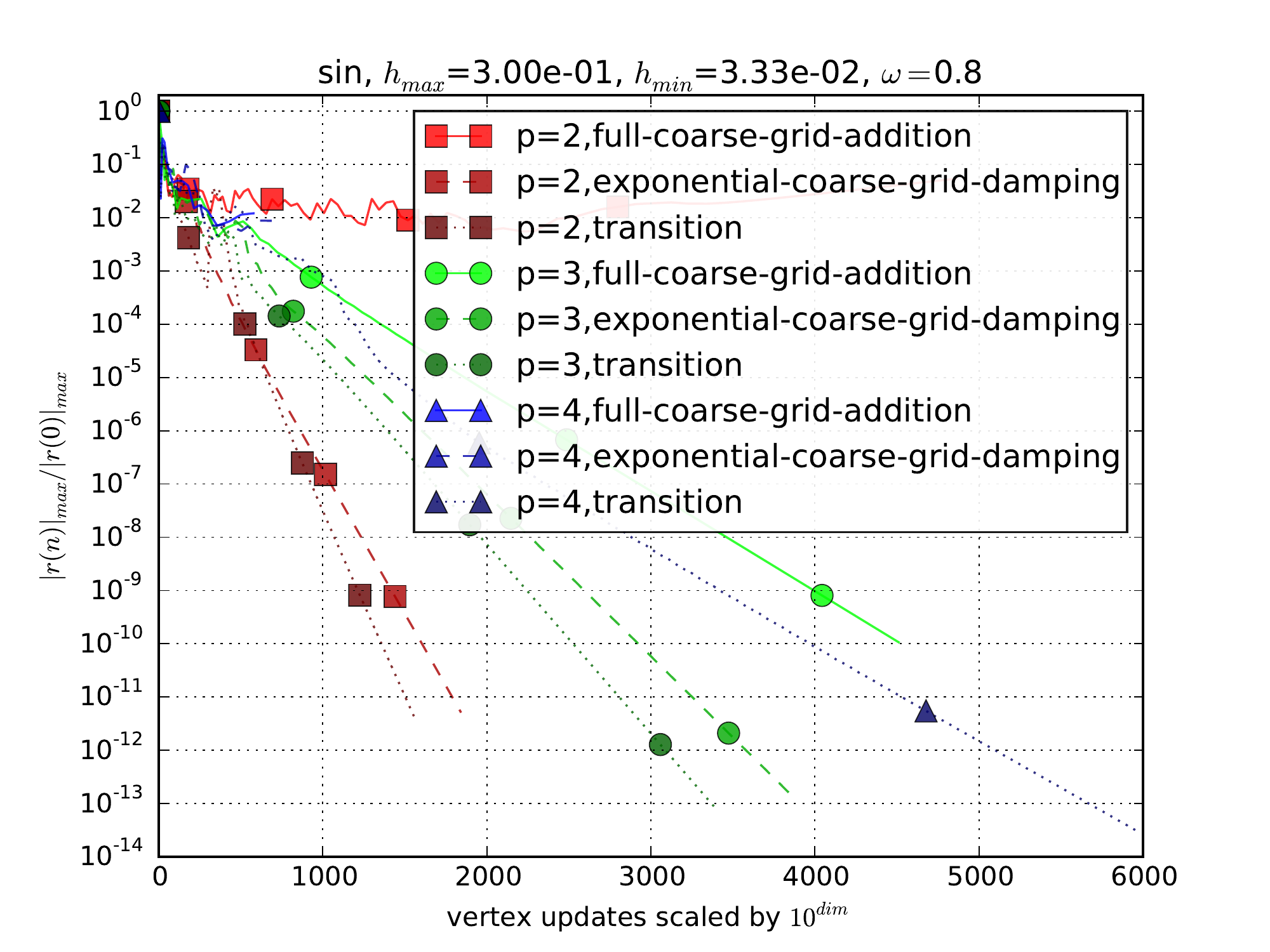}
  \includegraphics[width=0.45\textwidth]{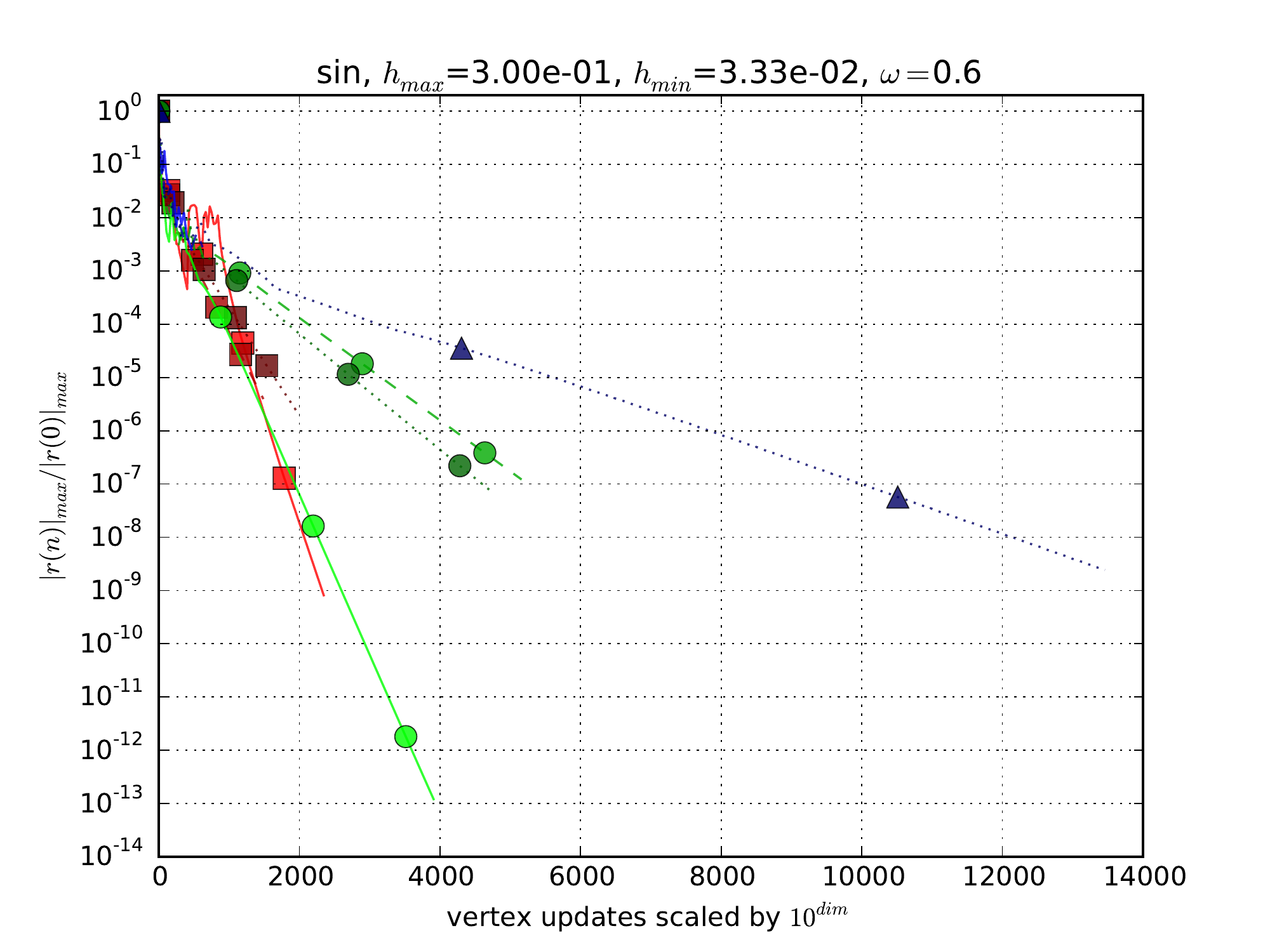}
  \end{center}
  \vspace{-0.2cm}
  \caption{
  Comparisons of the solver efficiency for different dimensions for one
  adaptive mesh configuration choice.
  Note the dimension-dependent scaling of the $x$-axis.}
  \label{figure:08a:high-dimensional}
\end{figure}

We finally observe that optimal, i.e.~mesh-independent, convergence is obtained
for different dimensions if $\omega = 0.8$.
While larger values make the solver diverge, stronger damping such as
$\omega = 0.6$, e.g., does not deliver comparable performance robustly
though some mesh choices benefit from reduced factors (Figure
\ref{figure:08a:high-dimensional}).
Again, exponential damping and transition are the only schemes that are stable
for all $p\in \{2,3,4\}$ and all mesh size configurations.
Transition usually is faster than exponential damping.
However, the convergence speed of both approaches continues to deteriorate with
increasing $p$.
Reasons for this might be found in the poor smoother and the rather aggressive
coarsening by a factor of $3^p$.

We summarise that our approach works for $p\geq 2$, but remains not
practical for $p\geq 5$ due to the curse of dimensions.
To the best of our knowledge, even results for a numerical solution with $p\geq 3$ in our
application area are rare. 
The aforementioned 
 \cite{Baertschy:01:ThreeBodySparseLinAlgebra,Vanroose:05:Science,HMRMM07,Zubair:12:channels,Cools:14:FFM}
 for example all restrict to $p \in \{2,3\}$, i.e.\ two-dimensional or
 three-dimensional grids.
 To the best of our knowledge, very few spacetree codes offer
 four-dimensional or even higher-dimensional dynamically adaptive grids.

\subsection{Case $\phi<0$: positive definite Helmholtz problems}

%
%
\begin{table}
  \begin{center}
    \tbl{
      Cost to reduce the initial residual by $10^{-6}$. The first entry is the 
      number of grid sweeps required by the adaptive solvers, the second
      entry normalises the required unknown updates: It shows how many
      regular grid sweeps yield the same cost.
      \texttt{ucg} denotes undamped coarse grid correction, \texttt{exp}
      exponential coarse grid damping and \texttt{t} transition. 
      $p=2$. The upper part shows results for $\eqref{eq:results:sin}$, the
      lower for \eqref{eq:results:helmhomo}.
      $\bot$ denotes divergence, $^*$ setups where the adaptivity criterion had
      not created stationary grid setups yet.
      \label{table:results:iterations:negative-phi}
    } 
    {    
    \setlength{\tabcolsep}{2pt}
    
    \footnotesize
    \begin{tabular}{l@{}r|ccc|cc|ccc}
      \hline
      &           & \multicolumn{3}{c}{additive multigrid}
      & \multicolumn{2}{|c}{\texttt{hb}} & \multicolumn{3}{|c}{\texttt{bpx}}
 \\
      $\phi$ & $\ h_{min}$ & \texttt{ucg} & \texttt{exp} & \texttt{t} & \texttt{ucg} & \texttt{exp} & \texttt{ucg} & \texttt{exp}
      & \texttt{t}
 \\
 \hline  
       \input{experiments/convergence/iteration-comparisons/2d/results-edited.table}
    \end{tabular}
    }
  \end{center}
\end{table}

Negative $\phi$ in \eqref{eq::pHelmholtz} with the right-hand side from
\eqref{eq:results:sin} have a negligible impact on the convergence behaviour of
the solvers (Table \ref{table:results:iterations:negative-phi}; upper part).
Here, additive multigrid and BPX yield comparable results.
BPX seems to become superior for sufficiently fine grids.
The hierarchical basis approach is slower.
For BPX, undamped coarse grid damping is the method of choice.
As we stop the study for rather big residuals being in the order of
$10^{-6}$ relative to the start residual, the multigrid's transition scheme has
not yet overtaken the exponential coarse grid damping.
We observe for all setups that finer mesh resolutions require more sweeps.
The grid has to unfold completely due to these sweeps.
The cost (in terms of unknown updates) normalised by the cost per sweep on a
regular grid of the finest mesh size however decreases with additional levels.
Inaccuracies occur for BPX that stopped right after the refinement
criterion had inserted an additional grid level. 
It thus does not make sense to compare the number of updates to a regular
grid---the new level that just had been inserted makes the adaptive scheme seem
to be too good.
Longer simulation runs/a lower termination threshold would put these values into
perspective.

The convergence characteristics change for
\eqref{eq:results:helmhomo} acting as right-hand side (Table
\ref{table:results:iterations:negative-phi}; lower part). 
Additive multigrid starts to diverge for coarser mesh sizes already as $\phi<0$
becomes smaller.
For fine meshes, it always diverges.
The ill-behaviour stems from the fact that our coarse grid updates mimic a long-range diffusive behaviour of the solution.
The smaller $\phi<0$ the less significant this diffusion component in
\eqref{eq::pHelmholtz}.
Instead, we face steep gradients at the transition of the right-hand.
They can not be resolved on coarse grids.
Even worse, any coarse grid change pollutes a fine grid approximation due to
unnatural diffusion introduced by our $p$-linear $P$.
It thus excites oscillations around the $\chi$ transition.

The hierarchical basis is more robust w.r.t.~these non-diffusive oscillations
if we use exponential damping.
However, its convergence speed deteriorates.
Exponential damping in combination with the \texttt{hb}-filtering of c-points on
the fine grids yields a scheme where unknowns within the computational
domain that are induced by coarse levels are not updated significantly anymore
once finer grids are introduced; or once restricted residuals average out on coarse levels.
Any combination of \texttt{hb} without full coarse grid addition of the
correction thus makes only limited sense and is not followed-up further.
Results for \texttt{hb} with transition are not even shown.
\texttt{hb} seems to be a problematic solver variant here.
It is due to the additive framework and might be completely different for
multiplicative settings.

Our BPX-type variant finally yields the best results.
BPX starts to reduce the cost per accuracy for shrinking $\phi$s unless its hits
convergence (at a cost of around 0.1 regular fine grid sweeps). 
Hereby, an undamped coarse grid addition is superior to the other variants;
as it materialises in the number of sweeps.
As the PDE deteriorates to an explicit equation $-\phi \psi = \chi$ with a relatively small
diffusive addendum on the left-hand side, 
fine grid unknowns introduced by coarse grid levels are updated almost to the
right solution immediately due to the dominance of the diagonal in the system
matrix.
The residual for other unknowns is (almost) correct on the finest grid
resolution, too.
Where the multigrid and hierarchical basis update the latter points plus add a
prolonged correction from the c-points---the latter update component over-relaxes
the unknowns---BPX explicitly removes the coarse grid contribution, as the
coarse grid contribution equals exactly the fine grid update.

\subsection{Case $\phi>0$: indefinite Helmholtz problems on complex rotated grids}

Depending on the grid resolution and topology,
\eqref{eq::pHelmholtz} can become indefinite
or yield a non-trivial null space for $\phi>0$.
Outgoing waves mimicked by absorbing boundary layers 
add natural damping to most of the eigenmodes and therefore make the
discretised operator better conditioned.
In the present section, we stick to homogeneous Dirichlet boundary conditions
only, i.e.~study a worst-case scenario regarding the numerical stability.
In turn, we isolate the impact of complex grid rotation from any other damping
induced by outgoing wave boundary conditions in typical applications.


In the following experiments, the mesh width is scaled according to $h\to
he^{i\theta}$ in each dimension ($0\le\theta\le\frac{\pi}{4}$).
Complex rotation has a positive effect on the convergence behaviour of the
multigrid solver (cmp.~Figure~\ref{figure:08b:convergence-thetas:phi2025}
with $\phi=45^2$), while $\theta=0\degrees$ makes the solver diverge.
$\omega =0.8$ is used throughout all experiments. 
If we rotate the constant mesh width over $\theta > 30\degrees$, then the solver gets into the regime of convergence. 
We observe the same behaviour on an adaptive grid that starts on a coarse
regular grid with $h_{min}=\frac{1}{9}$. 
During a refinement step the residual reduction might temporarily increase.
Once the grid settles into a steady state, an asymptotic convergence rate is
reached.
It is better than the corresponding convergence rate on a regular grid with
the same minimal mesh size. This is due to a reduced set of eigenvalues.

\begin{figure}
  \begin{center} 
  \includegraphics[width=0.45\textwidth]{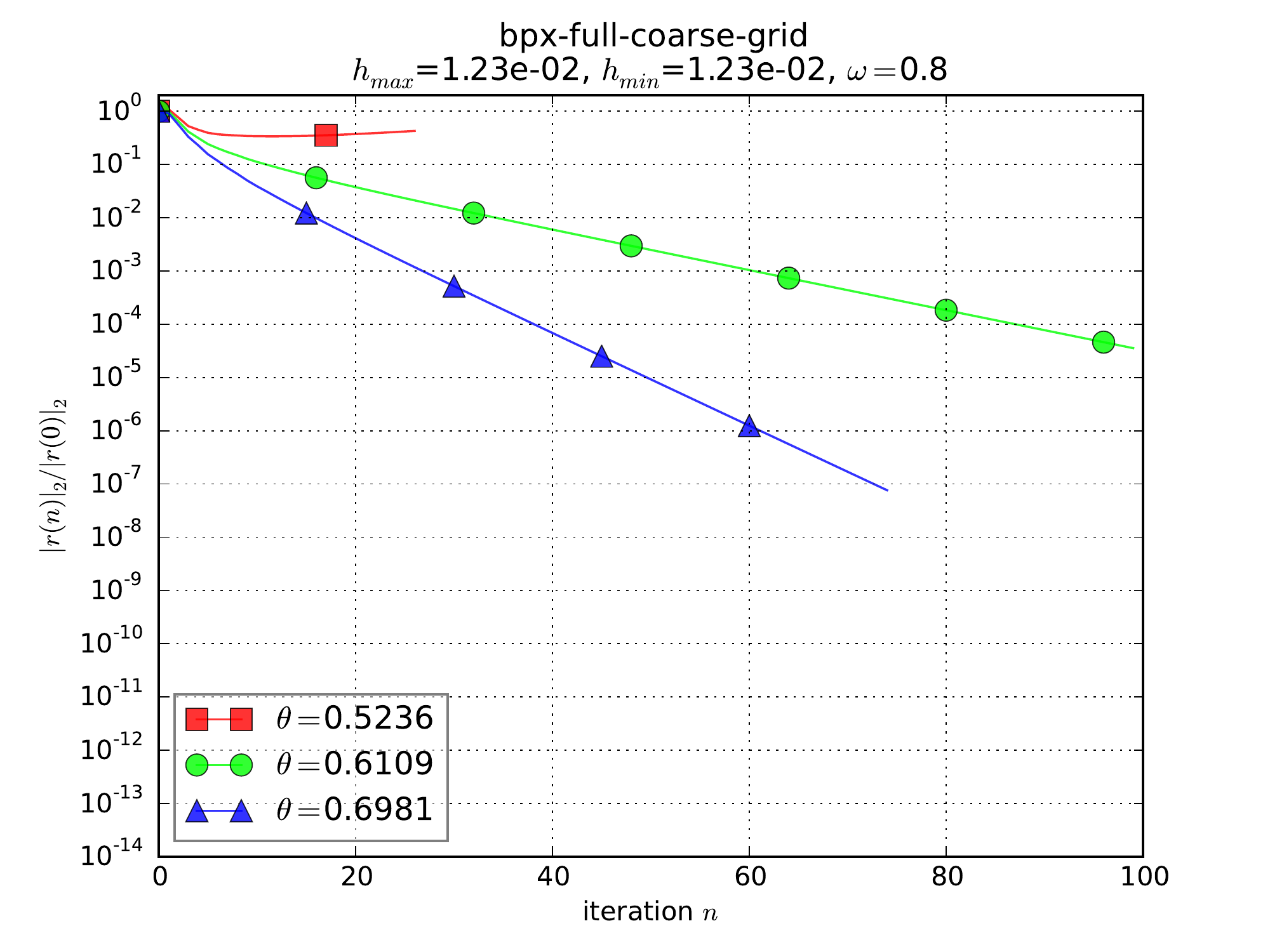}
  \includegraphics[width=0.45\textwidth]{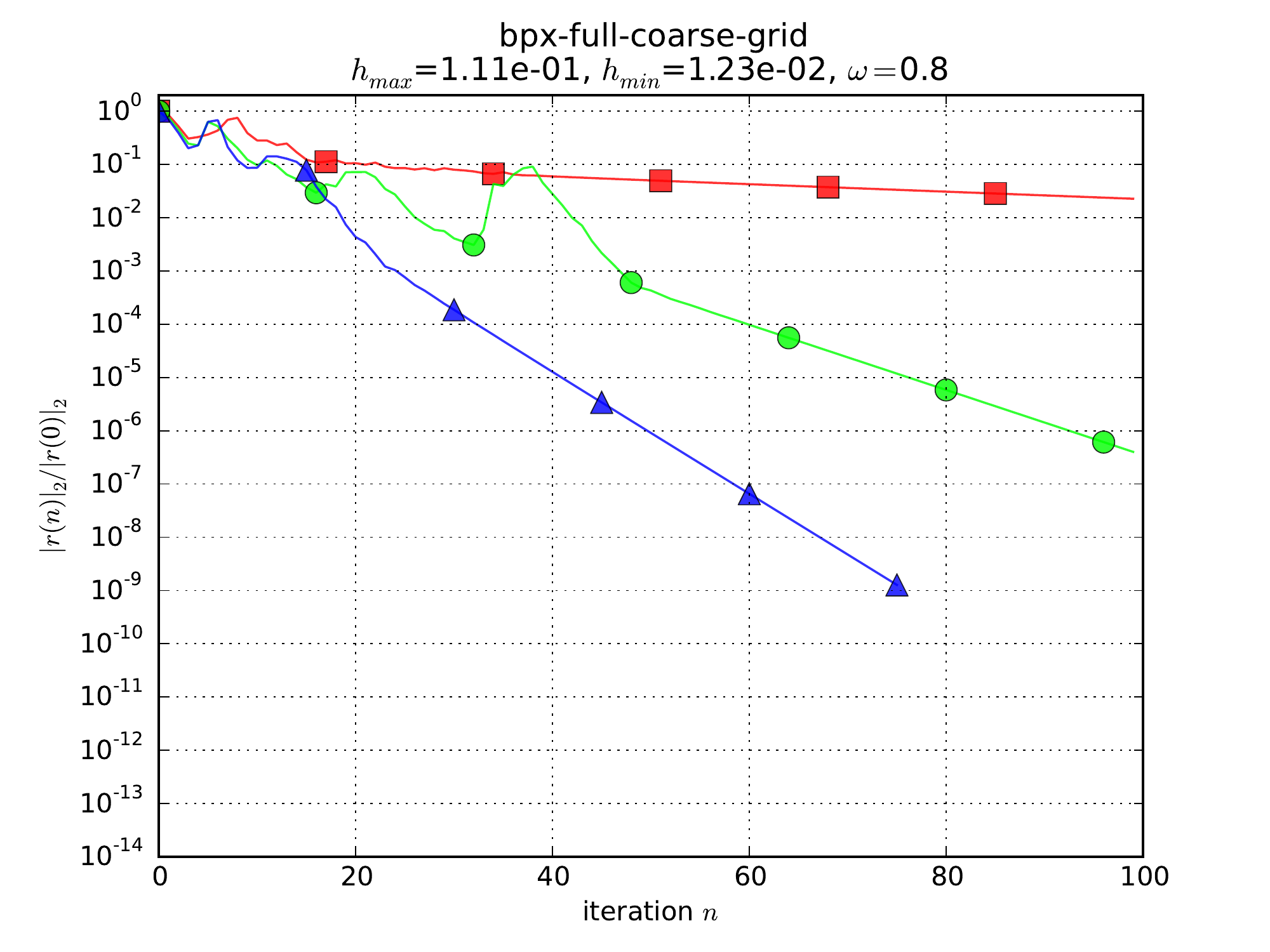}
  \end{center}
  \vspace{-0.2cm}
  \caption{Residual development for the BPX-variant in Algorithm~\ref{algo::tdbpx} with undamped coarse grid correction for $\phi = 45^2$. Different values of the complex rotation angle
  $\theta=30\degrees, 35\degrees, 40\degrees$ were tested on a regular grid (left) and an adaptive grid (right).}
  \label{figure:08b:convergence-thetas:phi2025}
\end{figure}

A typical stress test for indefinite Helmholtz solvers is the robustness as a function of increasing Helmholtz shift $\phi = k^2$. $k$ is the \emph{wave number}.
For one-dimensional problems a common restriction on the mesh size $h$ is given
by the \emph{ten-points-per-wave-length} rule, which translates into $kh <
0.625$. In the following two-dimensional experiments we keep $kh=\frac{5}{9}=0.5555\ldots$ and
test for different values of $k$ on regular grids. Note that from a physical
point of view a more stringent constraint on $h$ is required in higher dimensions,
such as $k^3h^2 = \mathcal{O}(1)$, in order to avoid pollution of the solution \cite{Bayliss:85:pollution,Ihlenburg:95:pollution2}.

\begin{figure}
\begin{center}
\includegraphics[width=0.45\textwidth]{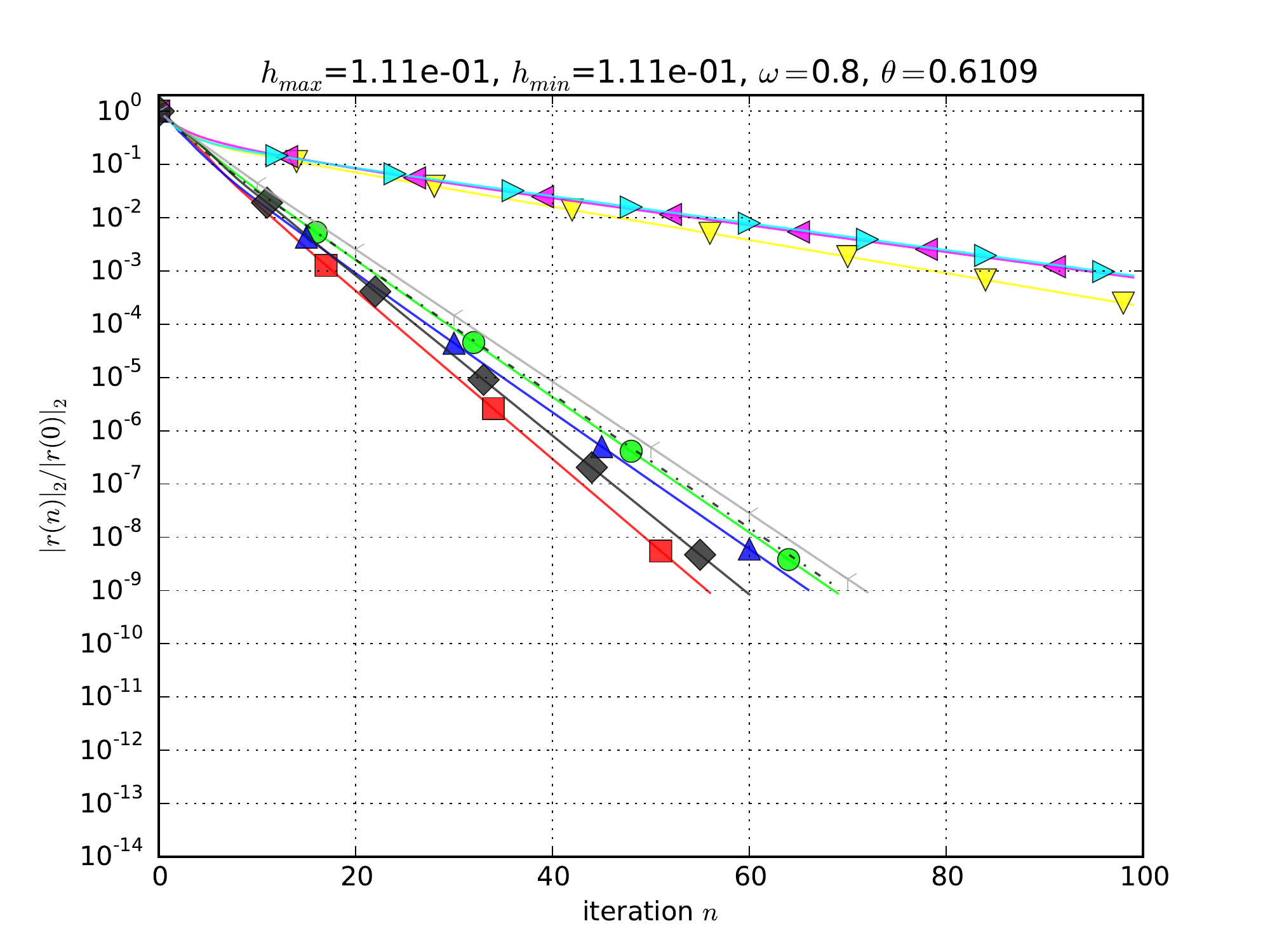}
\includegraphics[width=0.45\textwidth]{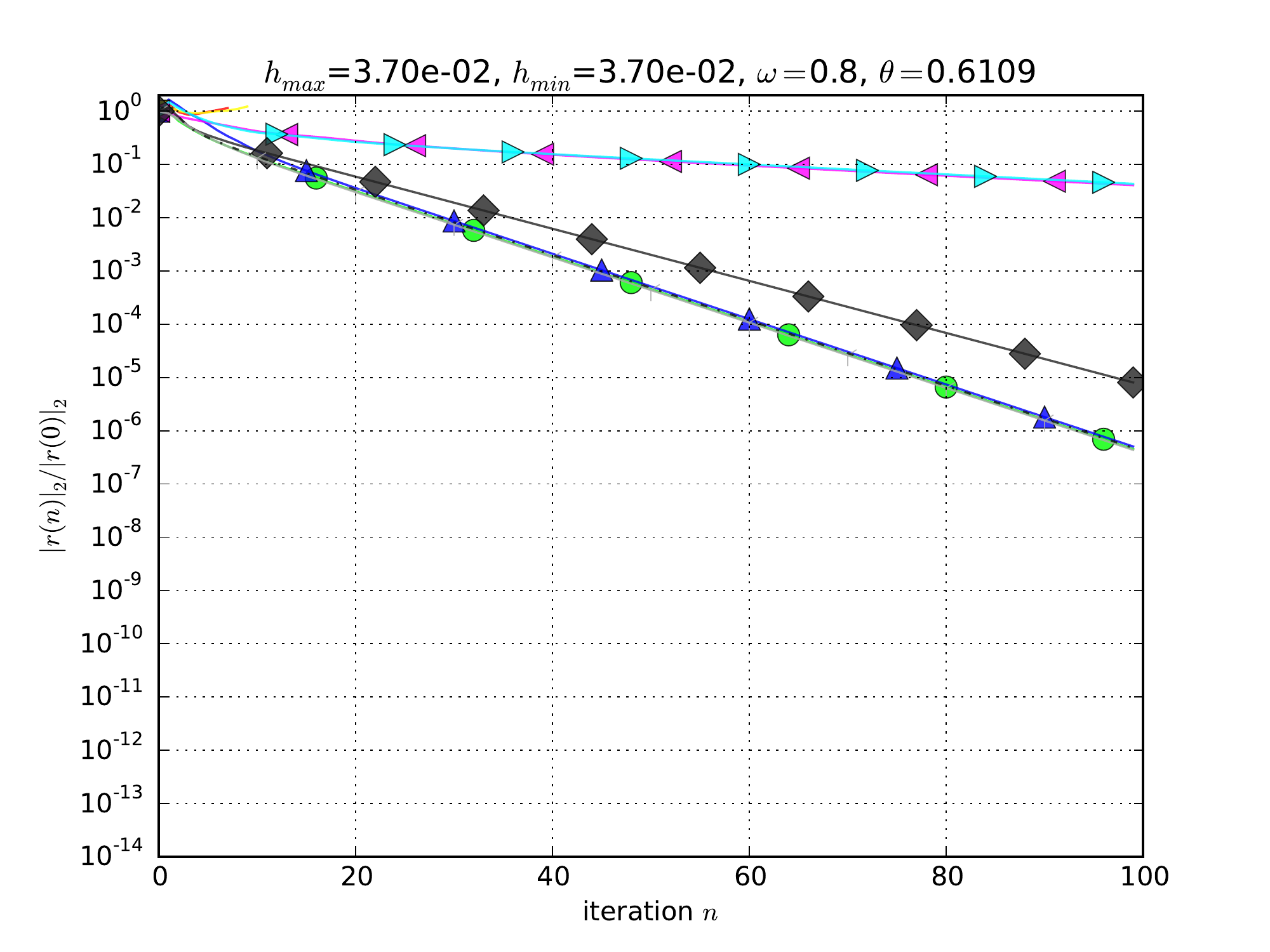}
\\
\includegraphics[width=0.45\textwidth]{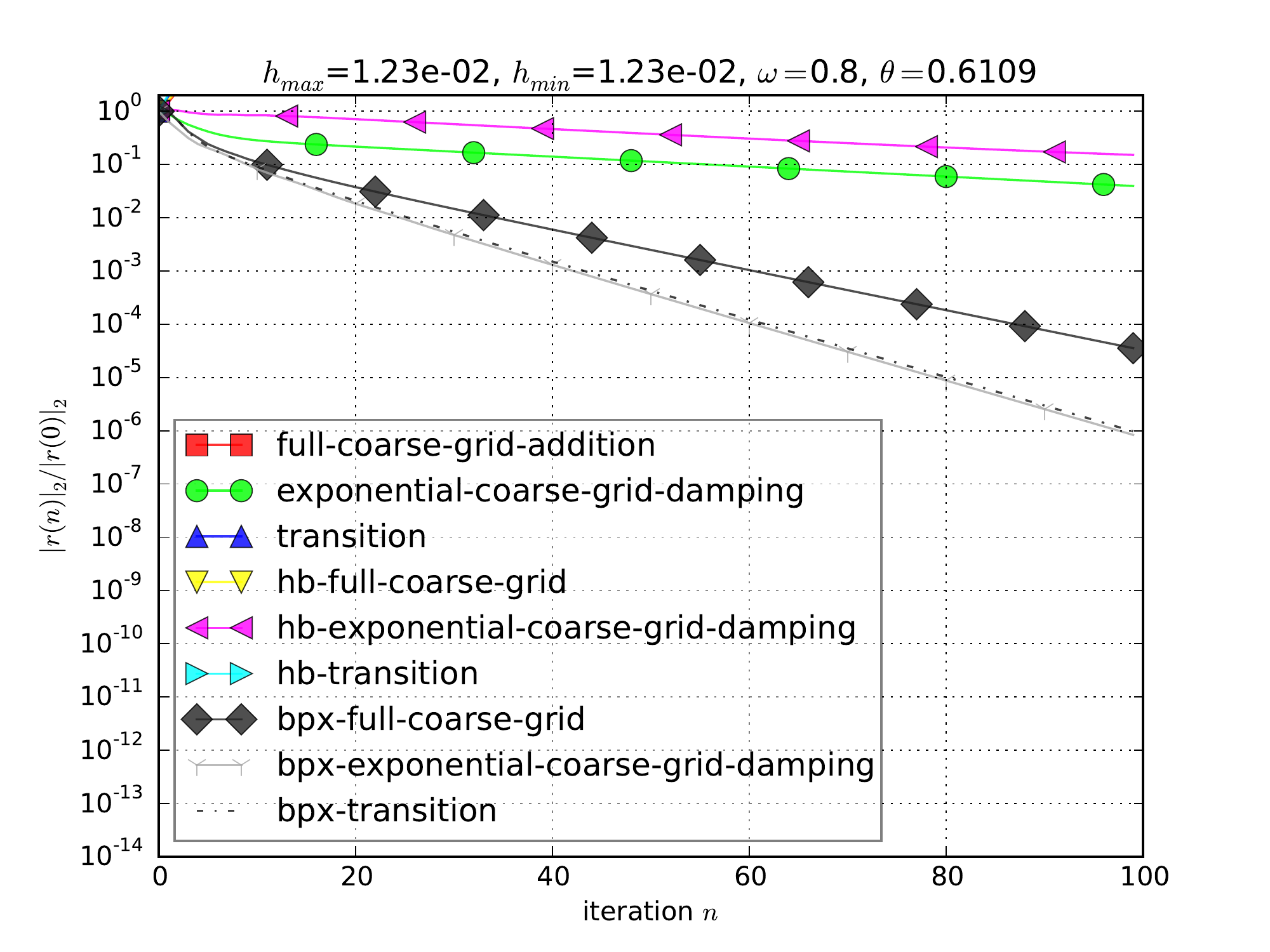}
\includegraphics[width=0.45\textwidth]{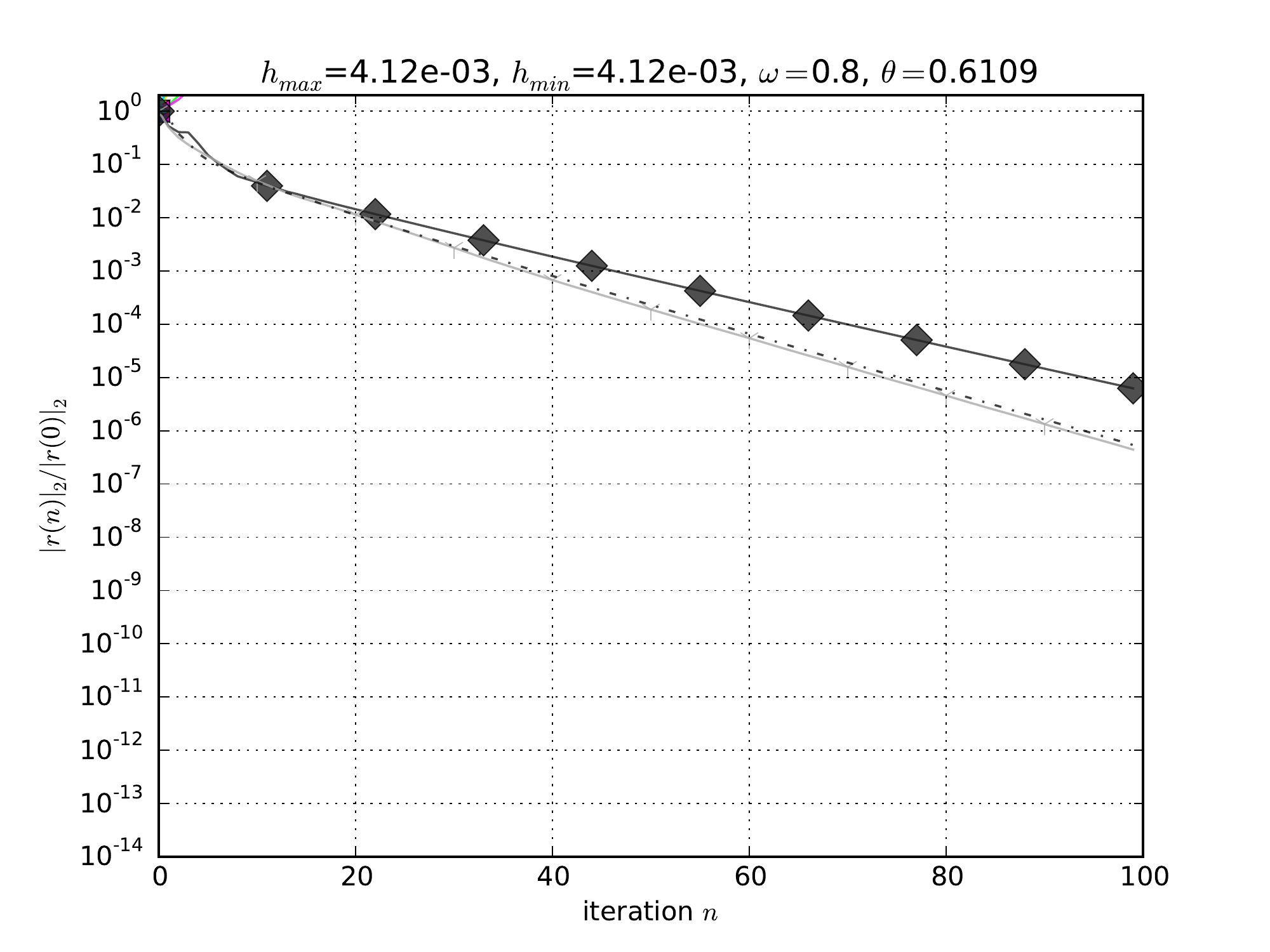}
\end{center}
\vspace{-0.2cm}
\caption{
  Residual development for different values of the Helmholtz shift: $\phi=5^2$
  (top left), $15^2$ (top right), $45^2$ (bottom left) and $135^2$ (bottom
  right). Complex rotation was set to $\theta = 35\degrees$.
  The \texttt{hb-} prefix marks hierarchical basis solvers, the \texttt{bpx-}
  prefix a BPX approach.
  \label{figure:08b:convergence-shifts:theta35}
}
\end{figure}

In Figure~\ref{figure:08b:convergence-shifts:theta35} the convergence behaviour of different solver variants is compared, similar to the experiments in Figure~\ref{figure:08a:convergence-regular} for the Poisson equation. The mesh width is now complex rotated over $\theta=35\degrees\approx0.6109$ in order to avoid divergence due to a non-trivial null space. The top left panel ($k=5$) shows a nice reduction rate. For increasing wave numbers ($k=15, 45, 135$), the term $\phi=k^2>0$ starts to dominate the PDE and the solvers suffer from the overshooting effects discussed in the previous section with $\phi<0$. As expected, only the BPX-variants seem to cope with the highest wave number $k=135$ in the bottom right panel.

\begin{figure}
\begin{center}
  \includegraphics[width=0.45\textwidth]{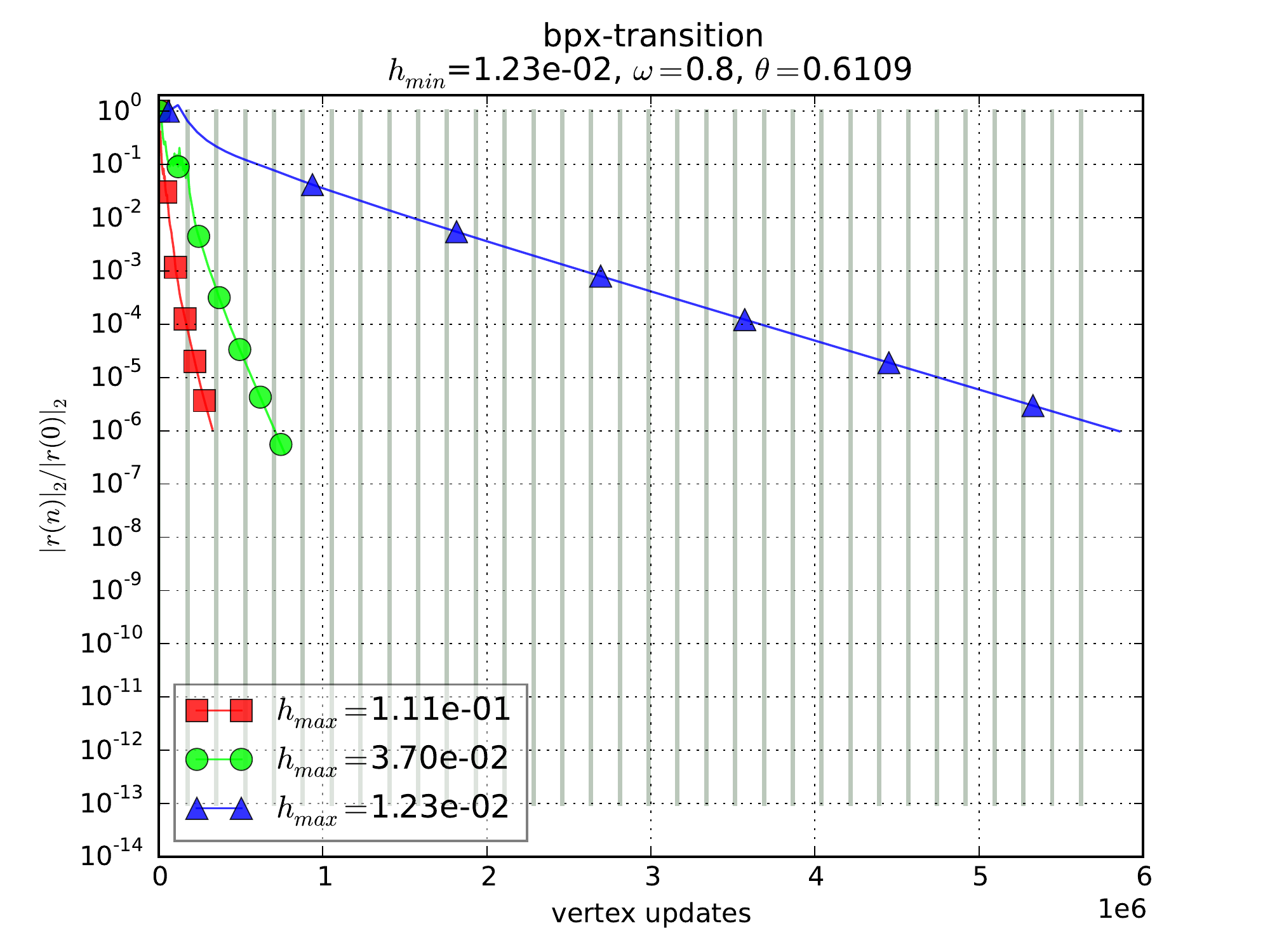}
  \includegraphics[width=0.45\textwidth]{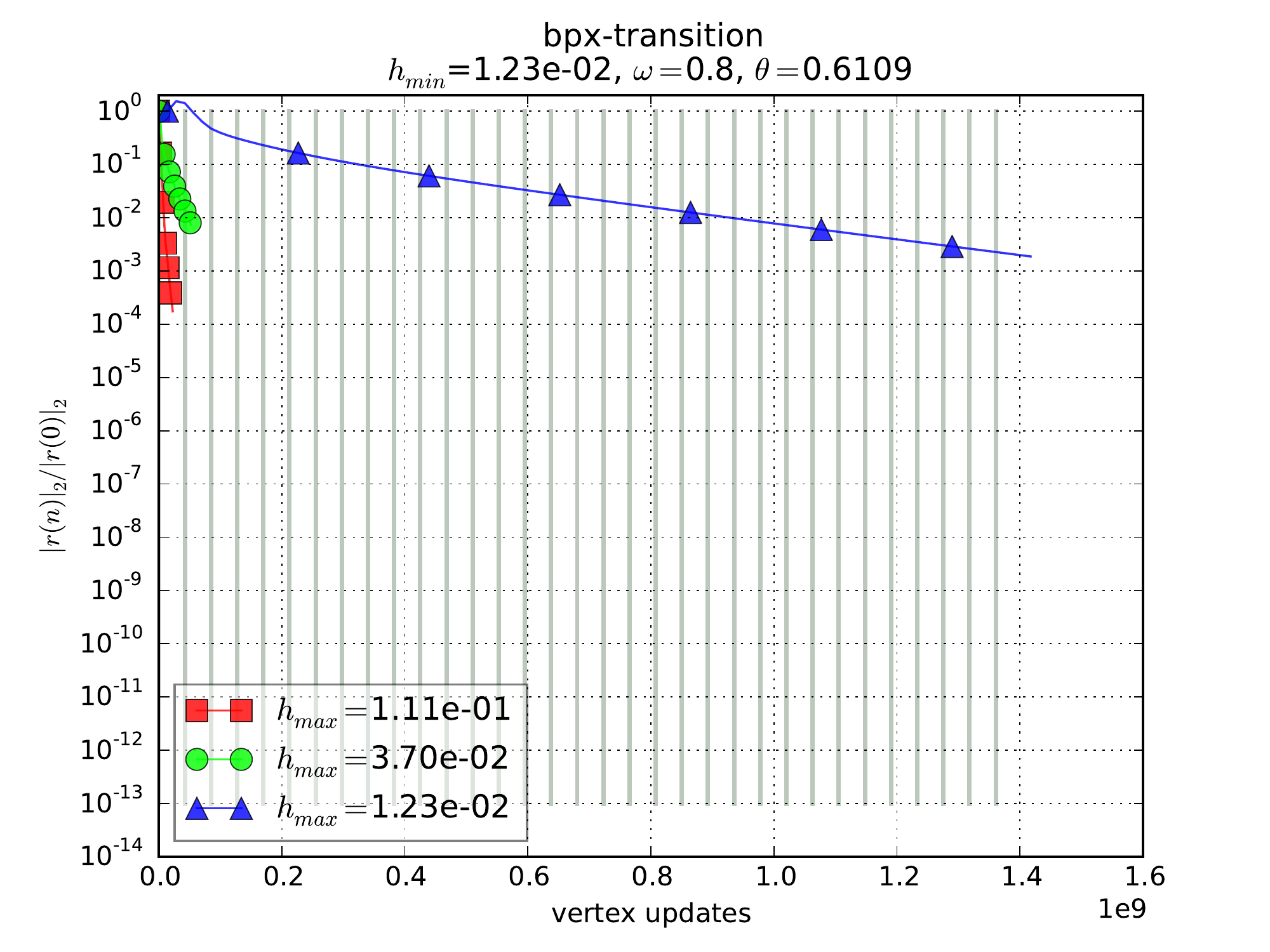}
\end{center}
\vspace{-0.2cm}
\caption{
  Adaptive grid with BPX additive multigrid that is successively refined from a
  prescribed maximum mesh size $h_{max}$ to the minimum mesh size $h_{min}$ for
  $p=2$ (left) and $p=3$ (right). Helmholtz variant of the experiment in
  Figure~\ref{figure:08a:convergence-adaptive}, for $\phi=45^2$ and complex
  rotation $\theta = 35\degrees$.
  \label{figure:08b:convergence-shifts-adaptive:theta35}
}
\end{figure}

The BPX-variants are further tested on adaptive grids in Figure~\ref{figure:08b:convergence-shifts-adaptive:theta35} for values $k=45$ and $135$. We let the corresponding mesh sizes now determine the finest possible resolution in the grid, and start each solve on a coarse regular grid with $h_{max} = \frac{1}{9}$. The result is an FMG-type solver that adaptively refines. The same as for the Poisson experiments in Figure~\ref{figure:08a:convergence-adaptive}, the horizontal axis indicates the number vertex updates versus the residual normalised residual norm on the vertical axis. Again, there is a significant benefit from coarser regions in the grid.
Adaptivity is a particularly useful functionality for the highly heterogeneous Helmholtz equations with space-dependent function $\phi=\phi(\vec{x})$, 
that arise in the quantum mechanical problems described in Section~\ref{section:model-problem}. 
A typical L-shaped refinement \cite{Zubair:10:Lshaped} is desirable for a good representation of extremely localised waves close to the domain boundary (cmp. Figure~\ref{fig::twofixedparticles} right). 
Nonetheless, there remain severe numerical stability issues to handle these evanescent waves. We refer to the concluding Section~\ref{section:conclusion} for an outlook on strategies, as this lies beyond the scope of the current paper.

\subsection{$p=2$ application scenario and grid adaptivity structure}
\label{section:results:gaussian}

With characteristics of the solver behaviour at hand, we finally study a
realistic channel matrix for $p=2$ idealised from the dynamics of Hydrogen or
Deuterium \cite{Zubair:12:channels}.
This leads in the channels' frequency domain to two-dimensional Helmholtz
problems 

\begin{eqnarray}
  \chi (x,y) & = & e^{\left( - \left( 125 x \right)^2 -\left( 125 y \right)^2
  \right)} \quad \mbox{and} \nonumber \\
  \phi (x,y) & = & 45^2 + 135^2 \cdot \left( e^{-\left( 15 x \right) ^2} + e^{-\left( 15 y \right) ^2}\right),
  \label{equation:08-gaussian}
\end{eqnarray}

\noindent
where the x-axis and the y-axis represent the distance from the centre of mass.
They consequently carry homogeneous Dirichlet values: The probability for an
electron to coincide with a nucleus equals zero.
The remaining two faces top and right are open boundaries.
They are consequently supplemented with homogeneous Dirichlet values as well,
but we rotate the cells close to these faces by $30^\circ $ in the complex
domain to eliminate wave reflections.
Close is $\frac{1}{3}$rd of the domain.
The remaining cells in the domain are complex rotated by an angle $\theta
\geq 0$, independently of this absorbing layer. As discussed in the previous
section, only for $\theta=0$ the original Helmholtz problem is solved.
Both an inhomogeneous right-hand side as well as an
inhomogeneous `material' function $\phi > 0$ are active inside the domain. The actual system parameters are channel dependent and are determined by the potential fields of all present particles.

For this setup $h=1/81$ is a physically reasonable choice for our purposes of studying the solution. 
The experiment is then sufficiently small to be solved by a direct solver. However, we solve it with pure Jacobi (cmp.~Figure
\ref{fig::twofixedparticles} right) which does not require us to change any
code.
For this, we use two-phase relaxation \cite{Ernst:11:HelmholtzIterativeMethods}, i.e.~two different
relaxation parameters $\omega _1$ and $\omega _2$ alternatingly, which 
follows \cite{Hadley:2005:ComplexOmega}\footnote{
  \cite{Hadley:2005:ComplexOmega} uses a relaxation $\alpha $ that is relative to a finite difference
  discretisation of the Laplacian. In our case, we consequently have $\omega =
  \alpha (1+\frac{h\phi }{4})$ with $\alpha = \sqrt{3}- \i$.
} 
with
\[
  \omega _1 = 0.01 \cdot \left( \sqrt{3}- \i \right) 
  \quad \mbox{and} \quad
  \omega _2 = - \overline{\omega _1}.
\]

\noindent
One of the two relaxation parameters has a negative real part
and both carry an imaginary component. 
Obviously, this approach becomes unfeasible
once $p$ increases or finer grid resolutions are required.
For the additive multigrid (transition scheme) and the BPX (undamped coarse
grid correction), we use $\omega = 0.4$ which avoids oscillations.
We also use $\omega = 0.4$ for all Jacobi smoothers that apply complex
rotations.

\begin{figure}
  \begin{center}
    \includegraphics[width=0.48\textwidth]{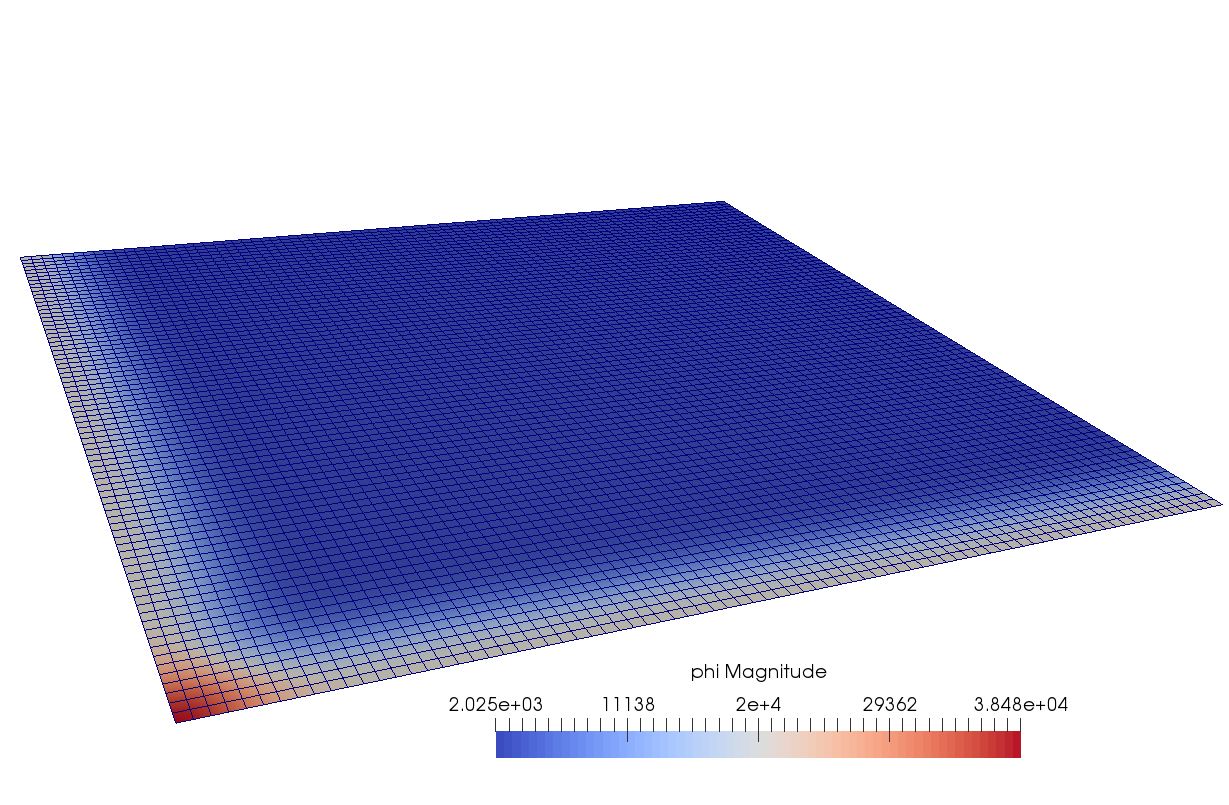}
    \hspace{0.1cm}
    \includegraphics[width=0.48\textwidth]{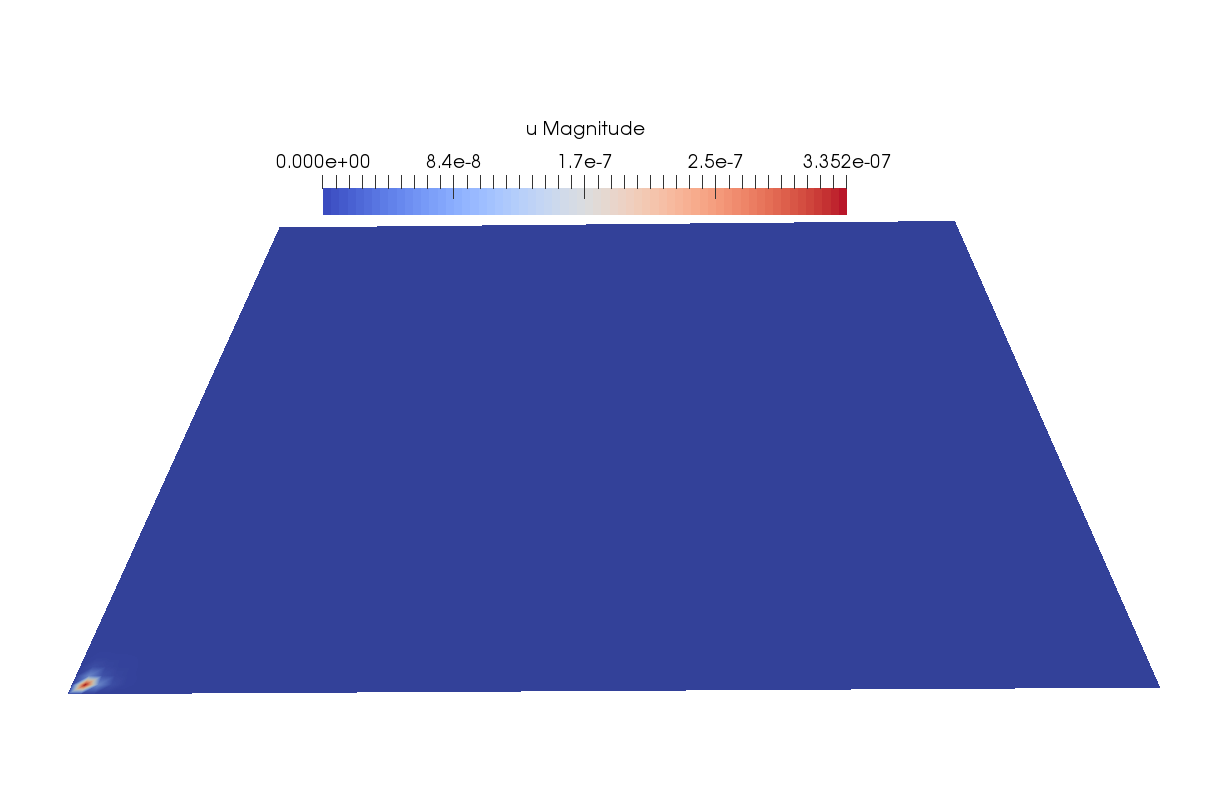}
  \end{center}
  \caption{
    The distribution of $\phi $ in the computational domain (left).
    It is invariant of the complex rotation.
    Complex rotation however does make a difference to the solution (right,
    $\theta = 45^\circ $).
  }
  \label{figure:08b:phi-and-solution}
\end{figure}

\begin{table}
 \begin{center}
   \tbl{
     Reduction of residual of \eqref{equation:08-gaussian} in max norm after 50
     iterations. The regular grid uses $1/81$, the adaptive one starts from
     $h_{max}=0.1$ and does permit the criterion to refine until
     $h_{min}=0.001$ is just underrun. $\bot$ denotes
     divergence.
     Figures in brackets show, if appropriate, the number of vertices used after
     50 iterations.
    \label{table:08gaussian:reduction}
   }
   {\footnotesize   \def\arraystretch{1.5}
   \begin{tabular}{l@{\hspace{0.1em}}|@{\hspace{0.14em}}rrr@{\hspace{0.14em}}|@{\hspace{0.14em}}rrr}
   & \multicolumn{3}{c}{regular} & \multicolumn{3}{c}{adaptive} \\
     $\theta$ & Jacobi & transition & BPX & Jacobi & transition & BPX \\
     \hline
     $0^\circ$ 
              & $4.78 \cdot 10^{-1}$ & $\bot$ & $\bot$
              & $\bot$ & $\bot$ & $\bot$ 
              \\
     \hline
     $18^\circ $ 
               & $\bot $ & $\bot $ & $4.44 \cdot 10^{-2}$
               & $\bot$ & $\bot$ & $\bot$ 
               \\   
     $25^\circ $
               & $2.07 \cdot 10^{-1}$ & $2.27 \cdot 10^{-2}$ & $2.62 \cdot 10^{-3}$
               & $9.82 \cdot 10^{-2}$ (1,536) & $5.30 \cdot 10^{-2}$ (1,448)&
               $5.72 \cdot 10^{-3}$ (1,500) \\
     $35^\circ $
               & $8.46 \cdot 10^{-5}$ & $2.00 \cdot 10^{-4}$ & $8.35 \cdot
               10^{-4}$
               & $9.49 \cdot 10^{-2}$ (1,544) & $3.97 \cdot 10^{-2}$ (1,324) &
               $1.61 \cdot 10^{-3}$ (1,480)
               \\
     $45^\circ $
               & $6.53 \cdot 10^{-7}$ & $6.67 \cdot 10^{-5}$ & $2.60 \cdot
               10^{-4}$
               & $9.14 \cdot 10^{-2}$ (1,524) & $3.09 \cdot 10^{-2}$ (1,308) &
               $1.16 \cdot 10^{-3}$ (1,460)
   \end{tabular}
   } 
 \end{center}
\end{table}

Though Jacobi converges for $\theta =0$ and reduces the residual after every two
grid sweeps, the convergence speed is unacceptably low even for this simple
setup.
However, the additive multigrid and BPX are not stable for $\theta =0$ and thus
can not be applied.
BPX becomes stable for $\theta \geq 18^\circ $, while the additive transition
scheme requires $\theta \geq 25^\circ $ (Table
\ref{table:08gaussian:reduction}).
With increasing $\theta$, the convergence speed of all solvers improves.
This improvement is rendered problematic as the quality of the preconditioner
suffers.
While the latter effect is not directly studied here, it is indirectly
illustrated by the Jacobi eventually outperforming the multigrid schemes.
For large $\theta$, all wave behaviour is damped out and we basically resolve
one peak around the coordinate system's origin (cmp. Figure~\ref{figure:08b:phi-and-solution} right).
The local solution characteristics render multilevel solvers
inappropriate.

If we start from a grid of $h_{max} = 0.1$ and allow the dynamic
refinement criterion to refine any cell coarser than $h_{min}=0.001$, all
solution processes automatically yield an adaptive grid (Figure
\ref{figure:08b:mesh}).
For all rotation choices, we end up with 1,308--1,544 vertices; a significant
saving compared to a regular grid with 6,400 vertices.
The purely feature-based criterion here has a two-fold role: 
In accordance with previous results, it rewrites the solve into an FMG-type
cycle.
At the same time, it allows the code to make the solution resolve the physical
problem characteristics economically.
While these solution characteristics depend sensitively on the choice of
rotation, we observe that the adaptivity structure is almost
rotation-invariant.
This is an effect that deserves further studies but obviously results from the
fact that the gradient around the Gaussian stimulus $\chi$ exceeds by magnitudes
any characteristic of the induced wave pattern.
Different to the regular grid results, we furthermore observe that BPX remains
superior to the other approaches for all $\theta$, while all variants outperform
their regular grid counterparts in terms of cost: each adaptive iteration is at
most 1/4th of the cost of a regular grid sweep.

\begin{figure}
  \begin{center}
    \includegraphics[width=0.3\textwidth]{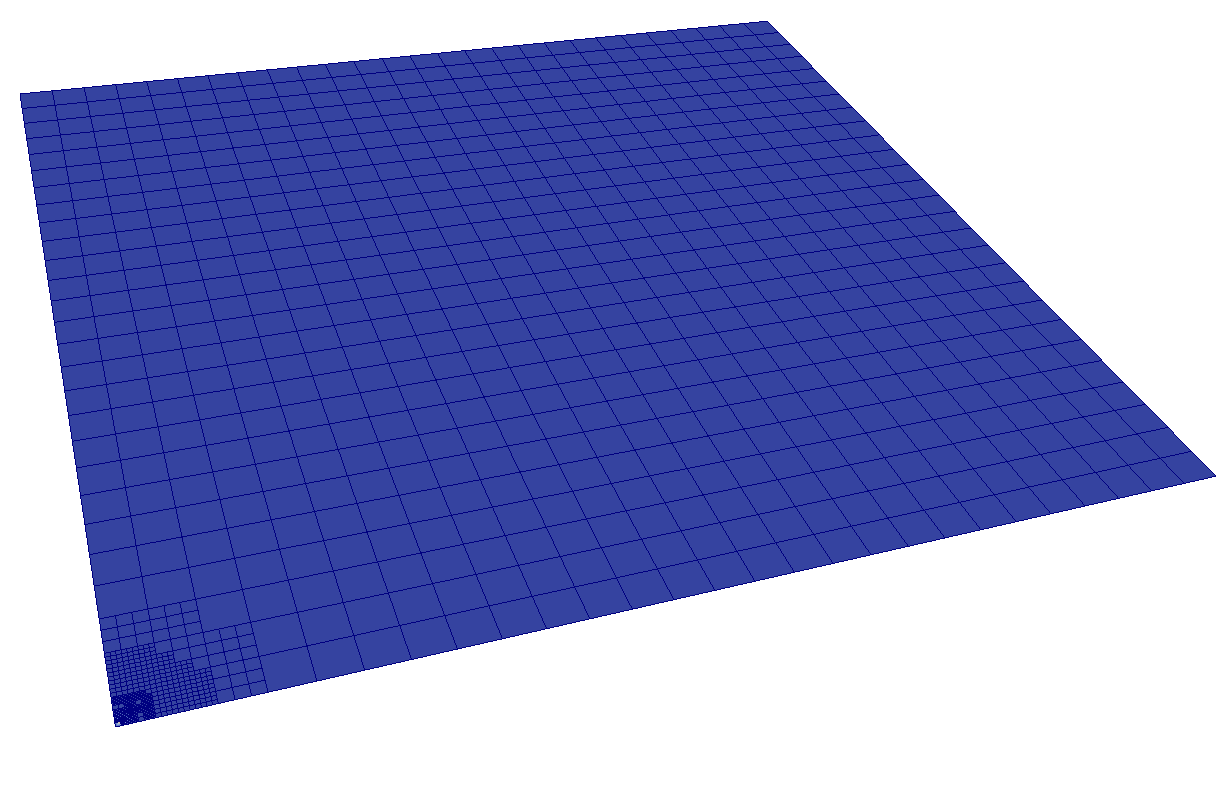}
    \hspace{0.1cm}
    \includegraphics[width=0.4\textwidth]{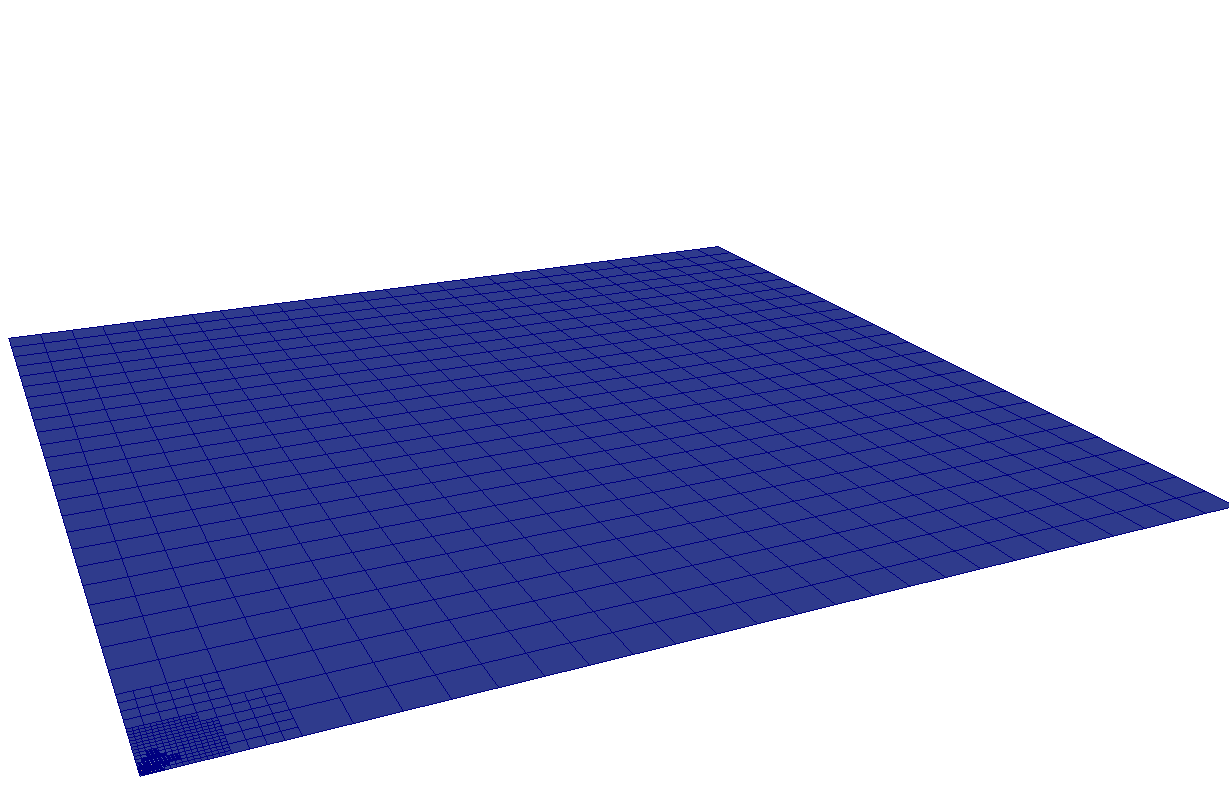}
  \end{center}
  \caption{
    Left: Dynamically adaptive grid with $h_{max} = 1/3$ and $h_{min}=0.001$ for
    a pure Jacobi solver at the moment of convergence without any complex rotation.
    Right: Dynamically adaptive grid of BPX for the same setup with a complex
    rotation of $\theta = 45^\circ $.
  }
  \label{figure:08b:mesh}
\end{figure}

\subsection{Hardware efficiency}

We finally study the algorithm's hardware characteristic and
compare the memory throughput to the Stream benchmark
\cite{McCalpin:95:Stream} ran on a single core of the Sandy Bridge.
An excellent cache usage mirrors results from 
\cite{Weinzierl:2009:Diss,Weinzierl:11:Peano}.
It results from the combination
of strict element-wise data access, stack-based data management and depth-first
spacetree traversal along a space-filling curve (Table
\ref{table:results:efficiency:sb-hw-counter}).
Element-wise formulation and depth-first fit, i.e.~localised data access even
for the grid transfer operations, are characteristics of the present multigrid
algorithms.
L1 and L3 cache measurements yield analogous results.
Basically all required data are always found in the L1 and L2 cache.
The combination of low cache misses with the data usage policy, i.e.~one
traversal per solver cycle and one data read per unknown, highlights that the
present approach is memory modest.
This statement holds independent of $p$.
Our approach is not memory-bound though the grid changes almost each iteration
and the code is a multiscale algorithm.
It however neither exploits the available memory bandwidth which is around
$8.35 \cdot 10^3$ MB/s for the Stream Triad \cite{McCalpin:95:Stream}
benchmark ran on a single core with the same settings, nor does it exploit the
vector registers.
Its arithmetic intensity is very low.
For $p>3$, vectorisation even makes the runtime increase.
We observe that few floating point operations per second face more than
$10^{11}$ total instructions.
The recursive code suffers from significant 
integer arithmetics; from administrative overhead.

\begin{table}
  \begin{center}
    \tbl{Performance counters on the Sandy Bridge for a dynamically
      adaptive FMG-type solve with $h_{min}=0.001$ ($p=2$), $h_{min}=0.005$
      ($p=3$), $h_{min}=0.01$ ($p=4$). The left value is obtained without
      vectorisation, the right results from an executable translated with simd
      facilities.
      L2 misses are relative measures (rates)
      compared to total number of accesses.
      \label{table:results:efficiency:sb-hw-counter}
    } 
    {
    \footnotesize
    \begin{tabular}{c|ccc}
      \input{experiments/hardware-counters/sb/results.table}
    \end{tabular}
    }
  \end{center}
\end{table}

We reiterate that \eqref{eq::pHelmholtz} is to be
solved for up to $c$ channels simultaneously.
As such, it is natural to make the solver tackle $\hat c \leq c$
problems simultaneously on the same grid.
Each and every unknown associated to a
vertex then is a $\mathbb{C}^{\hat c}$ tuple.
Rather than relying on $c$ independent problem solves, we fuse $\hat c$ problems
into one setup solved on one grid.
We refer to such an approach as {\em multichannel} variant.
It is studied here at hands of five exemplary configurations (different mesh
sizes, initial values, solvers) per $p$ choice.
Results from the Xeon Phi and Sandy Bridge qualitatively resemble each
other though we have to take into account different constraints on the total
memory---for $p=4$, $h_{min}=0.05$ already does not fit into the memory anymore
for $\hat c>8$.

We observe that solving multiple channels on one grid decreases the cost per
unknown monotonously (Figure \ref{figure:08c:multichannel}) up to $\hat c=8$:
the more channels fused into one grid the better the available memory bandwidth
usage.
Again, maintenance overhead is amortised.
This holds despite the fact that the refinement criterion for the fusion of
$\hat c$ channels into one grid has to be pessimistic.
If one channel requires
refinement, all $\hat c$ channels are mapped onto a finer grid. 
As our dynamically adaptive grid starts from few vertices and then refines,
it successively amortises the overhead among the vertices for $p=2$.
The cost per vertex decreases.
For $p=3$ this effect is negligible.
However, both $p$-choices exhibit cost peaks throughout the adaptive
refinement due to additional initialisation effort.
These relative cost are the smaller the more channels are fused.

\begin{figure}
  \begin{center} 
  \includegraphics[width=0.49\textwidth]{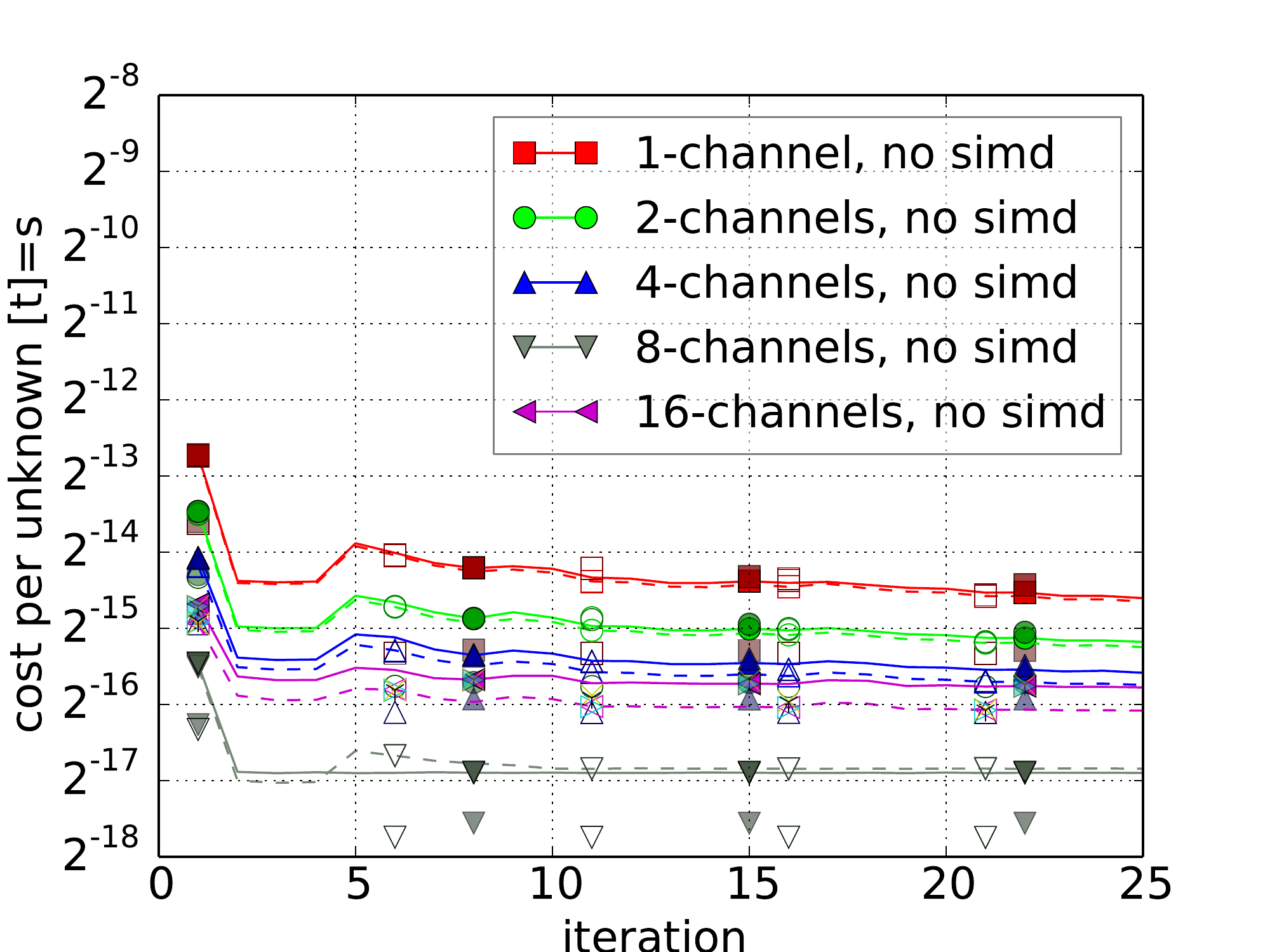}
  \includegraphics[width=0.49\textwidth]{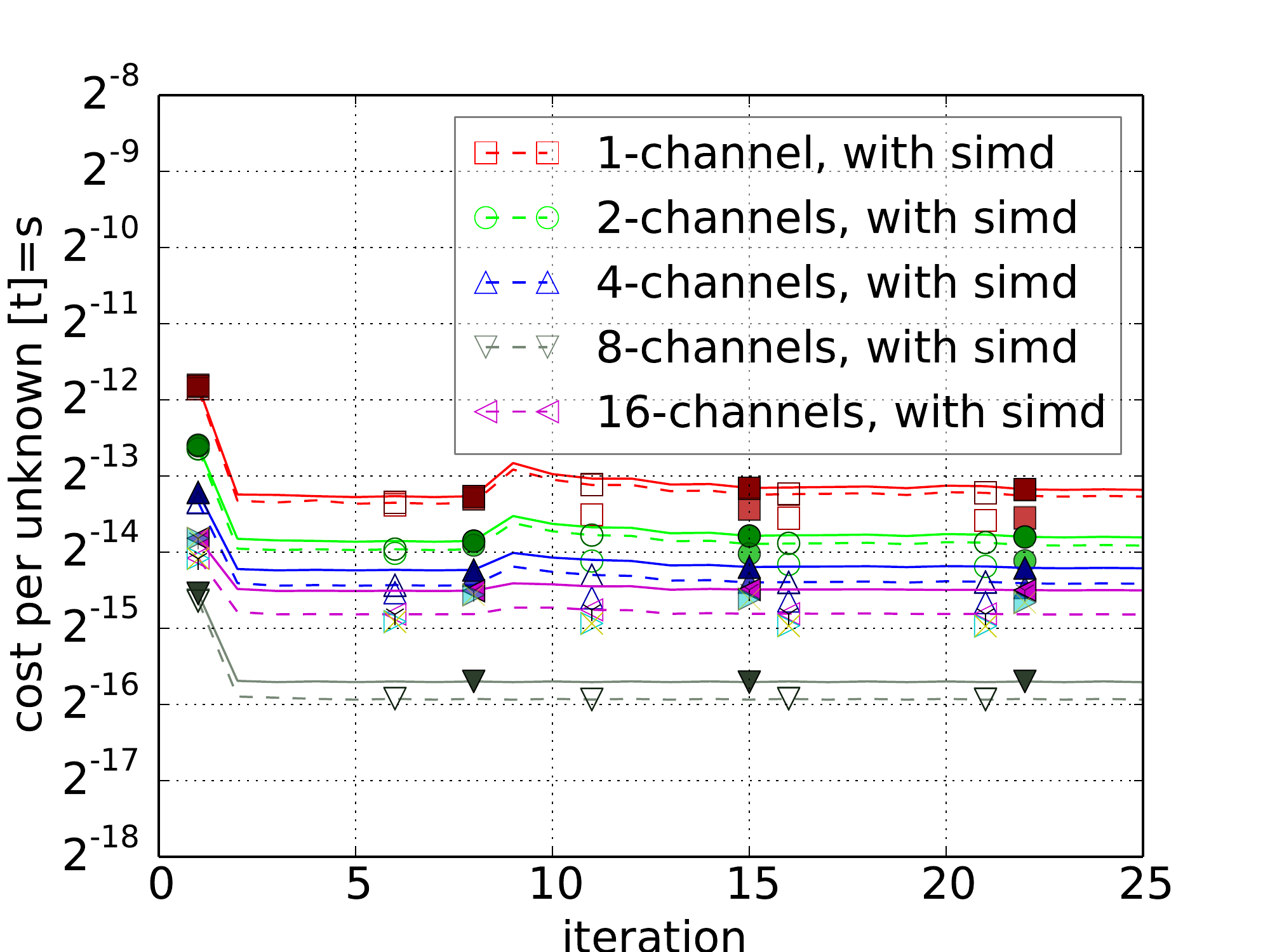}
  \end{center}
  \vspace{-0.2cm}
  \caption{
    Runtime per unknown per multigrid sweep for $p=2$ (left) and $p=3$ (right)
    on the Xeon Phi
    with several multichannel configurations and setups. Worst-case measurements
    with solid/dotted lines.
    \label{figure:08c:multichannel}
  }
\end{figure}

With bigger $\hat c$, the impact of vectorisation increases for $p=3$ (as well
as for $p=4$ which is not shown here) while vectorisation is
negligible or even counter-productive for $p=2$.
Even in the best case, it is still far below the theoretical upper bound. 
This is due to the fact that the multichannel approach does not
change the arithmetic intensity of the compute kernels. 
All efficiency gains stem from an improved vectorisation as the kernels run
through multiple channels per spacetree cell in a stream fashion.

We therefore propose to replace the Jacobi-like splitting of 
\eqref{eq::coupledHelmholtz} by a block-structured decomposition $A\equiv \hat
A + (A-\hat A)$ with
\begin{equation*}
\hat A = 
 \begin{pmatrix}
  H_{11} & A_{12} & A_{13} & \ldots & A_{1\hat c}  \\
  A_{21} & H_{22} & A_{23} & \ldots & A_{2\hat c}  \\
  A_{31} & A_{32} & H_{33} & \ldots & A_{3\hat c}  \\
  \vdots & \vdots & \vdots & \ddots & \vdots       \\  
  A_{\hat c1} & A_{\hat c 2} & \ldots & \ldots & H_{\hat c\hat c}  \\
  & & & & & H_{(\hat c+1)(\hat c+1)} & \ldots & A_{(\hat c+1)(2\hat c)} \\
  & & & & & \vdots & \ddots & \vdots \\
  & & & & & A_{(2 \hat c)(\hat c+1)} & \ldots & H_{(2 \hat c)(2\hat c)} \\
  & & & & & & & & \ddots  
 \end{pmatrix}.
\end{equation*}

\noindent
Though this is a stronger preconditioner and a harder system to solve, it
retains the memory access requirements of the multichannel
variant.
However, such a {\em coupled multichannel} approach with its denser per-grid
entity operators allows us to exploit vector facilities more efficiently since the arithmetic intensity increases.

\begin{table}
  \begin{center}
    \tbl{
      Sandy Bridge performance for different coupled multichannel
      variants. One characteristic adaptive setting per choice of $p$. 
      Vectorisation through Intel pragmas.
      Runtime is not normalised with problems solved simultaneously.
      \label{table:results:efficiency:sb-hw-counter-multichannel}
    } 
    {
    \footnotesize
    \begin{tabular}{c|ccccc}
      $p=2 $ & $\hat c=1$ & $\hat c=2$ & $\hat c=4$ & $\hat c=8$ & $\hat c=16$
      \\
      \hline
Runtime                 & 1.13e+02 & 1.67e+02 & 2.59e+02 & 5.20e+02 & 4.15e+02
\\
MFlops/s                & 8.89e+02 & 1.91e+03 & 3.84e+03 & 6.85e+03 & 1.07e+04
\\
L2 misses               & 2.77e-03 & 3.34e-03 & 4.33e-03 & 5.76e-03 & 7.19e-03
\\
Used bandwidth (MB/s)   & 2.52e+02 & 3.15e+02 & 3.63e+02 & 3.51e+02 & 2.93e+02
\\
      \hline
      $p=3$ \\
      \hline
Runtime                 & 2.11e+02 & 3.75e+00 & 4.56e+02 & 1.02e+03 & 2.78e+03
\\
MFlops/s                & 1.40e+03 & 2.43e+03 & 6.02e+03 & 1.03e+04 & 1.44e+04
\\
L2 misses               & 3.00e-03 & 3.46e-03 & 3.57e-03 & 4.42e-03 & 2.55e-02
\\
Used bandwidth (MB/s)   & 1.37e+02 & 2.45e+02 & 1.78e+02 & 1.70e+02  & 1.32e+02
\\
      \hline
      $p=4$ \\
      \hline
Runtime                 & 2.31e+03 & 3.39e+03 & 6.01e+03 & 3.07e+03 & 6.96e+03
\\
MFlops/s                & 1.98e+03 & 3.87e+03 & 7.05e+03 & 1.08e+04 & 1.55e+04
\\
L2 misses               & 8.29e-04 & 1.09e-03 & 3.63e-03 & 2.05e-02 & 3.01e-02
\\
Used bandwidth (MB/s)   & 7.14e+01 & 7.78e+01 & 7.90e+01 & 1.71e+02 & 1.48e+02
\\
    \end{tabular}
    }
  \end{center}
\end{table}

%
%
An illustration of hardware counter measurements and timings validates and
details these statements (Table
\ref{table:results:efficiency:sb-hw-counter-multichannel}).
For small $1 \leq \hat c \leq 4$, the total runtime increases with an increase
of $\hat c$, but the growth is sublinear.
It is cheaper to compute multiple fused problems than to deploy them to grids of
their own.
For $4 \leq \hat c \leq 16$, the behaviour is non-uniform: While the runtime for
some setups grows linear w.r.t.~problems solved, it sometimes even drops. 
We deduce that this is the sweet region of an optimal choice of $\hat c$. 
The channel fusion really pays off if we fix $\hat c=16$ ($p=2$) or $\hat
c=8$ ($p\in\{3,4\}$).
For $\hat c\geq 16$, the runtime then grows linearly in the number of problems
solved, i.e.~the fusion does not pay off anymore. 
Furthermore, these large equation systems quickly exceed the main memory available on the Xeon Phi.
Besides this constraint, the Phi's figures exhibit similar behaviour. 
For all setups, the bandwidth requirements do not increase significantly, and
the cache misses remain negligible.
While we cannot explain the drops in the execution times, we defer
from the multichannel experiments that the runtime anomalies have to result from
an advantageous usage of the floating point facilities---without a coupling, $\hat c=8$ is the sweet
spot for all choices of $p$.

Besides the aforementioned amortisation of administrative overhead, the coupling
with its increased matrices allows us to exhibit the vector facilities. 
We vectorise and reduce the relative idle times of the vector units
compared to the total runtime.
Both effects in combination improve the MFlop rate up to a factor of
almost ten.
This is around 20\% of the theoretical peak though the measurements comprise all
grid setup and management overhead.
The depth-first traversal in the spacetree with its strictly local
element operators allows us to achieve this without increased pressure on the
memory subsystem.
Due to the matrix-free rediscretisation approach, we can assume that the whole
blocks of $\hat A$ reside within the cache.
Due to the single touch policy per unknown and the localised traversal, we can
assume that each unknown is loaded into the cache exactly once.
This insight mirrors previous reports on this algorithmic paradigm
\cite{Mehl:06:MG,Weinzierl:2009:Diss,Weinzierl:11:Peano}.
We validate this at hands of the cache miss rate explaining the low bandwidth
requirements.

Though the multichannel approach reduces the total
concurrency, the channel block solves remain perfectly parallel.
Due to the memory characteristics, we can expect that these block solves can run
independently on the cores of our Sandy Bridge chip as one core does not
consume more than $\frac{1}{16}$ of the available bandwidth.
Due to the smaller main memory and the bigger core count, these multicore
predictions do not apply unaltered to the Xeon Phi.

  \section{Conclusion and outlook}
\label{section:conclusion}

The present paper introduces software and algorithms to realise a
family of dynamically adaptive additive multigrid solvers for Poisson and Helmholtz problems
working on complex-valued solutions.
The latter provides a straightforward way to realise complex grid rotation. 
Supporting the approach's elegance and clarity, we provide correctness
proofs, we demonstrate its robustness, and we discuss its runtime behaviour with
respect to convergence speed and single-core utilisation. 
As the underlying algorithmic framework makes the solve of multiple Helmholtz
problems or sets of problems perfectly parallel
\cite{Foster:95:ParallelPrograms}, we focus on an algorithm that realises a
single-touch policy---each piece of data is used only once or twice per
traversal and the time in-between two usages is small---and strict localised
operations. It does not stress the memory subsystem.
The experimental data reveals that this objective is met, while the solvers are
reasonably robust and have a small memory footprint.

Besides an application to the underlying problems from physics and chemistry, we
notably identify three methodological extensions of the present work.
The first extension area tackles two obvious shortcomings of the present
code base: 
smoothers and algebraic multigrid operators
\cite{Chen:12:DispersionMinimizing,Stolk:15:DispersionMinimizing,Tsuji:15:AMGPrecon}.
While work in this area is mandatory to facilitate the application of the
algorithm and software idioms to more challenging setups, no fundamental risk,
to the best of our knowledge, exists that these objectives can not be met.
Several fitting building blocks are in place and have to be integrated properly.
Our second extension area sketches open questions with respect to
high-performance computing architectures.
Finally, we pick up the problem of large $p$ again.

Compared to previous work \cite{Cools:14:LevelDependent}, we have used a uniform
complex grid rotation among all grid levels---well-aware that a level-dependent complex rotation might pay
off. The same can be explored for the complex shifted $\phi$ approach from \cite{Erlangga:04:CSL}.
We reiterate that, for refined vertices, $\phi$ might hold both the sampled
value plus an additional shift term and that, hence, no modification of any
computation is required once $\phi $ is set---even for level-dependent
complex shifts---since we determine $diag(H_\ell )^{-1}$ on-the-fly and do not rely on fixed diagonal values.
Level-dependent shifts or grid rotations can either be determined a priori or whenever a simple
operator analysis yields the insight that the operator enters a problematic
regime (diagonal element underruns certain threshold or switches sign, e.g.).
The latter property is particularly interesting for non-uniform $\phi $
distributions.
Further next steps in the multigrid context comprise the realisation of
operator-dependent grid transfer operators
\cite{Weinzierl:13:Hybrid,Yavneh:12:Nonsymmetric} and more suitable smoothers. 
Problem-dependent operators also might pay off along the boundary layers that
are poorly handled by the current geometric operators.
In this context, we emphasise that our algorithm relies on the Galerkin multigrid property in
\eqref{equation:multigrid:HTMG} while it so far realises rediscretisation.
This introduces an error on the coarse grid.
To the best of our knowledge, no analysis of such an error exists.
Though one has to assume that it is bounded and small, explicit operator
evaluation overcomes this problem completely.
For future work tackling the smoother challenge, we refer in particular to
patch-based approaches \cite{Ghysels:13:HideLatency,Ghysels:14:HideLatency}.
Proof-of-concept studies from other application areas exist that use the
same software infrastructure \cite{Weinzierl:14:BlockFusion} to embed
small regular Cartesian grids into each spacetree cell.
These small grids, patches, allow for improved robustness due to stronger
smoothers resulting from Chebyshev iterations, higher-order smoothing schemes on embedded regular grids or a
multilevel Krylov solver based on recursive coarse grid deflation
\cite{Sheikh:13:deflation,Erlangga:08:deflation}.
Deflation is a particular interesting feature for highly
heterogeneous Helmholtz problems where bound states emerge as isolated eigenvalues near the origin. 
These states are of special interest as they correspond to resonances in the
system
\cite{Aguilar:71:boundstates1,Balslev:71:boundstates2,Moiseyev:98:boundstates4,Simon:79:boundstates3}.
Their proper treatment seems to be implementationally straightforward within the
present code idioms.

Patch-based approaches allow for efficient smoothers, but they are also promising
with respect to multi- and manycores as well as MPI parallelisation.
These two levels of parallelisation are a second track to follow.
While our channels exhibit some perfect concurrency, the application of a
proper domain decomposition on the long term will become mandatory due 
to the explosion of cores or high memory requirements.
Several papers state that additive multigrid algorithms---though inferior to
multiplicative variants in terms of convergence---are well-suited to parallel
architectures due to their higher level of concurrency in-between the levels
(notably \cite{Chow:04:Survey} and references therein or
\cite{Vassilevski:14:CommunicationReductionInMG}).
For the present codes, these statements are problematic.
Our additive solvers exhibit a tight inter-grid data exchange and benefit from
vertical integration.
It is doubtful whether it is advantageous to deploy different grid resolution
solves to different cores.
Yet, the topic deserves further studies, notably if multiple sweeps are
used to solve the individual levels' problems.
Furthermore, the exact interplay of concurrent channel solves, time-in-between inter-grid
data transfer and shared and distributed memory
parallelisation remains to be investigated.
Segmental refinement \cite{Adams:15:Segmental} in contrast seems to
be an obviously promising technique.
Starting from a reasonably fine grid, the underlying tree here is split up into
independent subtrees deployed to different cores. 
This approach should integrate with the present ideas where we ensure high core
efficiency through our vertical integration of various levels, while
segmentation yields parallelism through horizontal decoupling of the grid
levels.

To the best of our knowledge, $p \in \{3,4\}$ in our application context already
is a step forward compared to many state-of-the-art simulation runs.
On the long term, however, bigger $p$ have to be mastered.
While stronger smoothers and better grid transfer operators might
be able to deliver codes that do not suffer from the reduced coarse correction
impact, the explosion of unknowns for $p\geq 5$ remains a challenge.
Alternative techniques such as sparse grids \cite{Bungartz:04:SparseGrids}
or sampling-based methods then
might become a method of choice.

  \begin{acks}
Bram Reps is supported by the FP7/2007-2013 programme under grant agreement No 610741 (EXA2CT).
Tobias Weinzierl appreciates the support from Intel through Durham's Intel
Parallel Computing Centre (IPCC) which made it possible to access the latest
Intel software.
The AMR software base \cite{Software:Peano} is supported and benefits 
from funding received from the European Union’s Horizon 2020
research and innovation programme under grant agreement No 671698 (ExaHyPE).
This work is dedicated to Christoph Zenger.
He has first introduced the additive scheme in the context of Peano spacetree
codes and he has acted as \linebreak
(co-)advisor to 
\cite{Griebel:90:HTMM,Griebel:94:Multilevel,Weinzierl:2009:Diss} which 
are starting points of the present work.
All software is freely available from \cite{Software:Peano}.
\end{acks}

  \appendix
  \section*{APPENDIX}
  \setcounter{section}{0}
  
  \section{Remarks on traditional additive multigrid}

The top-down Algorithm~\ref{algo::tdadd} can be derived from the classic 
additive variant (Algorithm~\ref{algo::add}) in small steps. 
For this, we retain the bottom-up formulation (Algorithm~\ref{algo::buadd})
but, different to the plain correction scheme from Algorithm~\ref{algo::add}, 
already realise FAS.

\begin{algorithm}[htb]
 \SetAlgoNoLine
    \begin{algorithmic}[1]
       \Function{buFAS}{$\ell $}
            \State $du_{\ell} \leftarrow \omega _{\ell} S(u_\ell,b_\ell)$
           \Comment{Bookmark update due to a Jacobi step.}
         \State
         \eIf{$\ell = \ell _{min}$}{
            \State \phantom{xx} $u_{\ell} \gets u_\ell + du_\ell$
         }
         {
           \State \phantom{xx} $u_{\ell -1} \leftarrow I u_\ell$
          \Comment Inject into next coarser level
        \State \phantom{xx} \Comment to be able to compute the
          $\hat u_{\ell}$.
        \State \phantom{xx} $\hat u \leftarrow u_\ell - Pu_{\ell-1}$
          \Comment Determine hierarchical surplus.
        \State \phantom{xx} $\hat r_\ell \leftarrow b_\ell - H_\ell \hat u$
        \State \phantom{xx} $b_{\ell -1} \leftarrow R \hat r_\ell$
           \Comment{Determine coarse grid rhs.}
        \State \phantom{xx} \Call{buFAS}{$\ell -1$} \label{algo::buadd::recur}
           \Comment{Go to coarser level.}
        \State \phantom{xx} $u_\ell \leftarrow u_\ell + du_{\ell} +
        P\left(u_{\ell-1} - Iu_\ell \right)$
        \label{algo::buadd:prolongation}
        }
      \EndFunction
  \end{algorithmic}
  \caption{
    Additive FAS scheme running through the grid bottom-up, i.e.~from fine
    grid to coarse level.
    Invoked by \textsc{buFAS}($\ell _{max}$).
    \label{algo::buadd}
  }
\end{algorithm}


\noindent
This variant introduces a helper variable per level $du_\ell$ holding the impact
of the smoother $S$, i.e.~its correction of the solution on level $\ell$.
This helper variable translates into $sc$ and $sf$ for the final top-down
algorithm.
In the final line \ref{algo::buadd:prolongation}, we update the
unknown on each level with a correction due to the smoother held in $du_\ell$, and we also add
the coarse grid contribution.
To obtain a BPX-type solver, we have to rewrite this line into 
\[
  u_\ell
  \leftarrow u_\ell + du_{\ell} + P\left(u_{\ell-1} - I(u_\ell +
du_{\ell})\right).
\]

\noindent
Both the coarse grid update and the smoother update are independent of each other and both rely
on the injected solution from the previous traversal. 
Due to the former property, an additive scheme is realised.
Due to the latter property, we may not only project some coarse data
$u_{\ell-1}$, but we have to reduce this coarse contribution by the value injected from the current
level $Iu_\ell$.
For the pure additive multigrid, this is one major difference to the similar scheme proposed in
\cite{Mehl:06:MG} and references therein.


\begin{lemma}
  The prolongation of $u_{\ell-1} - I(u_\ell + du_{\ell})$ ensures that only a
  coarse correction is prolongated though the coarse grid holds a injected fine
  grid representation (FAS).
\end{lemma}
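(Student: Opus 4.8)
The plan is to follow a single unknown, the coarse value $u_{\ell-1}$, through the body of \textsc{buFAS}($\ell$), using two structural facts about the routine. First, a call \textsc{buFAS}($m$) writes only to the unknowns on levels $m,m-1,\dots,\ell_{min}$ and never to a finer level; in particular the recursive descent \textsc{buFAS}($\ell-1$) leaves $u_\ell$ unchanged, so the value $u_\ell$ appearing in the final line is still the value present on entry, which is exactly the value that was injected, $u_{\ell-1}\leftarrow Iu_\ell$. Second, one establishes by induction on $m-\ell_{min}$ the invariant that \textsc{buFAS}($m$), entered with $u_m = Iw$ for a function $w$ living on the next finer level, returns with
\begin{equation*}
  u_m \;=\; Iw + c_m ,
\end{equation*}
where $c_m$ carries no additive copy of $Iw$ — it is a genuine level-$m$ correction. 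The base case $m=\ell_{min}$ is immediate because the only assignment there is $u_m \leftarrow u_m + du_m$, so $c_{\ell_{min}} = du_{\ell_{min}}$.

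For the induction step I would plug the hypothesis, applied to the inner call \textsc{buFAS}($\ell-2$), into the prolongation line of \textsc{buFAS}($\ell-1$). At that point $u_{\ell-1}$ still equals its injected value $Iu_\ell$, and $u_{\ell-2}$ has returned as $I(Iu_\ell)+c_{\ell-2}$; substituting,
\begin{equation*}
  u_{\ell-1} \;\leftarrow\; u_{\ell-1} + du_{\ell-1} + P\bigl(u_{\ell-2} - I(u_{\ell-1}+du_{\ell-1})\bigr)
  \;=\; Iu_\ell + (id-PI)\,du_{\ell-1} + P\,c_{\ell-2},
\end{equation*}
because the term $I(Iu_\ell)$ inside $u_{\ell-2}$ is cancelled exactly by $I(u_{\ell-1}+du_{\ell-1}) = I(Iu_\ell)+Idu_{\ell-1}$. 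Hence $c_{\ell-1} := (id-PI)\,du_{\ell-1} + P\,c_{\ell-2}$ contains no copy of $Iu_\ell$, which closes the induction. Feeding this into the prolongation line of the original call gives
\begin{equation*}
  u_\ell \;\leftarrow\; u_\ell + du_\ell + P\bigl(u_{\ell-1} - I(u_\ell+du_\ell)\bigr)
  \;=\; u_\ell + du_\ell + P\bigl(c_{\ell-1} - I\,du_\ell\bigr)
  \;=\; u_\ell + (id-PI)\,du_\ell + P\,c_{\ell-1},
\end{equation*}
so the injected fine-grid representation $Iu_\ell$ sitting inside $u_{\ell-1}$ never reaches the fine grid: $P$ transports only the coarse correction $c_{\ell-1}$, shifted by $-Idu_\ell$, and that shift is precisely the hierarchical surplus / BPX modification $(id-PI)\,du_\ell$ of the local smoother update.

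The calculations are just the substitutions together with the cancellation $I(Iu_\ell)-I(Iu_\ell)=0$; the step needing care — and the \emph{main obstacle} — is the bookkeeping in the induction, namely verifying that the quantity $u_{\ell-2}$ fed to the hypothesis is in the state ``injected value plus correction'' and that the parent value $u_{\ell-1}$ is untouched during the descent. Both reduce to the observation that \textsc{buFAS} injects before recursing and never modifies a finer level. I would also note the minor subtlety that $c_m$ still depends on $Iw$ through the smoother residual $S(u_m,b_m)$; this is harmless, since the claim concerns the additive bookkeeping — that the large, solution-sized summand $Iw$ is removed before prolongation — not an independence of $c_m$ from the injected data. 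The same reasoning with $Iu_\ell$ in place of $I(u_\ell+du_\ell)$ handles the plain additive line of Algorithm~\ref{algo::buadd}: there one simply drops the $-Idu_\ell$ shift and prolongs $c_{\ell-1}$ alone.
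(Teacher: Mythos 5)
Your proof is correct and takes essentially the same algebraic route as the paper's: your recursively built $c_\ell = (id-PI)\,du_\ell + Pc_{\ell-1}$ is exactly the paper's $corr_\ell$, and the cancellation of the injected value inside the prolongation argument is the same telescoping step. The only difference is cosmetic --- you phrase the inherited injection $u_{\ell-1}^{prev} = Iu_\ell^{prev}$ as an explicit induction hypothesis (``entered with $u_m = Iw$''), while the paper invokes it through the $u^{prev}$ superscripts and then verifies $Iu_{\ell+1}=u_\ell$ afterwards as a separate check.
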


\begin{proof}
In what follows we will introduce the variable $corr_\ell = du_{\ell} - PIdu_{\ell} + Pcorr_{\ell-1}$ to label the update to the previous solution after one cycle. It consists of the net contribution $du_{\ell}- PIdu_{\ell}$ from the last smoothing step on the current level, plus the total prolongated correction from the coarse grid $Pcorr_{\ell-1}$. By definition, the new solution on level $\ell$ equals,
\begin{eqnarray*}
u_\ell  &=& u^{prev}_\ell + du_{\ell} + P\left(u_{\ell-1}   - I(u^{prev}_\ell   + du_{\ell})\right)\\
        &=& u^{prev}_\ell + du_{\ell} + Pu_{\ell-1}         - PIu^{prev}_\ell   - PIdu_{\ell} \\
    &=& u^{prev}_\ell + du_{\ell} + Pu_{\ell-1} - Pu^{prev}_{\ell-1}    - PIdu_{\ell} \\
    &=& u^{prev}_\ell + du_{\ell} - PIdu_{\ell} + P(u_{\ell-1} - u^{prev}_{\ell-1}) \\
    &=& u^{prev}_\ell + du_{\ell} - PIdu_{\ell} + Pcorr_{\ell-1} \\
    &=& u^{prev}_\ell + corr_\ell.
\end{eqnarray*}
We now show that the injection property holds, indeed for the finer level we have,
\begin{eqnarray*}
Iu_{\ell+1}     &=& I(u^{prev}_{\ell+1} + corr_{\ell+1}) \\
        &=& Iu^{prev}_{\ell+1} + I du_{\ell+1} - IPIdu_{\ell+1} + IPcorr_{\ell} \\
        &=& u^{prev}_{\ell} + I du_{\ell+1} - Idu_{\ell+1} + IPcorr_{\ell} \\
        &=& u^{prev}_{\ell} + corr_{\ell},
\end{eqnarray*}
and so we can update the solution on level $\ell$ before the update on the finer level $\ell+1$, and still conserve the injection property that is needed for the FAS scheme. 
\end{proof}

\noindent
The lemma's prolongation can be postponed to the subsequent tree traversal.
It then acts as prelude there.
This allows for the transform to a top-down algorithm and
Algorithm~\ref{algo::buadd} the logical starting point for the formulation of a
top-down algorithm.
We start the cycle with the update of the solution and
the lines preceding the recursive call on line~\ref{algo::buadd::recur} are computed in a fine to coarse order. 
To facilitate this `permutation', we introduced an additional variable
$t_{\ell-1}=Idu_{\ell}$ that carries over the injected smoothing update from the
previous cycle.
Such a technique is a common pattern in pipelining.

  \section{Remarks on element-wise matrix-free mat-vecs}
\label{section:appendix:element-wise}

\begin{figure}[h]
  \begin{center} 
  \includegraphics[width=0.8\textwidth]{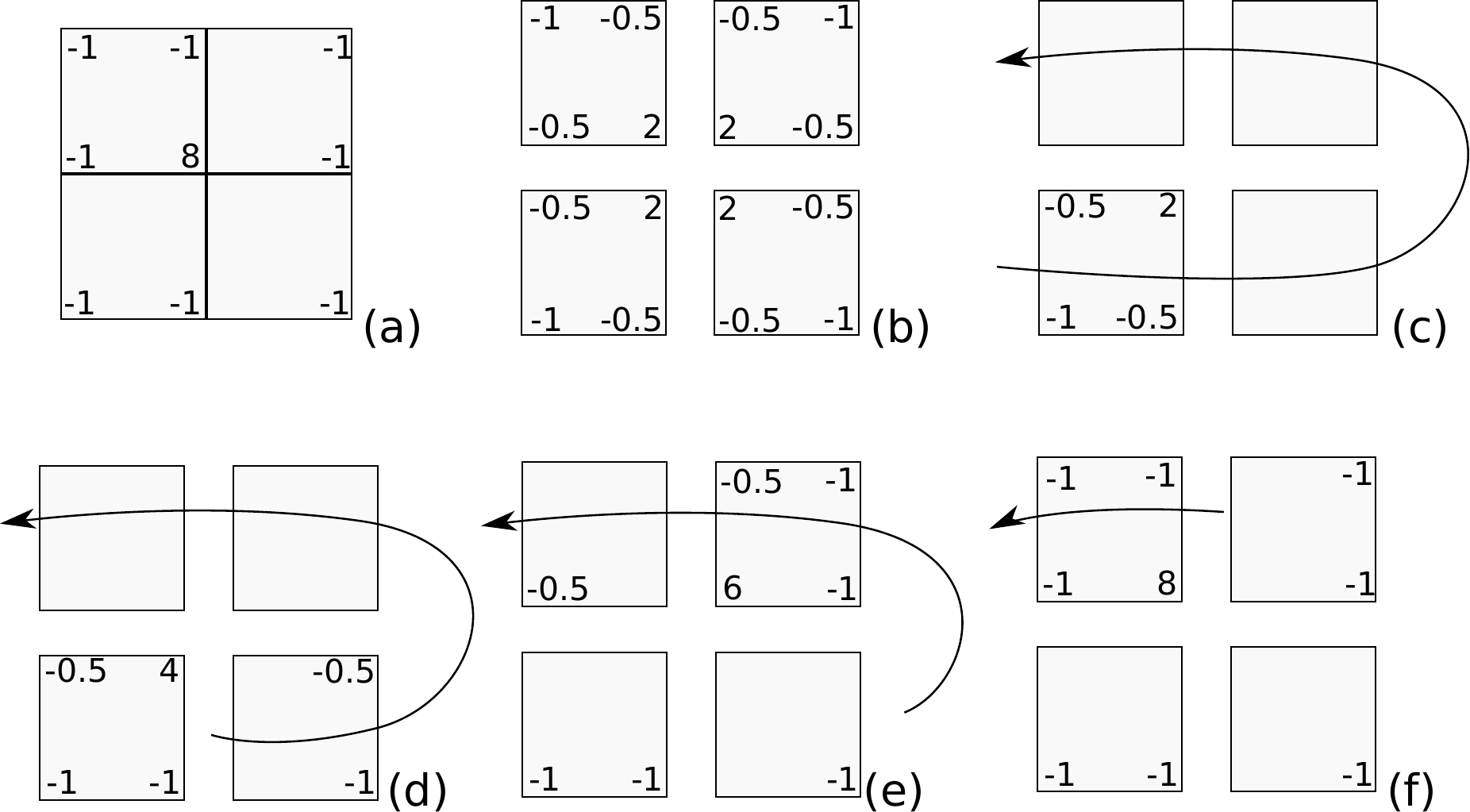}
  \end{center}
  \vspace{-0.2cm}
  \caption{
    Cartoon of element-wise matrix-free matrix-vector evaluations.
    \texttt{(a)} shows an exemplary stencil, i.e.~one line of the matrix-vector
    product.
    \texttt{(b)} decomposes the stencil among the elements.
    These components are used within the cells, \texttt{(c)} through \texttt{(f)}, that illustrate how the original stencil thus is successively reassembled.
    However, all data accesses are strictly element local.
    \label{figure:element-wise-matrix-free}
  }
\end{figure}

The element-wise matrix-free evaluation of matrix-vector products (matvec) is a
state-of-the-art technique in scientific computing.
To simplify reproducibility and for matter of completeness, it is reiterated here.

We split up the stencils among the affected elements, i.e.~rewrite them into an
element-wise representation (Figure \ref{figure:element-wise-matrix-free},
\texttt{(a)} into \texttt{(b)}).
For each vertex, we store a tuple with the current solution $u$ as well as two
helper variables $r$ and $diag$.

The latter are set to zero before the traversal runs through any adjacent cell
of the respective vertex, i.e.~prior to Figure \ref{figure:element-wise-matrix-free},
\texttt{(c)}.
When we enter a cell, we read the $p^2$ adjacent $u$ values and apply the local
assembly matrix to these values. 
The result is added to the temporary variable $r$.
Analogously, we accumulate the diagonal value $diag$.
As we run through the grid, the whole stencil, i.e.~assembly matrix line, is
successively accumulated within $r$ and $diag$ (Figure \ref{figure:element-wise-matrix-free},
\texttt{(c)} through \texttt{(f)}).
Once all adjacent elements of a particular vertex have been run through, Figure
\ref{figure:element-wise-matrix-free} \texttt{(f)}, $r$ and $diag$ hold the
matvec evaluation result associated to the vertex in $r$ and the diagonal
element value within $diag$.
We add the right-hand side to $r$ and update the $u$ value using both $r$ and
$diag$.

 \bibliographystyle{ACM-Reference-Format-Journals}
 \bibliography{paper}
 
 \received{t.b.d.}{t.b.d.}{t.b.d.}

\end{document}